\documentclass[journal,draftcls,onecolumn,twoside]{IEEEtran}\IEEEoverridecommandlockouts
\usepackage{graphicx}
\usepackage{amsmath, amsfonts, amssymb, amsthm}
\usepackage{enumerate}
\usepackage{multirow}
\usepackage{multicol}
\usepackage{color}
\usepackage{makecell}
\usepackage{arydshln}
\usepackage{cases}
\usepackage{bm}
\usepackage{rotating}
\newcommand{\RNum}[1]{\uppercase\expandafter{\romannumeral #1\relax}}
\theoremstyle{definition} \newtheorem{theorem}{Theorem}
\theoremstyle{definition} 
\theoremstyle{definition} \newtheorem{proposition}[theorem]{Proposition}
\theoremstyle{definition} \newtheorem{definition}[theorem]{Definition}
\theoremstyle{definition} \newtheorem{lemma}[theorem]{Lemma}
\theoremstyle{definition} 
\theoremstyle{definition} \newtheorem{example}{Example}
\theoremstyle{definition} 

\theoremstyle{definition} 
\theoremstyle{definition} \newtheorem*{remark}{Remark}
\theoremstyle{definition} \newtheorem*{notation}{Notation}
{\end{list}}

\usepackage[hidelinks]{hyperref}
\begin{document}
\title{Construction of MRD Codes Based on Circular-Shift Operations}
\author{\IEEEauthorblockN{Zhe~Zhai{$^1$},~Sheng~Jin{$^{1}$},~Qifu~Tyler~Sun{$^{1*}$},~Zongpeng~Li{$^{2}$}} \\
\IEEEauthorblockA{{$^1$}
Department of Communication Engineering, University of Science and Technology Beijing, P. R. China \\
{$^2$}  Institute for Network Sciences and Cyberspace, Tsinghua University, P. R. China\\
}

\thanks{$^*~$ Q. T. Sun (Email: qfsun@ustb.edu.cn) is the corresponding author.}
}

\date{}
\maketitle
\sloppy

\begin{abstract}
Most well-known constructions of $(N \times n, q^{Nk}, d)$ maximum rank distance (MRD) codes rely on the arithmetic of $\mathbb{F}_{q^N}$, whose increasing complexity with larger $N$ hinders parameter selection and practical implementation. %
In this work, based on circular-shift operations, we present a construction of $(J \times n, q^{Jk}, d)$ MRD codes with efficient encoding, where $J$ equals to the Euler's totient function of a defined $L$ subject to $\gcd(q, L) = 1$. %
The proposed construction is performed entirely over $\mathbb{F}_q$ and avoids the arithmetic of $\mathbb{F}_{q^J}$. %
Furthermore, we characterize the constructed MRD codes, Gabidulin codes and twisted Gabidulin codes using a set of $q$-linearized polynomials over the row vector space $\mathbb{F}_{q}^N$, and clarify the inherent difference and connection among these code families. %
Specifically, for the case $J \neq m_L$, where $m_L$ denotes the multiplicative order of $q$ modulo $L$, we show that the proposed MRD codes, in a family of settings, are different from any Gabidulin code and any twisted Gabidulin code. %
For the case $J = m_L$, we prove that every constructed $(J \times n, q^{Jk}, d)$ MRD code coincides with a $(J \times n, q^{Jk}, d)$ Gabidulin code, so that an equivalent circular-shift-based construction of Gabidulin codes is obtained which operates directly over $\mathbb{F}_{q}$ and avoids the arithmetic of $\mathbb{F}_{q^J}$. %
In addition, we prove that under some parameter settings, the constructed MRD codes are equivalent to a generalization of Gabidulin codes obtained by summing and concatenating several $(m_L \times n, q^{m_Lk}, d)$ Gabidulin codes. %
When $q=2$, $L$ is prime and $n\leq m_L$, it is analyzed that generating a codeword of the proposed $((L-1) \times n, 2^{(L-1)k}, d)$ MRD codes requires $O(nkL)$ exclusive OR (XOR) operations, while generating a codeword of $((L-1) \times n, 2^{(L-1)k}, d)$ Gabidulin codes, based on customary construction, requires $O(nkL^2)$ XOR operations. %
\begin{IEEEkeywords}
Rank metric, maximum rank distance code, Gabidulin code, circular-shift, computational complexity.
\end{IEEEkeywords}
\end{abstract}

\section{Introduction}
Rank-metric codes are a class of error-correcting codes initially introduced by Delsarte in \cite{Delsarte1978}.  An $(N \times n, q^{Nk}, d)$ rank-metric code $\mathcal{M}$ is a subset of $N \times n$ matrices over $\mathbb{F}_{q}$ with minimum rank distance $d$, %
where the rank distance between two matrices of the same size is defined by the rank of their difference. %
An $(N \times n, q^{Nk}, d)$ rank-metric code $\mathcal{M}$ that meets the Singleton bound $|\mathcal{M}| \leq q^{N(n-d+1)}$ is called a \emph{maximum rank distance} (MRD) code, which has found wide applications in cryptography \cite{cryptosystems}, distributed storage \cite{Roth}, \cite{Silberstein} and subspace codes \cite{Koetter_RNC}-\cite{Subspace Codes_Chen_2}. %
It is important to note that an MRD code is also a maximum distance separable code but not vice versa. 

The first family of MRD codes was independently constructed by Delsarte \cite{Delsarte1978} and Gabidulin \cite{Gabidulin}, with Kshevetskiy and Gabidulin later generalizing this construction in \cite{GG}. %
This family of MRD codes is constructed by means of $q$-linearized polynomials and is commonly referred to as (generalized) \emph{Gabidulin} codes. %
In 2016, Sheekey \cite{TG} introduced a new class of MRD codes by evaluating linearized polynomials, known as \emph{twisted Gabidulin} codes. %
Unlike Gabidulin codes, the linearized polynomials in twisted Gabidulin codes include an extra nonzero monomial with a $q$-degree greater than $k-1$. %
To ensure that the constructed codes satisfy the MRD property, the coefficient of the monomial must be carefully selected according to a defined condition. %
However, this construction is not applicable for the case $q = 2$. %
Twisted Gabidulin codes were further generalized and studied in \cite{TG} and \cite{GTG}, where they are referred to as (generalized) twisted Gabidulin codes. %
As an additional generalization, \emph{additive generalized twisted Gabidulin} codes were introduced in \cite{Otal K 1}. %
A comprehensive review of existing constructions of MRD codes and their applications was provided by Sheekey in \cite{Sheekey_2019}. %
Yildiz and Hassibi \cite{Yildiz} established conditions for the existence of Gabidulin codes and general MRD codes from the perspective of support constrained generator matrices, and characterized the largest rank distance under support constraints, showing that it can be achieved by subcodes of Gabidulin codes. %
More recently, a new class of rank-metric codes, called $\bm{\lambda}$-\emph{twisted Gabidulin} codes, was introduced in \cite{Li_ITW}. %
This class is constructed by multiplying each column of the generator matrix of a twisted Gabidulin code by the corresponding component of an $n$-dimensional vector $\boldsymbol{\lambda}$ over $\mathbb{F}_{q^m}$. %
More results on MRD codes can be found in \cite{non-Gabidulin}-\cite{end}. %

However, to the best of our knowledge, most constructions of $(N \times n, q^{Nk}, d)$ MRD codes with $n \leq N$ are performed over $\mathbb{F}_{q^N}$ and rely on the arithmetic over the extension field $\mathbb{F}_{q^N}$. %
In particular, in the conventional way to construct a Gabidulin MRD code, a linear code over $\mathbb{F}_{q^N}$ is first constructed and then transformed to a rank-metric code over $\mathbb{F}_q$ with respect to a defined basis of $\mathbb{F}_{q^N}$ over $\mathbb{F}_q$. %
As $n$ increases, the extension field $\mathbb{F}_{q^N}$ required for constructing MRD codes becomes significantly larger, resulting in much more complicated implementation. %
In addition, it is not well investigated how to properly select a basis of $\mathbb{F}_{q^N}$ over $\mathbb{F}_q$ so that the transformation of an $n$-dimensional row vector over $\mathbb{F}_{q^N}$ to an $N\times n$ matrix over $\mathbb{F}_q$ can be efficiently performed. It turns out that from a practical point of view, parameter $N$ cannot be selected too large. The inflexibility of parameter selection hinders the practicability of MRD codes.  %

In this work, motivated by circular-shift linear network coding (LNC) \cite{Tang_LNC_TIT}-\cite{Jin_TIT}, we present a construction of MRD codes based on circular-shift operations with efficient encoding, referred to as \emph{circular-shift-based} MRD codes. %
Let $q$ be a prime, $L$ be a positive integer satisfying  $\gcd(q, L) = 1$, $m_L$ denote the multiplicative order of $q$ modulo $L$, \emph{i.e.,} $q^{m_L}=1 \mod L$ and $J$ equal to the Euler's totient function of $L$. The main contributions of this paper are summarized as follows:
\begin{itemize}
\item Based on circular-shift operations, we present a construction of $(J \times n, q^{Jk}, d)$ MRD codes with $n \leq m_L (\leq J)$. The new codes can be expressed in the following form
    \begin{equation}
    \mathcal{C} = \{\Delta(\mathbf{m}(\mathbf{I}_k\otimes\mathbf{P})\mathbf{\Psi}_{k\times n}(\mathbf{I}_n\otimes\mathbf{Q})): \mathbf{m}\in \mathbb{F}_{q}^{Jk}\}.
    \end{equation}
    Herein, $\mathbf{P}$ and $\mathbf{Q}$ are respectively $J\times L$ and $L\times J$ matrices over $\mathbb{F}_{q}$ constructed by two general methods, $\mathbf{\Psi}_{k\times n}$ denotes a $k \times n$ block matrix with every block to be an $L\times L$ circulant matrix over $\mathbb{F}_q$, $\otimes$ denotes the Kronecker product, and $\Delta$ denotes a mapping from a $Jn$-dimensional row vector over $\mathbb{F}_q$ to a $J\times n$ matrix over $\mathbb{F}_q$. %
    In particular,  different from conventional construction of $(J \times n, q^{Jk}, d)$ Gabidulin codes, the proposed construction gets rid of the arithmetic of $\mathbb{F}_{q^{J}}$ (See Definition \ref{def:circular-shift-MRD-code}, Theorem \ref{the:Circular_MRD_GH} and \ref{the:Circular_MRD_GG} in Section \ref{sec:Circular-shift-based MRD codes}). %
\item We equivalently characterize the proposed MRD codes and Gabidulin codes based on a set of $q$-linearized polynomials over the row vector space $\mathbb{F}_{q}^N$, and prove that their polynomial evaluations are different (See Proposition \ref{prop:new charac of Gabidulin} and Lemma \ref{GCH neq VAV} in Section \ref{sec:Connection with Gabidulin codes}). %
   Based on this new characterization, we further show that for the case $J \neq m_L$, in a family of settings, the proposed MRD codes are different from any $(J \times n, q^{Jk}, d)$ Gabidulin code and any $(J \times n, q^{Jk}, d)$ twisted Gabidulin code under specific parameter settings (See Theorem \ref{theorem:connection with Gabidulin} and Proposition \ref{propo:connection with twisted_Gabidulin} in Section \ref{sec:Connection with Gabidulin codes}). %
   We also provide explicit examples demonstrating that outside these settings, the proposed MRD codes may either coincide with or differ from $(J\times n, q^{Jk}, d)$ Gabidulin codes. %
\item For the case  $J = m_L$, we prove that every proposed MRD code coincides with a Gabidulin code (See Theorem \ref{theorem:connection with Gabidulin} in Section \ref{sec:Connection with Gabidulin codes}). 
   Despite this, the proposed construction has its own merit because it provides a new approach to construct Gabidulin codes purely based on the arithmetic of $\mathbb{F}_{q}$, different from the customary constructions that rely on the arithmetic of $\mathbb{F}_{q^{J}}$.
\item In addition to the difference between the proposed MRD codes and Gabidulin codes we unveiled, we further show that the proposed MRD codes, under some particular selection of $\mathbf{P}$ and $\mathbf{Q}$,  are equivalent to a generalization of Gabidulin codes obtained by summing and concatenating several $(m_L \times n, q^{m_Lk}, d)$ Gabidulin codes (See Theorem \ref{generalization} in Section \ref{sec:Connection with Gabidulin codes}). %
    To the best of our knowledge, this generalization has not been explored in the literature. %
\item When $q=2$, $L$ is prime and $n \leq m_L$, theoretical analysis shows that generating a codeword of an $((L-1) \times n, 2^{(L-1)k}, d)$ circular-shift-based MRD code requires $O(nkL)$ exclusive OR (XOR) operations, while generating a codeword of an $((L-1) \times n, 2^{(L-1)k}, d)$ Gabidulin code, based on the customary construction, requires $O(nkL^2)$ XOR operations (See Section \ref{sec:Complexity Analysis}).
\end{itemize} 

The remainder of this paper is organized as follows. %
Section \ref{sec:Preliminaries} reviews the fundamentals on the vector space representation of $\mathbb{F}_{q^N}$ over $\mathbb{F}_q$, $q$-linearized polynomial and two equivalent representations of Gabidulin codes. %
Section \ref{sec:Circular-shift-based MRD codes} introduces the new construction of MRD codes based on circular-shift operations. %
Section \ref{sec:Connection with Gabidulin codes} first characterizes Gabidulin codes in terms of $q$-linearized polynomials over the row vector space $\mathbb{F}_{q}^N$ instead of over $\mathbb{F}_{q^N}$, and then clarifies the inherent difference and connection between the proposed circular-shift-based MRD codes and Gabidulin codes. %
Section \ref{sec:Complexity Analysis} provides a theoretical complexity analysis for generating a codeword of the constructed MRD codes and of Gabidulin codes for the case that $q=2$ and $L$ is prime. %
Section \ref{section:Concluding Remarks} summarizes the paper. 
Most technical proofs are placed in Appendices.

\section{Preliminaries}
\label{sec:Preliminaries}
\begin{notation}
In the remaining part of this paper, $\mathbb{F}_q$ denotes a finite field consisting of $q$ elements, and $\mathbb{F}_{q^N}$ denotes the extension field of $\mathbb{F}_q$ of degree $N$. %
The notations $\mathbb{F}_{q^N}^{N \times n}$ and $\mathbb{F}_{q^N}^{n}$ represent the set of all $N \times n$ matrices over $\mathbb{F}_{q^N}$ and the set of all $n$-dimensional row vectors over $\mathbb{F}_{q^N}$, respectively. %
The Kronecker product is denoted by $\otimes$. %
For $l \geq 1$, the $l \times l$ identity matrix is denoted by $\mathbf{I}_l$. %
In addition, $\mathbf{0}$ and $\mathbf{1}$ respectively represent an all-zero and all-one matrix, whose size, if not explicitly explained, can be inferred from the context. %
For an arbitrary matrix $\mathbf{A}$, let $\mathbf{A}^{\circ j}$ denote the Hadamard power, that is, at every same location, the entry $a$ in $\mathbf{A}$ becomes $a^j$ in $\mathbf{A}^{\circ j}$. %
For a row vector $\mathbf{v}\in \mathbb{F}_{q^N}^n$, let $\mathbf{M}_\mathbf{o}(\mathbf{v})$ denote the following matrix in $\mathbb{F}_{q^N}^{n \times N}$
\begin{equation}
\label{def:Mo}
\mathbf{M}_\mathbf{o}(\mathbf{v})=[\mathbf{v}^\mathrm{T}~(\mathbf{v}^\mathrm{T})^{\circ q}~\ldots~(\mathbf{v}^\mathrm{T})^{\circ q^{N-1}}].
\end{equation}
Frequently used notations are listed in Appendix-\ref{appendix:list of notation} for reference.
\end{notation}
\subsection{Vector space representation of extension field}
There are several equivalent ways to represent elements in the extension field $\mathbb{F}_{q^N}$ over the base field $\mathbb{F}_{q}$. The one that interests us most in the context of MRD codes is the vector space representation.

Let $\alpha$ be such a nonzero element in $\mathbb{F}_{q^N}$ that it is a root of a certain irreducible polynomial of degree $N$ over $\mathbb{F}_q$. Then, all $q^N$ elements in $\mathbb{F}_{q^N}$ can be represented as $\mathbb{F}_q$-linear combinations of $\{1, \alpha, \ldots, \alpha^{N-1}\}$
\begin{equation}
\mathbb{F}_{q^N} = \{a_0 + a_1\alpha +\ldots + a_{N-1}\alpha^{N-1}: a_j \in \mathbb{F}_q \}.
\end{equation}
Herein, $[1~ \alpha~ \ldots~\alpha^{N-1}]$ is called a \emph{polynomial basis} of $\mathbb{F}_{q^N}$ over $\mathbb{F}_{q}$.

Analogously, $N$ elements $\{\beta_0, \beta_1, \ldots, \beta_{N-1}\}$ in $\mathbb{F}_{q^N}$ are said to be $\mathbb{F}_q$-\emph{linearly independent} if for every $a_0, \ldots, a_{N-1} \in \mathbb{F}_q$, $a_0\beta_0 + a_1\beta_1 + \ldots + a_{N-1}\beta_{N-1}$ represents a different element in $\mathbb{F}_{q^N}$. Every set $\mathcal{B}$ of $N$ $\mathbb{F}_q$-linearly independent elements in $\mathbb{F}_{q^N}$ forms a \emph{basis} of $\mathbb{F}_{q^N}$, so that every element $\gamma \in \mathbb{F}_{q^N}$ can be \emph{uniquely} written as an $\mathbb{F}_q$-linear combination of elements in $\mathcal{B}$. %
Equivalently, $\gamma$ corresponds to a row vector in $\mathbb{F}_{q}^{N}$, that is, there is a bijection between $\mathbb{F}_{q^N}$ and the $N$-dimensional vector space $\mathbb{F}_q^N$ over $\mathbb{F}_q$ with respect to $\mathcal{B}$. %
Hereafter in this paper, a basis of $\mathbb{F}_{q^N}$ over $\mathbb{F}_{q}$ is represented as a row vector in $\mathbb{F}_{q^N}^N$.

Let $\mathcal{B} = [\omega_0~\omega_1~\ldots~\omega_{N-1}]$ denote a basis of $\mathbb{F}_{q^N}$ over $\mathbb{F}_{q}$. %
For an arbitrary row vector $\mathbf{v}=[v_0 ~ v_1 ~\ldots~v_{n-1}]\in \mathbb{F}_{q^N}^n$, each element $v_i, 0\leq i \leq n-1$, can be expressed as an $\mathbb{F}_q$-linear combination of $\{\omega_0, \omega_1, \ldots, \omega_{N-1}\}$
\begin{equation}
v_i = a_{i,0}\omega_0 + a_{i,1}\omega_1 +\ldots + a_{i,N-1}\omega_{N-1}, a_{i,j} \in \mathbb{F}_q .
\end{equation}
Corresponding to $\mathbf{v}$, the counterpart matrix $\mathbf{M}_{\mathcal{B}}(\mathbf{v})$ in $\mathbb{F}_q^{N\times n}$ (with respect to basis $\mathcal{B}$) is expressed as follows
\begin{equation}
\label{eqn:counterpart matrix}
\mathbf{M}_{\mathcal{B}}(\mathbf{v})=\begin{bmatrix}
a_{0,0}&a_{1,0}&\ldots&a_{n-1,0}\\
a_{0,1}&a_{1,1}&\ldots&a_{n-1,1}\\
\vdots&\vdots&\ldots&\vdots\\
a_{0,N-1}&a_{1,N-1}&\ldots&a_{n-1,N-1}\\
\end{bmatrix}.
\end{equation}
Consequently, we have 
\begin{equation}
\label{eqn:counterpart matrix eqn}
\mathcal{B}\mathbf{M}_{\mathcal{B}}(\mathbf{v})=\mathbf{v}.
\end{equation}
For a given basis $\mathcal{B}$ of $\mathbb{F}_{q^N}$ over $\mathbb{F}_q$, there exists a unique \emph{dual basis} $\mathcal{B}'$ such that (See e.g., \cite{rank codes})
\begin{equation}
\label{def:dual basis}
\mathbf{M}_\mathbf{o}(\mathcal{B})^\mathrm{T}\mathbf{M}_\mathbf{o}(\mathcal{B}')=\mathbf{M}_\mathbf{o}(\mathcal{B})\mathbf{M}_\mathbf{o}(\mathcal{B}')^\mathrm{T}=\mathbf{I}_N,
\end{equation}
where the matrices $\mathbf{M}_\mathbf{o}(\mathcal{B})$ and $\mathbf{M}_\mathbf{o}(\mathcal{B}')$ in $\mathbb{F}_{q^N}^{N \times N}$ can be obtained by \eqref{def:Mo}.

\subsection{$q$-linearized polynomial}
The $q$-linearized polynomial plays a central role in the construction of MRD codes.
\begin{definition}
A polynomial in the form $L(x) = \sum_{j=0}^{n-1} \beta_jx^{q^j}$, $\beta_j \in \mathbb{F}_{q^N}$, is called a $q$-\emph{linearized polynomial} over $\mathbb{F}_{q^N}$ .
\end{definition}

A $q$-linearized polynomial over $\mathbb{F}_{q^N}$ is relatively easy to handle because of the following nice  properties:
\begin{itemize}
\item $L(\beta_1 + \beta_2) = L(\beta_1)+L(\beta_2)$ for all $\beta_1, \beta_2 \in \mathbb{F}_{q^N}$;
\item $L(c\beta) = cL(\beta)$ for all $c \in \mathbb{F}_q, \beta \in \mathbb{F}_{q^N}$;
\end{itemize}
\begin{lemma}
\label{lemma:q-polynomial}
Given a $q$-linearized polynomial $L(x)$ over $\mathbb{F}_{q^N}$ and $n$ elements $\beta_0, \ldots, \beta_{n-1} \in \mathbb{F}_{q^N}$, if
\begin{equation}
c_0L(\beta_0) + c_1L(\beta_1) + \ldots + c_{n-1}L(\beta_{n-1}) = 0
\end{equation}
for some $c_0, \ldots, c_{n-1} \in \mathbb{F}_q$, then $\sum\nolimits_{0 \leq i \leq n-1} c_i\beta_i$ is a root of $L(x)$ and for all $j \geq 0$,
\begin{equation}
\label{eqn:q-polynomial-lemma}
c_0L(\beta_0)^{q^j} + c_0L(\beta_1)^{q^j} + \ldots + c_{n-1}L(\beta_{n-1})^{q^j} = 0.
\end{equation}
\begin{IEEEproof}
By the linearity property of $L(x)$, $L(c_0\beta_0 + c_1\beta_1 + \ldots + c_{n-1}\beta_{n-1}) = c_0L(\beta_0) + c_1L(\beta_1) + \ldots + c_{n-1}L(\beta_{n-1}) = 0$, that is, $\sum_{0 \leq i \leq n-1} c_i\beta_i$ is a root of $L(x)$. Thus,
\begin{equation}
0=L(\sum\nolimits_{0 \leq i \leq n-1} c_i\beta_i)^{q^j}= L(\sum\nolimits_{0 \leq i \leq n-1} c_i\beta_i^{q^j}) 
= \sum\nolimits_{0 \leq i \leq n-1} c_iL(\beta_i^{q^j}) = \sum\nolimits_{0 \leq i \leq n-1} c_iL(\beta_i)^{q^j}.
\end{equation}
\end{IEEEproof}
\end{lemma}

\subsection{Gabidulin codes}
\label{subsec:Gabidulin codes}
In the literature, there are two representations of codes in \emph{rank metric}: \emph{matrix} representation \cite{Delsarte1978} and \emph{vector} representation \cite{Gabidulin}. %
For two matrices $\mathbf{M}_1$, $\mathbf{M}_2$ in $\mathbb{F}_{q}^{N \times n}$, the \emph{rank distance} between them is defined as
\begin{equation}
\label{eqn:rank_metric_matrix_rep}
d_{Rk}(\mathbf{M}_1, \mathbf{M}_2) = \mathrm{rank}(\mathbf{M}_1 - \mathbf{M}_2).
\end{equation}
A set $\mathcal{M}$ of matrices in $\mathbb{F}_q^{N\times n}$ is called an $(N \times n, q^{Nk}, d)$ rank-metric code with the minimum distance
\begin{equation}
d= \min_{\scriptstyle\mathbf{M}_1, \mathbf{M}_2 \in \mathcal{M}\hfill\atop\scriptstyle \mathbf{M}_1 \neq \mathbf{M}_2 \hfill} d_{Rk}(\mathbf{M}_1, \mathbf{M}_2).
\end{equation}
For a row vector $\mathbf{v}\in \mathbb{F}_{q^N}^n$, by representing every element in $\mathbf{v}$ as its corresponding $N$-dimensional column vector over $\mathbb{F}_q$ with respect to a defined basis $\mathcal{B}$ of $\mathbb{F}_{q^N}$ over $\mathbb{F}_q$, we obtain a counterpart matrix $\mathbf{M}_{\mathcal{B}}(\mathbf{v})\in \mathbb{F}_q^{N\times n}$. The \emph{rank} of $\mathbf{v}$ just refers to the rank of $\mathbf{M}_{\mathcal{B}}(\mathbf{v})$.
Thus, for two row vectors $\mathbf{v}_1$, $\mathbf{v}_2 \in \mathbb{F}_{q^N}^n$, the \emph{rank distance} defined as
\begin{equation}
\label{eqn:d_R matrix}
d_{Rk}(\mathbf{v}_1, \mathbf{v}_2)=\mathrm{rank}(\mathbf{M}_{\mathcal{B}}(\mathbf{v}_1)-\mathbf{M}_{\mathcal{B}}(\mathbf{v}_2)).
\end{equation}
A set $\mathcal{V}$ of row vectors in $\mathbb{F}_{q^N}^n$ is called an $(N \times n, q^{Nk}, d)$ rank-metric code with the minimum distance
\begin{equation}
d= \min_{\scriptstyle\mathbf{v}_1, \mathbf{v}_2 \in \mathcal{V}\hfill\atop\scriptstyle \mathbf{v}_1 \neq \mathbf{v}_2 \hfill} d_{Rk}(\mathbf{M}_{\mathcal{B}}(\mathbf{v}_1), \mathbf{M}_{\mathcal{B}}(\mathbf{v}_2)).
\end{equation}

For an $(N \times n, q^{Nk}, d)$ rank-metric code $\mathcal{M}$, the number of codewords in $\mathcal{M}$ satisfies the Singleton bound $|\mathcal{M}| \leq q^{N(n-d+1)}$. %
A rank-metric code that meets the Singleton bound is called a \emph{maximum rank distance} (MRD) code.

Gabidulin codes, the most well-known MRD codes, have two equivalent descriptions: one from the perspective of matrix representation introduced by Delsarte \cite{Delsarte1978} and the other from the perspective of vector representation introduced by Gabidulin \cite{Gabidulin}. %
In both formulations, the MRD property of the constructed codes can be easily derived by making use of $q$-linearized polynomials over $\mathbb{F}_{q^N}$. %
Let $k=n-d+1$ and $d\leq n \leq N$. %
We begin by reviewing the two representations of Gabidulin codes and then explicitly characterize their inherent connection.

We first review the construction of Gabidulin codes from the perspective of matrix representation. %
Let $\mathcal{B} = [\omega_0~\omega_1~\ldots~\omega_{N-1}]$ denote an arbitrary basis of $\mathbb{F}_{q^N}$ over $\mathbb{F}_q$ used to define the matrix in \eqref{eqn:MO_B}, and $\beta_0, \beta_1, \ldots, \beta_{n-1}$ denote $n$ $\mathbb{F}_q$-linearly independent elements in $\mathbb{F}_{q^N}$ used to define the matrix in \eqref{eqn:L}. %
Note that $n\leq N$, and that even when $n = N$, the elements $\omega_0, \omega_1, \ldots,\omega_{N-1}$ are not necessarily same as $\beta_0, \beta_1, \ldots, \beta_{n-1}$. %
For a row vector $\mathbf{u}=[u_0, u_1, \ldots, u_{k-1}] \in \mathbb{F}_{q^N}^k$, let $L_\mathbf{u}(x)$ denote the following $q$-linearized polynomial over $\mathbb{F}_{q^N}$ 
\begin{equation}
\label{eqn:L_u(x)}
L_\mathbf{u}(x)=\sum\nolimits_{s=0}^{k-1} u_sx^{q^s}.
\end{equation}
Define the matrix $\mathbf{M}(\mathbf{u}) \in \mathbb{F}_q^{N\times n}$ as
\begin{equation}
\label{eqn:Mu}
\mathbf{M}(\mathbf{u})=\mathbf{M}_\mathbf{o}(\mathcal{B})\mathbf{L}_\mathbf{u},
\end{equation}
where
\begin{align}
\label{eqn:MO_B}
&\mathbf{M}_\mathbf{o}(\mathcal{B})=
\begin{bmatrix}
\omega_0 & \omega_0^q &\ldots&\omega_0^{q^{N-1}}\\
\omega_1  & \omega_1^q &\ldots&\omega_1^{q^{N-1}}\\
\vdots&\vdots&\vdots&\vdots\\
\omega_{N-1}  & \omega_{N-1}^q &\ldots&\omega_{N-1}^{q^{N-1}}\\
\end{bmatrix},\\
\label{eqn:L}
&\mathbf{L}_\mathbf{u}=
\begin{bmatrix}
L_\mathbf{u}(\beta_0) & L_\mathbf{u}(\beta_1) & \ldots & L_\mathbf{u}(\beta_{n-1}) \\
L_\mathbf{u}(\beta_0)^q & L_\mathbf{u}(\beta_1)^q & \ldots & L_\mathbf{u}(\beta_{n-1})^q \\
 \vdots & \vdots & & \vdots \\
L_\mathbf{u}(\beta_0)^{q^{N-1}} & L_\mathbf{u}(\beta_1)^{q^{N-1}} & \ldots & L_\mathbf{u}(\beta_{n-1})^{q^{N-1}} \\
\end{bmatrix}.
\end{align}

\begin{definition}
\label{def:matrix  representation}
An $(N \times n, q^{Nk}, d)$ rank-metric code $\mathcal{M}$ consists of the following $q^{Nk}$ matrices in $\mathbb{F}_q^{N\times n}$ 
\begin{equation}
\label{eqn:circular-shift-MRD-code-DEF}
\mathcal{M} = \{\mathbf{M}(\mathbf{u}): \mathbf{u}\in \mathbb{F}_{q^N}^k\}.
\end{equation}
\end{definition}
It is well-known (See, e.g., in \cite{Delsarte1978}) that the $(N \times n, q^{Nk}, d)$ rank-metric code $\mathcal{M} = \{\mathbf{M}(\mathbf{u}): \mathbf{u}\in \mathbb{F}_{q^N}^k\}$ is an MRD code.

We next review the construction of Gabidulin codes from the perspective of vector representation. %
Define the following matrix $\mathbf{G}\in \mathbb{F}_{q^N}^{k \times n}$
\begin{equation}
\label{eqn:MRD_generator_matrix}
\mathbf{G} = \begin{bmatrix}
\beta_0 & \beta_1 & \cdots & \beta_{n-1} \\
\beta_0^q & \beta_1^q & \cdots & \beta_{n-1}^q \\
 \vdots & \vdots & & \vdots \\
\beta_0^{q^{k-1}} & \beta_1^{q^{k-1}} & \cdots & \beta_{n-1}^{q^{k-1}} \\
\end{bmatrix}.
\end{equation}
Recall that with respect to a defined basis $\mathcal{B}$ of $\mathbb{F}_{q^N}$ over $\mathbb{F}_{q}$, every row vector $\mathbf{v}\in \mathbb{F}_{q^N}^{n}$ can be represented as a counterpart matrix $\mathbf{M}_{\mathcal{B}}(\mathbf{v})\in \mathbb{F}_q^{N\times n}$ according to \eqref{eqn:counterpart matrix}. %

\begin{definition}
\label{def:Gabidulin codes}
Let $k \leq n\leq N$. The $(n, k)$ linear code $\mathcal{V} = \{\mathbf{u}\mathbf{G}: \mathbf{u}\in \mathbb{F}_{q^N}^k\}$ over $\mathbb{F}_{q^N}$ is called a Gabidulin (linear) code over $\mathbb{F}_{q^N}$. %
With respect to a defined basis $\mathcal{B}$ of $\mathbb{F}_{q^N}$ over $\mathbb{F}_{q}$, the set $\mathcal{M}_\mathcal{B}(\mathcal{V}) = \{\mathbf{M}_{\mathcal{B}}(\mathbf{v}), \mathbf{v} \in \mathcal{V}\}$ of matrices in $\mathbb{F}_q^{N\times n}$ induced from $\mathcal{V}$ is called an $(N \times n, q^{Nk}, d)$ Gabidulin (rank-metric) code.
\end{definition}

It is well-known (See, e.g., in \cite{Gabidulin}) that for an arbitrary selection of the basis $\mathcal{B}$, the $(N \times n, q^{Nk}, d)$ Gabidulin code $\mathcal{M}_\mathcal{B}(\mathcal{V})$ is an MRD code. %
When a Gabidulin code $\mathcal{V}$ is regarded as a linear code over $\mathbb{F}_{q^N}$, it can be transformed into different MRD codes $\mathcal{M}_\mathcal{B}(\mathcal{V})$ over $\mathbb{F}_{q}$ by different selections of $\mathcal{B}$ of $\mathbb{F}_{q^N}$ over $\mathbb{F}_{q}$. %
Assume that $\mathcal{B}_1$ and $\mathcal{B}_2$ are two bases of $\mathbb{F}_{q^N}$ over $\mathbb{F}_q$. Let $\mathbf{V}$ be the invertible matrix in $\mathbb{F}_q^{N\times N}$ subject to $\mathcal{B}_1= \mathcal{B}_2\mathbf{V}$. For a Gabidulin linear code $\mathcal{V}$ over $\mathbb{F}_{q^N}$, the two different corresponding $(N \times n, q^{Nk}, d)$ Gabidulin MRD codes $\mathcal{M}_{\mathcal{B}_1}(\mathcal{V})$ and $\mathcal{M}_{\mathcal{B}_2}(\mathcal{V})$ satisfy $\mathbf{V}\mathcal{M}_{\mathcal{B}_1}(\mathcal{V}) = \mathcal{M}_{\mathcal{B}_1\mathbf{V}^{-1}}(\mathcal{V}) = \mathcal{M}_{\mathcal{B}_2}(\mathcal{V})$. %
However, to the best of our knowledge, from a practical point of view, proper selection of basis of $\mathbb{F}_{q^N}$ over $\mathbb{F}_q$ has not been thoroughly explored, particularly in terms of how to efficiently transform a row vector in $\mathbb{F}_{q^N}^{n}$ to a matrix in $\mathbb{F}_q^{N\times n}$. %

In the following proposition, we show that with particular choices of basis, the Gabidulin codes defined in Definition \ref{def:Gabidulin codes} are equivalent to the MRD codes defined in Definition \ref{def:matrix  representation}.

\begin{proposition}
\label{prop:Connections between M1 and M2}
Let $\mathcal{B} = [\omega_0~\omega_1~\ldots~\omega_{N-1}]$ be a basis of $\mathbb{F}_{q^N}$ over $\mathbb{F}_q$, $\mathcal{B}' = [\omega_0'~\omega_1'~\ldots~\omega_{N-1}']$ be the dual basis of $\mathcal{B}$. Assume that the $n$ $\mathbb{F}_q$-linearly independent elements $\beta_0, \ldots, \beta_{n-1}$ in $\mathbb{F}_{q^N}$ used in defining $\mathbf{L}_\mathbf{u}$ of \eqref{eqn:L} are identical to those adopted in defining $\mathbf{G}$ of \eqref{eqn:MRD_generator_matrix}. %
Denote by $\mathcal{M}$ the MRD code formulated in Definition \ref{def:matrix  representation}, and by $\mathcal{M}_{\mathcal{B}'}(\mathcal{V})$ the Gabidulin code  with respect to the basis $\mathcal{B}'$ formulated in Definition \ref{def:Gabidulin codes}. We have
\begin{equation}
\label{eqn:M1 and M2}
\mathcal{M} = \mathcal{M}_{\mathcal{B}'}(\mathcal{V}).
\end{equation}
\begin{IEEEproof}
Please refer to Appendix-\ref{appendix:proof proposition M1M2}
\end{IEEEproof}
\end{proposition}

\begin{example}
\label{example:example 1}
Consider $q=2$. %
Let $\alpha$ be a root of the primitive polynomial $p(x)=x^4+x+1$ over $\mathbb{F}_2$. It can be shown $\mathcal{B}=[1~ \alpha~ \alpha^2~\alpha^3]$ is a basis of $\mathbb{F}_{2^4}$ over $\mathbb{F}_2$ and $\mathcal{B}'=[\alpha^{14}~ \alpha^2~\alpha~1]$ is the dual basis of $\mathcal{B}$. %
Choose the $\mathbb{F}_2$-linearly independent elements $[\beta_0, \beta_1, \beta_2, \beta_3]=[\alpha, \alpha^2, \alpha^3, \alpha^4]$ to construct $\mathbf{L}_\mathbf{u}$ in \eqref{eqn:L}. %
Set 
\begin{equation*}
\mathbf{G}=[\alpha~\alpha^2~\alpha^3~\alpha^4]. 
\end{equation*}
For each 
\begin{equation*}
\mathbf{u} \in \{0, 1, \alpha, \alpha^2, \alpha^3, \alpha^4, \alpha^5, \alpha^6, \alpha^7, \alpha^8, \alpha^9, \alpha^{10}, \alpha^{11}, \alpha^{12},\alpha^{13}, \alpha^{14}\}=\mathbb{F}_{2^4},
\end{equation*}
the codewords in the $(4\times4, 2^4, 4)$ MRD code  $\mathcal{M} = \{\mathbf{M}(\mathbf{u}): \mathbf{u}\in \mathbb{F}_{2^4}\}$ as defined in Definition \ref{def:matrix  representation} are
\begin{align*}
&
\begin{bmatrix}\begin{smallmatrix}
   0&   0&   0 &0\\
   0&   0&   0 &0\\
   0&   0&   0 &0\\
   0&   0&   0 &0
   \end{smallmatrix}\end{bmatrix},~
\begin{bmatrix}\begin{smallmatrix}
   0&   0&   1&   0\\
   0&   1&   0&   0\\
   1&   0&   0&   1\\
   0&   0&   1&   1
   \end{smallmatrix}\end{bmatrix},~
\begin{bmatrix}\begin{smallmatrix}
   0&   1&  0&   0\\
   1&   0&   0&   1\\
   0&   0&   1&   1\\
   0&   1&   1&   0
   \end{smallmatrix}\end{bmatrix},~
\begin{bmatrix}\begin{smallmatrix}
   1&   0&   0&   1\\
   0&   0&   1&   1\\
   0&   1&   1&   0\\
   1&   1&   0&   1
   \end{smallmatrix}\end{bmatrix},~
\begin{bmatrix}\begin{smallmatrix}
   0&   0&   1&   1\\
   0&   1&   1&   0\\
   1&  1&   0&   1\\
   1&   0&   1&   0
   \end{smallmatrix}\end{bmatrix},~
\begin{bmatrix}\begin{smallmatrix}
   0&   1&   1&   0\\
   1&   1&   0&   1\\
   1&   0&   1&   0\\
   0&   1&   0&   1
   \end{smallmatrix}\end{bmatrix},~
\begin{bmatrix}\begin{smallmatrix}
   1&   1&   0&   1\\
   1&   0&   1&   0\\
   0&  1&   0&   1\\
   1&   0&   1&   1
   \end{smallmatrix}\end{bmatrix},~
\begin{bmatrix}\begin{smallmatrix}
   1&   0&   1&   0\\
   0&   1&   0&   1\\
   1&   0&   1&   1\\
   0&   1&   1&   1
   \end{smallmatrix}\end{bmatrix},\\
&
\begin{bmatrix}\begin{smallmatrix}
   0&   1&   0&   1\\
   1&   0&   1&   1\\
   0&   1&   1&   1\\
   1&   1&   1&   1
   \end{smallmatrix}\end{bmatrix},~
\begin{bmatrix}\begin{smallmatrix}
   1&   0&   1&   1\\
   0&   1&   1 &  1\\
   1&   1&   1 &  1\\
   1&   1&   1&   0
   \end{smallmatrix}\end{bmatrix},~
\begin{bmatrix}\begin{smallmatrix}
   0&   1&   1&   1\\
   1&   1&   1&   1\\
   1&   1&   1&   0\\
   1&   1&   0&   0
   \end{smallmatrix}\end{bmatrix},~
\begin{bmatrix}\begin{smallmatrix}
   1&   1&   1&   1\\
   1&   1 & 1&   0\\
   1&   1&   0 &  0\\
   1&   0&   0&   0
   \end{smallmatrix}\end{bmatrix},~
\begin{bmatrix}\begin{smallmatrix}
   1&   1&   1&   0\\
   1 &  1&   0&   0\\
   1&   0&   0&   0\\
   0&   0&   0&   1
   \end{smallmatrix}\end{bmatrix},~
\begin{bmatrix}\begin{smallmatrix}
   1&   1&   0&   0\\
   1&   0&   0&   0\\
   0&   0&   0&   1\\
   0&   0&  1&   0
  \end{smallmatrix} \end{bmatrix},~
\begin{bmatrix}\begin{smallmatrix}
   1&   0&   0&   0\\
   0&   0&   0&   1\\
   0&   0&   1&   0\\
   0&   1&   0&   0
   \end{smallmatrix}\end{bmatrix},~
\begin{bmatrix}\begin{smallmatrix}
   0&   0&   0&   1\\
   0&   0&   1&   0\\
   0&   1&   0&   0\\
   1&   0&   0&   1
   \end{smallmatrix}\end{bmatrix}.
\end{align*}
By Definition \ref{def:Gabidulin codes},  the codewords in the  $(4,1)$ Gabidulin code $\mathcal{V} = \{\mathbf{u}\mathbf{G}: \mathbf{u}\in \mathbb{F}_{2^4}\}$ are
\begin{align*}
\label{eqn:Gabidulin_code_example}
 &[0~ 0~ 0 ~ 0],[\alpha~\alpha^2~\alpha^3~\alpha^4], [\alpha^2~\alpha^3~\alpha^4~\alpha^5],[\alpha^3~\alpha^4~\alpha^5~\alpha^6], 
[\alpha^4~\alpha^5~\alpha^6~\alpha^7],[\alpha^5~\alpha^6~\alpha^7~\alpha^8],[\alpha^6~\alpha^7~\alpha^8~\alpha^9],\\
&[\alpha^7~\alpha^8~\alpha^9~\alpha^{10}],[\alpha^8 ~ \alpha^9~ \alpha^{10} ~ \alpha^{11}],[\alpha^9 ~ \alpha^{10}~ \alpha^{11} ~ \alpha^{12}],
 [\alpha^{10} ~ \alpha^{11}~ \alpha^{12} ~ \alpha^{13}],[\alpha^{11} ~ \alpha^{12}~ \alpha^{13} ~ \alpha^{14}],[\alpha^{12} ~ \alpha^{13}~ \alpha^{14} ~ 1],\\
 &[\alpha^{13} ~ \alpha^{14}~ 1 ~ \alpha],[\alpha^{14}~1~\alpha~\alpha^2],[1 ~\alpha~\alpha^2~\alpha^3].
\end{align*}
With respect to $\mathcal{B}'$, the codewords in the corresponding $(4 \times 4, 2^4, 4)$ Gabidulin MRD code $\mathcal{M}_{\mathcal{B}'}(\mathcal{V})$ as defined in Definition \ref{def:Gabidulin codes} are 
\begin{align*}
&\begin{bmatrix}\begin{smallmatrix}
   0&   0&   0 &0\\
   0&   0&   0 &0\\
   0&   0&   0 &0\\
   0&   0&   0 &0
   \end{smallmatrix}\end{bmatrix},~
\begin{bmatrix}\begin{smallmatrix}
   0&   0&   1&   0\\
   0&   1&   0&   0\\
   1&   0&   0&   1\\
   0&   0&   1&   1
   \end{smallmatrix}\end{bmatrix},~
\begin{bmatrix}\begin{smallmatrix}
   0&   1&  0&   0\\
   1&   0&   0&   1\\
   0&   0&   1&   1\\
   0&   1&   1&   0
   \end{smallmatrix}\end{bmatrix},~
\begin{bmatrix}\begin{smallmatrix}
   1&   0&   0&   1\\
   0&   0&   1&   1\\
   0&   1&   1&   0\\
   1&   1&   0&   1
   \end{smallmatrix}\end{bmatrix},~
\begin{bmatrix}\begin{smallmatrix}
   0&   0&   1&   1\\
   0&   1&   1&   0\\
   1&  1&   0&   1\\
   1&   0&   1&   0
   \end{smallmatrix}\end{bmatrix},~
\begin{bmatrix}\begin{smallmatrix}
   0&   1&   1&   0\\
   1&   1&   0&   1\\
   1&   0&   1&   0\\
   0&   1&   0&   1
   \end{smallmatrix}\end{bmatrix},~
\begin{bmatrix}\begin{smallmatrix}
   1&   1&   0&   1\\
   1&   0&   1&   0\\
   0&  1&   0&   1\\
   1&   0&   1&   1
   \end{smallmatrix}\end{bmatrix},~
\begin{bmatrix}\begin{smallmatrix}
   1&   0&   1&   0\\
   0&   1&   0&   1\\
   1&   0&   1&   1\\
   0&   1&   1&   1
   \end{smallmatrix}\end{bmatrix}, \\
&\begin{bmatrix}\begin{smallmatrix}
   0&   1&   0&   1\\
   1&   0&   1&   1\\
   0&   1&   1&   1\\
   1&   1&   1&   1
   \end{smallmatrix}\end{bmatrix},~
\begin{bmatrix}\begin{smallmatrix}
   1&   0&   1&   1\\
   0&   1&   1 &  1\\
   1&   1&   1 &  1\\
   1&   1&   1&   0
   \end{smallmatrix}\end{bmatrix},~
\begin{bmatrix}\begin{smallmatrix}
   0&   1&   1&   1\\
   1&   1&   1&   1\\
   1&   1&   1&   0\\
   1&   1&   0&   0
   \end{smallmatrix}\end{bmatrix},~
\begin{bmatrix}\begin{smallmatrix}
   1&   1&   1&   1\\
   1&   1 & 1&   0\\
   1&   1&   0 &  0\\
   1&   0&   0&   0
   \end{smallmatrix}\end{bmatrix},~
\begin{bmatrix}\begin{smallmatrix}
   1&   1&   1&   0\\
   1 &  1&   0&   0\\
   1&   0&   0&   0\\
   0&   0&   0&   1
   \end{smallmatrix}\end{bmatrix},~
\begin{bmatrix}\begin{smallmatrix}
   1&   1&   0&   0\\
   1&   0&   0&   0\\
   0&   0&   0&   1\\
   0&   0&  1&   0
  \end{smallmatrix} \end{bmatrix},~
\begin{bmatrix}\begin{smallmatrix}
   1&   0&   0&   0\\
   0&   0&   0&   1\\
   0&   0&   1&   0\\
   0&   1&   0&   0
   \end{smallmatrix}\end{bmatrix},~
\begin{bmatrix}\begin{smallmatrix}
   0&   0&   0&   1\\
   0&   0&   1&   0\\
   0&   1&   0&   0\\
   1&   0&   0&   1
   \end{smallmatrix}\end{bmatrix}.
\end{align*}
One may readily check that the $16$ matrices in the $(4 \times 4, 2^4, 4)$ MRD code $\mathcal{M}$ coincide with the $16$ matrices in the $(4 \times 4, 2^4, 4)$ Gabidulin code $\mathcal{M}_{\mathcal{B}'}(\mathcal{V})$. \hfill $\blacksquare$ 
\end{example}

Proposition \ref{prop:Connections between M1 and M2} explicitly characterizes the intrinsic equivalence between the matrix and vector representations of Gabidulin codes. %
In particular, for any given $(N \times n, q^{Nk}, d)$ MRD code $\mathcal{M} = \{\mathbf{M}(\mathbf{u}): \mathbf{u}\in \mathbb{F}_{q^N}^k\}$ (which is a collection of matrices over $\mathbb{F}_q$) with respect to a defined basis $\mathcal{B}$, 
a corresponding $(n, k)$ Gabidulin code $\mathcal{V} = \{\mathbf{u}\mathbf{G}: \mathbf{u}\in \mathbb{F}_{q^N}^k\}$ over $\mathbb{F}_{q^N}$ (which is a $\mathbb{F}_{q^N}$-vector subspace) can be constructed by utilizing the dual basis of $\mathcal{B}$. %
Furthermore, this inherent connection between the two representations lays a theoretical foundation for demonstrating that the MRD codes explored in this paper are equivalent to a generalization of Gabidulin codes, as to be shown in Section \ref{subsec:Generalization of Gabidulin codes}. 

\section{Circular-shift-based MRD codes}
\label{sec:Circular-shift-based MRD codes}
\subsection{New construction of rank-metric codes}
In this section, assume that $q$ is prime. Let $L$ be a positive integer subject to $\gcd(q, L) = 1$, $m_L$ be the multiplicative order of $q$ modulo $L$, \emph{i.e.,} $q^{m_L}=1 \mod L$ and $n$ be a positive integer no larger than $m_L$. %
Denote by $L_{\mathbb{F}_q}= L \mod q$. %
Define
\begin{equation}
\label{J}
{\cal J}=\{j: 1\leq j \leq L-1, \mathrm{gcd}(j,L)=1\},
\end{equation}
whose cardinality is $J$, that is, $J$ equals to the number of positive integers less than $L$ that are coprime to $L$. %
We shall first describe a new construction of $(J \times n, q^{Jk}, d)$ rank-metric codes, referred to as \emph{circular-shift-based} rank-metric codes, which avoids arithmetic over extension fields. %

For $k \leq n\leq m_L\leq J$, define the following $k\times n$ block matrix $\mathbf{\Psi}_{k\times n}$ with every block to be a matrix in $\mathbb{F}_q ^{L\times L}$
\begin{equation}
\label{eqn:Psi_def}
\mathbf{\Psi}_{k\times n}= \begin{bmatrix}
\mathbf{C}_L^{l_0}&\mathbf{C}_L^{l_1} & \ldots & \mathbf{C}_L^{l_{n-1}}\\
\mathbf{C}_L^{ql_0}&\mathbf{C}_L^{ql_1}& \ldots & \mathbf{C}_L^{ql_{n-1}}\\
 \vdots & \vdots & \vdots & \vdots \\
\mathbf{C}_L^{q^{k-1}l_0}&\mathbf{C}_L^{q^{k-1}l_1}& \ldots & \mathbf{C}_L^{q^{k-1}l_{n-1}}
\end{bmatrix},
\end{equation}
where $0 \leq l_0,l_1,\cdots,l_{n-1} \leq L-1$ are distinct elements, and $\mathbf{C}_L$ denotes the  $L\times L$ cyclic permutation matrix 
$\mathbf{C}_L = \begin{bmatrix} \mathbf{0} & \mathbf{I}_{L-1} \\ 1 & \mathbf{0} \end{bmatrix}$. %
Consider a row vector $\mathbf{m}=[\mathbf{m}_0, \mathbf{m}_1, \ldots, \mathbf{m}_{k-1}] \in \mathbb{F}_q^{Jk}$, where $\mathbf{m}_s \in \mathbb{F}_q^{J}$ for $0 \leq s \leq k-1$. Let 
\begin{equation}
\label{def:C=m_psi}
\mathbf{c}=[
  \mathbf{c}_0 ~ \mathbf{c}_1 ~ \ldots ~ \mathbf{c}_{n-1}]=\mathbf{m}(\mathbf{I}_k\otimes\mathbf{P})\mathbf{\Psi}_{k\times n}(\mathbf{I}_n\otimes\mathbf{Q}) \in \mathbb{F}_q^{Jn},
\end{equation}
where $\mathbf{c}_i \in \mathbb{F}_q^{J}$ for $0 \leq i \leq n-1$, $\mathbf{P} \in \mathbb{F}_q^{J\times L}$ and $\mathbf{Q} \in \mathbb{F}_q^{L\times J}$ are arbitrary matrices, and $\otimes$ denotes the Kronecker product. %
Notice that $\mathbf{c}_i = \sum_{s=0}^{k-1}\mathbf{m}_s\mathbf{P}\mathbf{C}_L^{q^sl_{i}}\mathbf{Q}$ in \eqref{def:C=m_psi}. Let $\Delta$ represent the following 
\begin{equation}
\label{eqn:C_bar}
\Delta: \mathbb{F}_q^{Jn} \rightarrow \mathbb{F}_q^{J \times n},~ \mathrm{with}~
\Delta(\mathbf{c})=[\mathbf{c}_0^\mathrm{T}~ \mathbf{c}_1^\mathrm{T} ~ \ldots ~ \mathbf{c}_{n-1}^\mathrm{T}].
\end{equation}

\begin{definition}
\label{def:circular-shift-MRD-code}
For $k \leq n\leq m_L\leq J$, a $(J \times n, q^{Jk}, d)$ \emph{circular-shift-based rank-metric code} $\mathcal{C}$ is defined as
\begin{equation}
\label{eqn:circular-shift-MRD-code-DEF}
\mathcal{C} = \{\Delta(\mathbf{m}(\mathbf{I}_k\otimes\mathbf{P})\mathbf{\Psi}_{k\times n}(\mathbf{I}_n\otimes\mathbf{Q})): \mathbf{m}\in \mathbb{F}_{q}^{Jk}\}.
\end{equation}
Herein, $\mathbf{\Psi}_{k\times n}$ refers to the $k\times n$ block matrix defined in \eqref{eqn:Psi_def}, and $\mathbf{P} \in \mathbb{F}_q^{J\times L}$, $\mathbf{Q} \in \mathbb{F}_q^{L\times J}$ are arbitrarily given matrices.
\end{definition}

Notice that a circular-shift-based rank-metric code defined above is closed under addition and scalar multiplication by elements in $\mathbb{F}_q$. %
Although this property holds for any $\mathbf{P} \in \mathbb{F}_q^{J\times L}$ and $\mathbf{Q} \in \mathbb{F}_q^{L\times J}$, not all choices of $\mathbf{P}$ and $\mathbf{Q}$ ensure that the 
$(J \times n, q^{Jk}, d)$ circular-shift-based rank-metric code $\mathcal{C}$ in Definition \ref{def:circular-shift-MRD-code} is an MRD code. %
In the following, we provide two proper designs of $\mathbf{P}$ and $\mathbf{Q}$ that give circular-shift-based MRD codes.

\subsection{Explicit construction of $\mathbf{P}$ and $\mathbf{Q}$ for circular-shift-based MRD codes}
\label{subsec:GH}
In this subsection, we introduce a general method to construct matrices $\mathbf{G}_L \in \mathbb{F}_q^{J \times L}$ and $\mathbf{H}_L \in \mathbb{F}_q^{L\times J}$, so that the circular-shift-based rank-metric code $\mathcal{C}$ given in Definition \ref{def:circular-shift-MRD-code} satisfies the MRD property by setting $\mathbf{P}=\mathbf{G}_L$ and $\mathbf{Q}=\mathbf{H}_L$. %
Consider two matrices $\mathbf{U} = [\mathbf{u}_j]_{0\leq j \leq L-1} \in \mathbb{F}_{q^{m_L}}^{J \times L}$ and $\mathbf{U}' = [\mathbf{u}_j']_{0\leq j \leq L-1}\in \mathbb{F}_{q^{m_L}}^{J \times L}$ satisfying the following conditions
\begin{align}
\label{U_def}
\mathrm{rank}([\mathbf{u}_j]_{j\in {\cal J}}) = J, \quad [\mathbf{u}'_j]_{j\in \mathcal{ J}}^{\mathrm{T}} = L_{\mathbb{F}_{q}}^{-1}[\mathbf{u}_j]_{j\in \mathcal{ J}}^{-1}, \quad
[\mathbf{u}_j]_{j\notin \cal J} = &\mathbf{0}~\mathrm{or}~[\mathbf{u}_j']_{j \notin \cal J} = \mathbf{0},
\end{align}
where $L_{\mathbb{F}_q}= L \mod q$. %
Let $\mathbf{V}_L$ and $\widetilde{\mathbf{V}}_L$ respectively denote the following Vandermonde matrices in $\mathbb{F}_{q^{m_L}}^{L \times L}$ with $\beta \in \mathbb{F}_{q^{m_L}}$ being a primitive $L^{th}$ root of unity
\begin{align}
\label{eqn:VL}
\mathbf{V}_L =& \begin{bmatrix}
1 & 1 & 1 & \ldots & 1 \\
1 & \beta & \beta^2 & \ldots & \beta^{L-1} \\
\vdots & \vdots & \vdots & \ldots & \vdots \\
1 & \beta^{L-1} & \beta^{2(L-1)} & \ldots & \beta^{(L-1)(L-1)}
\end{bmatrix}, \\
\label{eqn:VL tilde}
\widetilde{\mathbf{V}}_L =& \begin{bmatrix}
1 & 1 & 1 & \ldots & 1 \\
1 & \beta^{-1} & \beta^{-2} & \ldots & \beta^{-(L-1)} \\
\vdots & \vdots & \vdots & \ldots & \vdots \\
1 & \beta^{-(L-1)} & \beta^{-(L-1)2} & \ldots & \beta^{-(L-1)(L-1)}
\end{bmatrix}.
\end{align}
The desired matrices $\mathbf{G}_L \in \mathbb{F}_q^{J \times L}$ and $\mathbf{H}_L \in \mathbb{F}_q^{L\times J}$ are given by
\begin{equation}
\label{eqn:GH_def}
\mathbf{G}_L = \mathbf{U}\widetilde{\mathbf{V}}_L, \quad \mathbf{H}_L=\mathbf{V}_L\mathbf{U}'^{\mathrm{T}}.
\end{equation}
The conditions imposed on the matrices $\mathbf{U}$ and $\mathbf{U}'$ not only ensure the MRD property of the circular-shift-based rank-metric codes defined in this section, but also guarantee that $(\mathbf{G}_L\mathbf{C}_L^i\mathbf{H}_L)(\mathbf{G}_L\mathbf{C}_L^j\mathbf{H}_L)=\mathbf{G}_L\mathbf{C}_L^{i+j}\mathbf{H}_L$ for $0 \leq i, j \leq L-1$ (See, e.g., \cite{Jin-arXiv}), providing a theoretical foundation for the subsequent discussion on the differences between the proposed circular-shift-based MRD codes and Gabidulin codes.

Notice that the definitions above do not guarantee that the matrices $\mathbf{G}_L$ and $\mathbf{H}_L$ are necessarily over $\mathbb{F}_{q}$. %
However, certain techniques (See, e.g., \cite{Jin-arXiv}) for designing $\mathbf{U}$ and $\mathbf{U}'$ ensure that $\mathbf{G}_L$ and $\mathbf{H}_L$ are indeed defined over $\mathbb{F}_{q}$. %
For instance, let $\bar{\mathbf{V}}_L$ denote the $J\times L$ full rank matrix obtained by deleting the last $L-J$ rows of $\mathbf{V}_L$. %
Let $\mathbf{U}= [\mathbf{u}_j]_{0 \leq j \leq L-1}=L_{\mathbb{F}_q}^{-1}\bar{\mathbf{V}}_L$, $[\mathbf{u}'_j]_{j\in \mathcal{J}} = L_{\mathbb{F}_{q}}^{-1}([\mathbf{u}_j]^{-1}_{j\in \mathcal{J}})^{\mathrm{T}}$ and $[\mathbf{u}'_j]_{j\notin \mathcal{J}} = \mathbf{0}$. %
Based on \eqref{eqn:GH_def} and $\mathbf{V}_L\widetilde{\mathbf{V}}_L = \widetilde{\mathbf{V}}_L\mathbf{V}_L = L_{\mathbb{F}_q}\mathbf{I}_L$ (See, e.g., \cite{tang2020circular}), we have 
\begin{equation}
\label{eqn:instance 1}
\mathbf{G}_L = \mathbf{U}\widetilde{\mathbf{V}}_L=[\mathbf{I}_J~\mathbf{0}], \quad \mathbf{H}_L = \mathbf{V}_L\mathbf{U}'^{\mathrm{T}}=[\mathbf{I}_J~\mathbf{A}]^{\mathrm{T}},
\end{equation}
where $\mathbf{G}_L$ is over $\mathbb{F}_{q}$ and $\mathbf{A}$ is a $J\times(L-J)$ matrix. %
Proposition $2$ in \cite{Jin-arXiv} further guarantees that $\mathbf{H}_L$ is also over $\mathbb{F}_{q}$. %
As an illustration, let $q = 2$, $L = 9$ and $\mathcal{J} = \{1, 2, 4, 5, 7 ,8\}$. %
One can readily obtain that $\mathbf{G}_9 = [\mathbf{I}_6~\mathbf{0}]$ and $\mathbf{H}_9 = [\mathbf{I}_6~\mathbf{A}]^{\mathrm{T}}$  with $\mathbf{A} = [\mathbf{I}_3~\mathbf{I}_3]^{\mathrm{T}}$. %
As another instance, let $\widetilde{\mathbf{V}}_L'$  denote the $J\times L$ full rank matrix obtained by deleting the first $L-J$ rows of $\widetilde{\mathbf{V}}_L$.  Let $\mathbf{U}'= [\mathbf{u}'_j]_{0 \leq j \leq L-1}=L_{\mathbb{F}_q}^{-1}\widetilde{\mathbf{V}}_L'$, $[\mathbf{u}_j]_{j\in \mathcal{J}} = L_{\mathbb{F}_{q}}^{-1}([\mathbf{u}'_j]^{-1}_{j\in \mathcal{J}})^{\mathrm{T}}$ and $[\mathbf{u}_j]_{j\notin \mathcal{J}} = \mathbf{0}$ . We have 
\begin{equation}
\mathbf{G}_L = \mathbf{U}\widetilde{\mathbf{V}}_L=[\mathbf{A}~\mathbf{I}_J], \quad \mathbf{H}_L = \mathbf{V}_L\mathbf{U}'^{\mathrm{T}}=[\mathbf{0}~\mathbf{I}_J]^{\mathrm{T}},
\end{equation}
where $\mathbf{A} \in \mathbb{F}_{q}^{J\times(L-J)}$ is the  matrix  same as the one in \eqref{eqn:instance 1}. As an illustration, let $q = 2$, $L = 9$ and $\mathcal{J}= \{1, 2, 4, 5, 7, 8\}$. One can readily obtain that $\mathbf{G}_9 = [\mathbf{A}~\mathbf{I}_6]$ and $\mathbf{H}_9 = [\mathbf{0}~\mathbf{I}_6]^{\mathrm{T}}$, with $\mathbf{A} = [\mathbf{I}_3~\mathbf{I}_3]^{\mathrm{T}}$. %
Now assume that $\mathbf{G}_L$ and $\mathbf{H}_L$ are constructed subject to \eqref{eqn:GH_def} and  \eqref{U_def}, and they are both matrices over $\mathbb{F}_{q}$. %

\begin{theorem}
\label{the:Circular_MRD_GH}
Let $\mathbf{P}=\mathbf{G}_L$ and $\mathbf{Q}=\mathbf{H}_L$. For $n \leq m_L\leq J$, if  
\begin{equation}
\label{eqn:independent C_GH}
\mathrm{rank}(\sum\nolimits_{i = 0}^{n-1} a_i\mathbf{P}\mathbf{C}_L^{l_i}\mathbf{Q}) =J, ~\forall [a_0, \ldots, a_{n-1}] \in \mathbb{F}_q^n\backslash \{\mathbf{0}\},
\end{equation}
then the $(J \times n, q^{Jk}, d)$ circular-shift-based rank-metric code $\mathcal{C}$ in Definition \ref{def:circular-shift-MRD-code} is an MRD code, that is, $d = n - k + 1$. %
In particular, for every constant $0 \leq c \leq L-1$, $l_j = j+c \mod L$, $0 \leq j \leq n-1$, is a valid selection to satisfy \eqref{eqn:independent C_GH}. %
\begin{IEEEproof}
We first present two preliminary statements before proving the MRD property. %
\begin{enumerate}
  \item For $0 \leq l_0, l_1, \ldots, l_{n-1} \leq L-1$, \eqref{eqn:independent C_GH}  holds if and only if
\begin{equation}
\label{eqn:independent b}
\sum\nolimits_{i=0}^{n-1} a_i\beta^{jl_i} \neq 0 \quad \forall j \in \mathcal{J},~[a_0, \ldots, a_{n-1}] \in \mathbb{F}_q^n \backslash \{\mathbf{0}\},
\end{equation}
that is, $\beta^{jl_0}, \beta^{jl_1}, \ldots, \beta^{jl_{n-1}}$ are $\mathbb{F}_q$-linearly independent elements in $\mathbb{F}_{q^{m_L}}$ for all $j \in \mathcal{J}$. %
  \item For all $0\leq l \leq L-1$ and $j \in \mathcal{J}$, we have
\begin{align}
\label{eqn:GH_mu_C}
\beta^{jl}\mathbf{u}_j =\mathbf{G}_L\mathbf{C}_L^l\mathbf{H}_L\mathbf{u}_j,
\end{align}
where $\mathbf{u}_j$ denotes the $(j+1)^{st}$ column of $\mathbf{U}$ defined in \eqref{U_def}. %
\end{enumerate}

According to \cite{Jin-arXiv} and \cite{tang2020circular}, 
\begin{equation}
\label{eqn:VL VL'}
\mathbf{V}_L\widetilde{\mathbf{V}}_L = \widetilde{\mathbf{V}}_L\mathbf{V}_L = L_{\mathbb{F}_q}\mathbf{I}_L,\quad
\mathbf{C}_L^l=L_{\mathbb{F}_q}^{-1}\mathbf{V}_L\mathrm{Diag}([\beta^{lj}]_{0 \leq j \leq L-1})\widetilde{\mathbf{V}}_L, \quad  \forall 0\leq l \leq L-1.
\end{equation}
Based on \eqref{U_def}, \eqref{eqn:GH_def} and \eqref{eqn:VL VL'}, $\mathrm{rank}(\sum\nolimits_{i = 0}^{n-1} a_i\mathbf{G}_L\mathbf{C}_L^{l_i}\mathbf{H}_L)$ can be written as
\begin{equation*}
\begin{split}
&\mathrm{rank}(\sum\nolimits_{i = 0}^{n-1} a_i\mathbf{G}_L\mathbf{C}_L^{l_i}\mathbf{H}_L)
=\mathrm{rank}(\sum\nolimits_{i = 0}^{n-1} a_i\mathbf{U}\mathrm{Diag}([\beta^{l_ij}]_{0 \leq j \leq L-1})\mathbf{U}'^{\mathrm{T}})\\
=&\mathrm{rank}(\sum\nolimits_{i = 0}^{n-1} a_i[\mathbf{u}_j]_{j\in {\cal J}}\mathrm{Diag}([\beta^{l_ij}]_{j\in \mathcal{J}})[\mathbf{u}'_j]^\mathrm{T}_{j\in {\cal J}}).
\end{split}
\end{equation*}
The full rank of $[\mathbf{u}_j]_{j\in {\cal J}}$ and $[\mathbf{u}'_j]^\mathrm{T}_{j\in {\cal J}}$ implies
\begin{equation}
\mathrm{rank}(\sum\nolimits_{i = 0}^{n-1} a_i\mathbf{G}_L\mathbf{C}_L^{l_i}\mathbf{H}_L)
=\mathrm{rank}(\mathrm{Diag}([f(\beta^{j})]_{j\in \mathcal{J}})),
\end{equation}
where $f(x)=\sum_{i = 0}^{n-1} a_ix^{l_i}$ with $a_i \in \mathbb{F}_{q}$. %
Consequently, \eqref{eqn:independent C_GH} holds if and only if \eqref{eqn:independent b} holds. %
This statement also provides a clear explanation for why the parameter $n$ must be restricted to be no larger than $m_L$. %

The right-hand side of \eqref{eqn:GH_mu_C} can be written as
\begin{align*}
\mathbf{G}_L\mathbf{C}_L^l\mathbf{H}_L\mathbf{u}_j=L_{\mathbb{F}_q}\mathbf{U}\mathrm{Diag}([\beta^{lj}]_{0 \leq j \leq L-1})\mathbf{U}'^{\mathrm{T}}\mathbf{u}_j=L_{\mathbb{F}_q}\sum\nolimits_{i\in \mathcal{J}}\beta^{il}\mathbf{u}_i{\mathbf{u}'}_i^\mathrm{T}\mathbf{u}_j=\mathbf{u}_j\beta^{jl}, 
\quad \forall 0\leq l \leq L-1, ~j \in \cal{J},
\end{align*}
where the second equality holds because for $j_1, j_2 \in \mathcal{J}$, ${\mathbf{u}'}_{j_1}^\mathrm{T}\mathbf{u}_{j_2}=L_{\mathbb{F}_{q}}^{-1}$ when $j_1=j_2$, and ${\mathbf{u}'}_{j_1}^\mathrm{T}\mathbf{u}_{j_2}=0$ otherwise. %
Therefore, \eqref{eqn:GH_mu_C} is proved.

Because $\mathcal{C}$ is closed under addition and scalar multiplication by elements in $\mathbb{F}_q$, it suffices to prove that every nonzero codeword $\Delta(\mathbf{m}(\mathbf{I}_k\otimes\mathbf{G}_L)\mathbf{\Psi}_{k\times n}(\mathbf{I}_n\otimes\mathbf{H}_L))$ has rank at least $n-k+1$. %
Assume $\mathrm{rank}(\Delta(\mathbf{m}(\mathbf{I}_k\otimes\mathbf{G}_L)\mathbf{\Psi}_{k\times n}(\mathbf{I}_n\otimes\mathbf{H}_L))=r$. %
There are $n - r$ $n$-dimensional linearly independent column vectors over $\mathbb{F}_q$, say, $\mathbf{l}_1, \ldots, \mathbf{l}_{n-r}$ such that $\Delta(\mathbf{m}(\mathbf{I}_k\otimes\mathbf{G}_L)\mathbf{\Psi}_{k\times n}(\mathbf{I}_n\otimes\mathbf{H}_L)\mathbf{l}_j = \mathbf{0}$ for all $1 \leq j \leq n - r$. %
Write $\mathbf{l}_{j}$ as $[l_{j_0}~l_{j_1}~\ldots~l_{j_{n-1}}]^\mathrm{T}$ with $l_{js} \in \mathbb{F}_q$, which further implies
\begin{equation}
\label{eqn:zi1}
\sum\nolimits_{i=0}^{n-1}\mathbf{c}_i^\mathrm{T}l_{ji}= \mathbf{0} \quad  \forall 1 \leq j \leq n - r,
\end{equation}
where $\mathbf{c}_i=\sum_{s=0}^{k-1}\mathbf{m}_s\mathbf{G}_L\mathbf{C}_L^{q^sl_{i}}\mathbf{H}_L$. %
Let $\mathbf{t}_j$ denote the following row vector in $\mathbb{F}_{q^{m_L}}^k$
\begin{equation}
\label{eqn:t_j}
\mathbf{t}_j=[t_{0,j}~t_{1,j}~\ldots~t_{k-1,j}], \quad j \in \mathcal{J},
\end{equation}
where $t_{s,j}=\mathbf{m}_s\mathbf{u}_{j}$ for $0 \leq s \leq k-1$. The $q$-linearized polynomial $L_{\mathbf{t}_j}(x)$ over $\mathbb{F}_{q^{m_L}}$ is defined as 
\begin{equation}
L_{\mathbf{t}_j}(x)=\sum\nolimits_{s=0}^{k-1} t_{s,j}x^{q^s}, \quad j \in \mathcal{J}.
\end{equation}
For any nonzero row vector $\mathbf{m}=[\mathbf{m}_0~\mathbf{m}_1~\ldots~\mathbf{m}_{k-1}]$, assume that $\mathbf{t}_j=\mathbf{0}$ for all $j \in \mathcal{J}$. This assumption implies that there exists a nonzero row vector $\mathbf{m}_s$, $0 \leq s \leq k-1$, in $\mathbb{F}_q^J$ such that $\mathbf{m}_s[\mathbf{u}_j]_{j \in \mathcal{J}}=\mathbf{0}$. The full rank of the $J\times J$ matrix $[\mathbf{u}_j]_{j\in {\cal J}}$ contradicts the existence of such a nonzero row vector $\mathbf{m}_s$. Consequently, there exists an index $j'\in \mathcal{J}$ subject to $\mathbf{t}_{j'} \neq \mathbf{0}$.

Using \eqref{eqn:GH_mu_C}, Eq. \eqref{eqn:zi1} implies
\begin{equation}
\begin{split}
&\mathbf{u}_{j'}^\mathrm{T}\sum\nolimits_{i=0}^{n-1}\mathbf{c}_i^\mathrm{T}l_{ji}=
\sum\nolimits_{i=0}^{n-1}l_{ji} \left(\sum\nolimits_{s=0}^{k-1}\mathbf{m}_s\mathbf{u}_{j'} \beta^{j'l_{i}q^s} \right)
=\sum\nolimits_{i=0}^{n-1}l_{ji} \left(\sum\nolimits_{s=0}^{k-1}t_{s,j'}\beta^{j'l_{i}q^s}\right)\\
=&\sum\nolimits_{i=0}^{n-1}l_{ji}L_{\mathbf{t}_{j'}}(\beta^{j'l_i})= 0, \quad j'\in \mathcal{J}, ~ \forall 1 \leq j \leq n - r.
\end{split}
\end{equation}
Furthermore, by making use of Lemma \ref{lemma:q-polynomial}, the nonzero element $\sum\nolimits_{i=0}^{n-1} l_{ji}\beta^{j'l_i}$ is a root of $L_{\mathbf{t}_{j'}}(x)$. %
Note that any $\mathbb{F}_q$-linear combination of $\sum\nolimits_{i=0}^{n-1} l_{ji}\beta^{j'l_i}$ for all $1 \leq j\leq n-r$ is also a root of the $q$-linearized polynomial  $L_{\mathbf{t}_{j'}}(x)$. On the other hand, since $L_{\mathbf{t}_{j'}}(x)$ has degree at most $q^{k-1}$, it has at most $q^{k-1}$ roots in $\mathbb{F}_{q^{m_L}}$. This implies that $n - r \leq k - 1$ or equivalently $r \geq n - k + 1 = d$.

It remains to prove that for every constant $0 \leq c \leq L-1$, $l_j = j+c \mod L$ is a valid selection to satisfy \eqref{eqn:independent C_GH}. %
As discussed in the first paragraph of this proof, it is equivalent to show that $\beta^{cj}, \beta^{(1+c)j}, \ldots, \beta^{(n-1+c)j}$ are $\mathbb{F}_q$-linearly independent elements in $\mathbb{F}_{q^{m_L}}$ for all $j \in \mathcal{J}$. %
According to Lemma 3.51 in \cite{Lidl:Finite_Field}, this holds if and only if 
\begin{equation}
\label{eqn:independent i}
\det \left (
\begin{bmatrix}
\beta^{cj} &\beta^{cjq}&\ldots&\beta^{cjq^{n-1}}\\
\beta^{(1+c)j} &\beta^{(1+c)jq}&\ldots&\beta^{(1+c)jq^{n-1}}\\
\vdots&\vdots&\ldots&\vdots\\
\beta^{(n-1+c)j} &\beta^{(n-1+c)jq}&\ldots&\beta^{(n-1+c)jq^{n-1}}\\
\end{bmatrix}
\right)\neq 0, \quad \forall j \in \mathcal{J}.
\end{equation}
Observe that
\begin{equation*}
\begin{bmatrix}
\beta^{cj} &\beta^{cjq}&\ldots&\beta^{cjq^{n-1}}\\
\beta^{(1+c)j} &\beta^{(1+c)jq}&\ldots&\beta^{(1+c)jq^{n-1}}\\
\vdots&\vdots&\ldots&\vdots\\
\beta^{(n-1+c)j} &\beta^{(n-1+c)jq}&\ldots&\beta^{(n-1+c)jq^{n-1}}\\
\end{bmatrix}
=
\begin{bmatrix}
1 &1&\ldots&1\\
\beta^j &\beta^{jq}&\ldots&\beta^{jq^{n-1}}\\
\vdots&\vdots&\ldots&\vdots\\
\beta^{(n-1)j} &\beta^{(n-1)jq}&\ldots&\beta^{(n-1)jq^{n-1}}\\
\end{bmatrix}
\mathrm{Diag}([\beta^{cjq^i}]_{0 \leq i \leq n-1}).
\end{equation*}
Consequently,
\begin{equation*}
\det \left (
\begin{bmatrix}\begin{smallmatrix}
\beta^{cj} &\beta^{cjq}&\ldots&\beta^{cjq^{n-1}}\\
\beta^{(1+c)j} &\beta^{(1+c)jq}&\ldots&\beta^{(1+c)jq^{n-1}}\\
\vdots&\vdots&\ldots&\vdots\\
\beta^{(n-1+c)j} &\beta^{(n-1+c)jq}&\ldots&\beta^{(n-1+c)jq^{n-1}}\\
\end{smallmatrix}\end{bmatrix}
\right)=\left (\prod\nolimits_{0 \leq i_1<i_2 \leq n-1}(\beta^{jq^{i_2}}-\beta^{jq^{i_1}})\right)\left (\prod\nolimits_{0 \leq i \leq n-1}\beta^{cjq^i}\right),  \forall j \in \mathcal{J}.
\end{equation*}
Since $n \leq m_L$, the set $\{jq^i ~\mathrm{mod}~L: 0\leq i \leq n-1\}$ is a subset of $\mathcal{J}$ defined in \eqref{J}, which implies that for all $j\in \mathcal{J}$, the elements $\beta^{cj}, \beta^{cjq}, \ldots, \beta^{cjq^{n-1}}$ are nonzero, and the elements $\beta^{j}, \beta^{jq}, \ldots, \beta^{jq^{n-1}}$ are distinct. %
Consequently, \eqref{eqn:independent i} holds. 
\end{IEEEproof}
\end{theorem}

\begin{remark}
Every codeword of a $(J \times n, q^{Jk}, d)$ Gabidulin code is a matrix in $\mathbb{F}_q^{J\times n}$ with $n\leq J$. In comparison, every codeword of a $(J \times n, q^{Jk}, d)$ circular-shift-based MRD code $\mathcal{C}$ is a matrix in $\mathbb{F}_q^{J\times n}$ with $n \leq m_L$. In general, $m_L \leq J$, so we cannot set $m_L < n \leq J$ in constructing $\mathcal{C}$.  %
\end{remark}

\begin{example}
\label{example 2}
Consider $q=2$ and $L=7$. Set
\begin{equation*}
\mathbf{G}_7=[\mathbf{I}_6~\mathbf{1}], \mathbf{H}_7=[\mathbf{I}_6~\mathbf{0}]^\mathrm{T},
\mathbf{\Psi}_{1\times 3} = 
\begin{bmatrix}
\mathbf{I}_7&\mathbf{C}_7 & \mathbf{C}_7^2
\end{bmatrix},
\end{equation*}
so that \eqref{eqn:independent C_GH} holds. %
Sequentially set $\mathbf{m}$ as a $\mathbb{F}_2$-linear combination of the following $6$ row vectors in $\mathbb{F}_2^6$
\begin{equation}
\label{def:Example 2 m}
\begin{split}
\{&[0~0~0~0~0~1], [0~0~0~0~1~0], [0~0~0~1~0~0], \\
&[0~0~1~0~0~0],[0~1~0~0~0~0], [1~0~0~0~0~0]\}.
\end{split}
\end{equation} 
Then, the codewords in the $(6 \times 3, 2^6, 3)$ circular-shift-based MRD code $\mathcal{C} = \{\Delta(\mathbf{m}(\mathbf{I}_1\otimes\mathbf{G}_7)\mathbf{\Psi}_{1\times 3}(\mathbf{I}_3\otimes\mathbf{H}_7), \mathbf{m} \in \mathbb{F}_2^6\}$ can be written as a $\mathbb{F}_2$-linear combination of the following $6$ matrices in $\mathbb{F}_2^{6 \times 3}$
\begin{equation}
\label{eqn:example GH 3}
\begin{bmatrix}\begin{smallmatrix}
   0 &  1 &  1\\
   0 &  0 &  1\\
   0 &  0 &  0\\
   0 &  0  & 0\\
   0 &  0  & 0\\
   1  & 0  & 0
   \end{smallmatrix}\end{bmatrix},
\begin{bmatrix}\begin{smallmatrix}
   0  & 1 &  0\\
   0 &  0  & 1\\
   0  & 0 &  0\\
   0  & 0 &  0\\
   1 &  0 &  0\\
   0  & 1 &  0
   \end{smallmatrix}\end{bmatrix},
\begin{bmatrix}\begin{smallmatrix}
   0 &  1 &  0\\
   0  & 0 &  1\\
   0 &  0 &  0\\
   1  & 0 &  0\\
   0 &  1 &  0\\
   0  & 0 &  1
   \end{smallmatrix}\end{bmatrix},
\begin{bmatrix}\begin{smallmatrix}
   0 &  1 &  0\\
   0 &  0  & 1\\
   1 &  0 &  0\\
   0  & 1 &  0\\
   0 &  0  & 1\\
   0  & 0  & 0
   \end{smallmatrix}\end{bmatrix},
\begin{bmatrix}\begin{smallmatrix}
   0 &  1  & 0\\
   1  & 0 &  1\\
   0  & 1 &  0\\
   0 &  0 &  1\\
   0  & 0  & 0\\
   0  & 0  & 0
   \end{smallmatrix}\end{bmatrix},
\begin{bmatrix}\begin{smallmatrix}
   1 &  1 &  0\\
   0 &  1  & 1\\
   0  & 0 &  1\\
   0  & 0 &  0\\
   0  & 0 &  0\\
   0  & 0  & 0
   \end{smallmatrix}\end{bmatrix}.
\end{equation}
One may check (by computer enumeration) that each of the nonzero codewords in $\mathcal{C}$ has full rank $3$, thus satisfying the MRD property. 
\hspace*{\fill} $\blacksquare$
\end{example}

Notice that constructing $(J \times n, q^{Jk}, d)$ Gabidulin codes with $n \leq J$ over $\mathbb{F}_q$ relies on arithmetic over the extension field $\mathbb{F}_{q^J}$ of $\mathbb{F}_q$. %
As $n$ increases, the extension field $\mathbb{F}_{q^J}$ required for constructing Gabidulin codes becomes significantly larger, resulting in much more complicated implementation. %
In contrast, for circular-shift-based MRD codes introduced in this paper, generating a codeword requires only entry-wise additions and multiplications over $\mathbb{F}_q$, along with column-wise circular-shift operations. %
By avoiding arithmetic over the extension field $\mathbb{F}_{q^{J}}$, circular-shift-based MRD codes enable more  parameter selection, that is, $J$ and $k$ can be selected as large as desired. %
Furthermore, compared to the vector representation of Gabidulin codes $\mathcal{V} = \{\mathbf{u}\mathbf{G}: \mathbf{u}\in \mathbb{F}_{q^J}^k\}$ reviewed in Section \ref{subsec:Gabidulin codes}, the selection of a proper basis of $\mathbb{F}_{q^J}$ over $\mathbb{F}_q$ to transform a codeword in $\mathbb{F}_{q^J}^n$ into a matrix codeword in $\mathbb{F}_{q}^{J\times n}$ is no longer required. 

In coding theory, due to the low computational complexity, circular-shift operations have been widely adopted in the design of various codes, such as circular-shift linear network codes \cite{Tang_LNC_TIT}-\cite{Jin_TIT}, \cite{Su_20_TCOM}, quasi-cyclic codes \cite{1988_TIT}-\cite{Huang quasi-cyclic}, array codes \cite{Blaum-Evenodd-ToC95}-\cite{Zhai_TCom} and regenerating codes \cite{Hou_Basic}.  %
Despite this, to the best of our knowledge, circular-shift operations have not been investigated to construct MRD codes.

\subsection{An alternative construction of $\mathbf{P}$ and $\mathbf{Q}$}
\label{subsec:GG}
In this subsection, we introduce an alternative way to design the matrices $\mathbf{P}$ and $\mathbf{Q}$ over $\mathbb{F}_q$ based on the matrix $\mathbf{G}_L$ defined in \eqref{eqn:GH_def}. %
This design ensures that the constructed circular-shift-based rank-metric code $\mathcal{C}$ given in Definition \ref{def:circular-shift-MRD-code} not only satisfies the MRD property, but also achieves lower encoding complexity under certain choices of $\mathbf{G}_L$. %
Let $\tau(x)$ denote the following polynomial of degree $L - J$
\begin{equation}
\label{def: tau(x)}
\tau(x)=\prod\nolimits_{j \notin \cal J}(x-\beta^j),
\end{equation}
where $\beta \in \mathbb{F}_{q^{m_L}}$ represents a primitive $L^{th}$ root of unity and $\mathcal{J}$ is defined in \eqref{J}. %
Since the polynomial $\prod\nolimits_{j \in \cal J}(x-\beta^j)$ is known as the $L^{th}$ cyclotomic polynomial over $\mathbb{F}_q$ (See e.g. \cite{Lidl:Finite_Field}) and $\prod\nolimits_{0 \leq j \leq L-1}(x-\beta^j)=x^L-1$, it follows that $\tau(x)$ is also a polynomial over $\mathbb{F}_q$. %
$\tau(\mathbf{C}_L)$ represents the $L\times L$ cyclic permutation matrix obtained by evaluation of $\tau(x)$ under the setting $x = \mathbf{C}_L$. %
Specifically, when $q=2$ and $L$ is a prime, $\tau(x) = 1 + x$ and $\tau(\mathbf{C}_L) = \mathbf{I}_L+\mathbf{C}_L$. %
Assume that the matrix $\mathbf{U} = [\mathbf{u}_j]_{0\leq j \leq L-1} \in \mathbb{F}_{q^{m_L}}^{J\times L}$ is constructed according to \eqref{U_def}, and that the $J \times L$ matrix $\mathbf{G}_L$ and the $L \times J$ matrix $\mathbf{H}_L$ are constructed according to \eqref{eqn:GH_def} and defined over $\mathbb{F}_{q}$. %
We next establish a connection between $\mathbf{G}_L\mathbf{C}_L^l\mathbf{H}_L$ and $\mathbf{G}_L\mathbf{C}_L^l\tau(\mathbf{C}_L)\mathbf{G}_L^\mathrm{T}$, $0 \leq l\leq L-1$, based on the following $J\times J$ matrix $\mathbf{T}$
\begin{equation}
\label{def:T}
\mathbf{T}=L_{\mathbb{F}_{q}}[\mathbf{u}_{L-j}]_{j \in \mathcal{J}} \mathrm{Diag}([\tau(\beta^j)]_{j\in \mathcal{J}}) [\mathbf{u}_{j}]_{j \in \mathcal{J}}^\mathrm{T},
\end{equation}
where $\mathrm{Diag}([\tau(\beta^j)]_{j\in \mathcal{J}})$ refers to the $J \times J$ diagonal matrix with diagonal entries respectively equal to $\tau(\beta^j)$, $j\in \mathcal{J}$.

\begin{lemma}
\label{lemma:C2=TC1}
The $J \times J$ matrix $\mathbf{T}$ defined in \eqref{def:T} is full rank. For every $0 \leq l \leq L-1$, we have
\begin{equation}
\mathbf{G}_L\mathbf{C}_L^l\tau(\mathbf{C}_L)\mathbf{G}_L^\mathrm{T} = \mathbf{G}_L\mathbf{C}_L^l\mathbf{H}_L\mathbf{T}^\mathrm{T}.
\end{equation}
\begin{IEEEproof}
Please refer to Appendix-\ref{appendix:C2=TC1}.
\end{IEEEproof}
\end{lemma} 

Let $\mathcal{C}_1$ and $\mathcal{C}_2$ denote the $(J \times n, q^{Jk}, d)$ circular-shift-based rank-metric code $\mathcal{C}$ in Definition \ref{def:circular-shift-MRD-code} under the respective setting of $\mathbf{P}=\mathbf{G}_L$, $\mathbf{Q}=\mathbf{H}_L$ and of $\mathbf{P}=\mathbf{G}_L$, $\mathbf{Q}=\tau(\mathbf{C}_L)\mathbf{G}_L^\mathrm{T}$. Lemma \ref{lemma:C2=TC1} proves the full rank of $\mathbf{T}$ and implies that $\mathcal{C}_2 = \mathbf{T}\mathcal{C}_1$. Consequently, the MRD property of $\mathcal{C}_1$ established in Theorem \ref{the:Circular_MRD_GH} directly implies the MRD property of $\mathcal{C}_2$ summarized in the next theorem. 

\begin{theorem}
\label{the:Circular_MRD_GG}
Let $\mathbf{P}=\mathbf{G}_L$ and $\mathbf{Q}=\tau(\mathbf{C}_L)\mathbf{G}_L^\mathrm{T}$. For $n \leq m_L\leq J$, if  
\begin{equation}
\label{eqn:independent C_GG}
\mathrm{rank}(\sum\nolimits_{i = 0}^{n-1} a_i\mathbf{P}\mathbf{C}_L^{l_i}\mathbf{Q}) =J, ~\forall [a_0, \ldots, a_{n-1}] \in \mathbb{F}_q^n\backslash \{\mathbf{0}\},
\end{equation}
then the $(J \times n, q^{Jk}, d)$ circular-shift-based rank-metric code $\mathcal{C}$ in Definition \ref{def:circular-shift-MRD-code} is an MRD code, that is, $d = n - k + 1$. %
In particular, for every constant $0 \leq c \leq L-1$, $l_j = j+c \mod L$, $0 \leq j \leq n-1$, is a valid selection to satisfy \eqref{eqn:independent C_GG}. %
\end{theorem}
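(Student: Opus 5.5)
The plan is to reduce Theorem \ref{the:Circular_MRD_GG} to Theorem \ref{the:Circular_MRD_GH} through Lemma \ref{lemma:C2=TC1}. Write $\mathcal{C}_1$ for the code with $\mathbf{P}=\mathbf{G}_L$, $\mathbf{Q}=\mathbf{H}_L$, and $\mathcal{C}_2$ for the code with $\mathbf{P}=\mathbf{G}_L$, $\mathbf{Q}=\tau(\mathbf{C}_L)\mathbf{G}_L^\mathrm{T}$. The first step is a blockwise identity. By Lemma \ref{lemma:C2=TC1}, $\mathbf{G}_L\mathbf{C}_L^{l}\tau(\mathbf{C}_L)\mathbf{G}_L^\mathrm{T}=\mathbf{G}_L\mathbf{C}_L^{l}\mathbf{H}_L\mathbf{T}^\mathrm{T}$ for every $l$; this applies to all exponents $q^sl_i$, reduced mod $L$ since $\mathbf{C}_L^L=\mathbf{I}_L$. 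Hence for every $\mathbf{m}$ the $i$th block of the codeword of $\mathcal{C}_2$ is $\mathbf{c}_i^{(2)}=\sum_s\mathbf{m}_s\mathbf{G}_L\mathbf{C}_L^{q^sl_i}\mathbf{H}_L\mathbf{T}^\mathrm{T}=\mathbf{c}_i^{(1)}\mathbf{T}^\mathrm{T}$. Transposing and applying $\Delta$ gives
\begin{equation*}
\Delta(\mathbf{c}^{(2)})=\mathbf{T}\,\Delta(\mathbf{c}^{(1)}),
\end{equation*}
so $\mathcal{C}_2=\mathbf{T}\mathcal{C}_1$, with the same message $\mathbf{m}$ producing corresponding codewords.

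Next, the hypothesis \eqref{eqn:independent C_GG} transfers to \eqref{eqn:independent C_GH}. By the same identity, $\sum_i a_i\mathbf{G}_L\mathbf{C}_L^{l_i}\tau(\mathbf{C}_L)\mathbf{G}_L^\mathrm{T}=\big(\sum_i a_i\mathbf{G}_L\mathbf{C}_L^{l_i}\mathbf{H}_L\big)\mathbf{T}^\mathrm{T}$. Since Lemma \ref{lemma:C2=TC1} gives that $\mathbf{T}$ is full rank, the two conditions are equivalent. Theorem \ref{the:Circular_MRD_GH} then shows that $\mathcal{C}_1$ is MRD with $d=n-k+1$.

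Finally, left multiplication by the invertible $J\times J$ matrix $\mathbf{T}$ preserves rank. So every nonzero codeword of $\mathcal{C}_2$ has rank at least $n-k+1$. Moreover, $\mathbf{m}\mapsto\mathbf{T}\Delta(\mathbf{c}^{(1)})$ is injective whenever $\mathbf{m}\mapsto\Delta(\mathbf{c}^{(1)})$ is. Injectivity of the latter follows from the MRD argument: a nonzero $\mathbf{m}$ yields a codeword of rank $\ge d\ge1$, so the codeword is nonzero. Thus $|\mathcal{C}_2|=q^{Jk}$, the Singleton bound is met, and $\mathcal{C}_2$ is MRD. The "in particular" clause for $l_j=j+c \bmod L$ follows because the equivalence of the two conditions reduces it to the corresponding clause of Theorem \ref{the:Circular_MRD_GH}.

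The only point needing care, and the closest thing to an obstacle, is that $\mathbf{T}$ is defined over $\mathbb{F}_{q^{m_L}}$ while the codes live over $\mathbb{F}_q$. This causes no problem: rank is unchanged under field extension, and $\mathbf{T}=\mathbf{T}\Delta(\cdot)$ relations hold as matrix identities. In fact $\mathbf{T}$ must lie over $\mathbb{F}_q$ whenever $\mathcal{C}_1$ contains a full-rank-$J$ codeword; otherwise one simply argues with ranks over the extension field.
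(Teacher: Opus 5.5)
Your proposal is correct and follows the paper's own route: Lemma \ref{lemma:C2=TC1} gives $\mathcal{C}_2=\mathbf{T}\mathcal{C}_1$ with $\mathbf{T}$ invertible, so the MRD property transfers from Theorem \ref{the:Circular_MRD_GH}, and so does the rank hypothesis, which you state explicitly where the paper leaves it implicit. Your closing hedge about the field of $\mathbf{T}$ is unnecessary: setting $l=0$ in the lemma and using $\mathbf{G}_L\mathbf{H}_L=\mathbf{I}_J$ gives $\mathbf{T}^\mathrm{T}=\mathbf{G}_L\tau(\mathbf{C}_L)\mathbf{G}_L^\mathrm{T}$, which is always over $\mathbb{F}_q$ (and your ``rank-$J$ codeword'' condition can only hold when $n=J$, since codewords have rank at most $n$).
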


\begin{example}
\label{example 3}
Consider $q=2$ and $L=7$, so that $\tau(\mathbf{C}_{7})=\mathbf{I}_{7}+\mathbf{C}_{7}$. Set
\begin{equation*}
\mathbf{G}_7=[\mathbf{I}_6~\mathbf{0}], \quad
\mathbf{\Psi}_{1\times 3} = 
[\begin{matrix}
\mathbf{I}_7&\mathbf{C}_7 & \mathbf{C}_7^2
\end{matrix}],
\end{equation*}
so that \eqref{eqn:independent C_GG} holds. %
Sequentially set $\mathbf{m}$ as a $\mathbb{F}_2$-linear combination of the $6$ row vectors in $\mathbb{F}_2^6$ in \eqref{def:Example 2 m}. %
Then, the codewords in the $(6 \times 3, 2^6, 3)$ circular-shift-based MRD code $\mathcal{C} = \{\Delta(\mathbf{m}(\mathbf{I}_1\otimes\mathbf{G}_7)\mathbf{\Psi}_{1\times 3}(\mathbf{I}_3\otimes(\tau(\mathbf{C}_7)\mathbf{G}_7^\mathrm{T})), \mathbf{m} \in \mathbb{F}_2^6\}$ can be written as a $\mathbb{F}_2$-linear combination of the following $6$ matrices in $\mathbb{F}_2^{6 \times 3}$
\begin{equation}
\label{eqn:example GG 5}
\begin{bmatrix}\begin{smallmatrix}
   0 &  1 &  1\\
   0  & 0 &  1\\
   0 &  0 &  0\\
   0 &  0 &  0\\
   0 &  0 &  0\\
   1 &  0 &  0
   \end{smallmatrix}\end{bmatrix},
\begin{bmatrix}\begin{smallmatrix}
   0 &  0 &  1\\
   0 &  0  & 0\\
   0 &  0 &  0\\
   0 &  0 &  0\\
   1 &  0 &  0\\
   1 &  1 &  0
   \end{smallmatrix}\end{bmatrix},
\begin{bmatrix}\begin{smallmatrix}
   0  & 0 &  0\\
   0 &  0 &  0\\
   0 &  0  & 0\\
   1 &  0 &  0\\
   1 &  1 &  0\\
   0 &  1 &  1
   \end{smallmatrix}\end{bmatrix},
\begin{bmatrix}\begin{smallmatrix}
   0 &  0 &  0\\
   0 &  0 &  0\\
   1 &  0 &  0\\
   1 &  1 &  0\\
   0 &  1 &  1\\
   0 &  0 &  1
   \end{smallmatrix}\end{bmatrix},
\begin{bmatrix}\begin{smallmatrix}
   0  & 0 &  0\\
   1 &  0 &  0\\
   1 &  1 &  0\\
   0 &  1 &  1\\
   0  & 0 &  1\\
   0 &  0 &  0
   \end{smallmatrix}\end{bmatrix},
\begin{bmatrix}\begin{smallmatrix}
   1 &  0  & 0\\
   1  & 1 &  0\\
   0 &  1 &  1\\
   0 &  0 &  1\\
   0 &  0 &  0\\
   0 &  0 &  0
   \end{smallmatrix}\end{bmatrix}.
\end{equation}
One may check (by computer enumeration) that each of the nonzero codewords in $\mathcal{C}$ has full rank $3$, thus satisfying the MRD property.  
\hspace*{\fill} $\blacksquare$
\end{example}

It is worth noting that under particular selection of $\mathbf{G}_L$, the encoding complexity of the code $\mathcal{C}_2 = \{\Delta(\mathbf{m}(\mathbf{I}_k\otimes\mathbf{G}_L)\mathbf{\Psi}_{k\times n}(\mathbf{I}_n\otimes(\tau(\mathbf{C}_L)\mathbf{G}_L^\mathrm{T}))): \mathbf{m}\in \mathbb{F}_{q}^{Jk}\}$ is lower than that of the code $\mathcal{C}_1 = \{\Delta(\mathbf{m}(\mathbf{I}_k\otimes\mathbf{G}_L)\mathbf{\Psi}_{k\times n}(\mathbf{I}_n\otimes\mathbf{H}_L)): \mathbf{m}\in \mathbb{F}_{q}^{Jk}\}$. %
This is because $\mathbf{G}_L$ can be chosen as $\mathbf{G}_L=[\mathbf{I}_{J}~\mathbf{0}]$, and every entry in the block generator matrix $(\mathbf{I}_k\otimes\mathbf{G}_L)\mathbf{\Psi}_{k\times n}(\mathbf{I}_n\otimes(\tau(\mathbf{C}_L)\mathbf{G}_L^\mathrm{T})$ contains the common term $\tau(\mathbf{C}_L)$, which can be reused during the encoding process to reduce complexity. %
Denote by $\Delta(\mathbf{c})=[\mathbf{c}_0^\mathrm{T}~ \mathbf{c}_1^\mathrm{T} ~ \ldots ~ \mathbf{c}_{n-1}^\mathrm{T}]$ the codeword. %
Consider the case that $q=2$, $\mathbf{G}_L=[\mathbf{I}_J~\mathbf{0}]$ and $\mathbf{H}_L=[\mathbf{I}_J~\mathbf{A}]^\mathrm{T}$, where $\mathbf{A}$ is a matrix in $\mathbb{F}_2^{J\times (L-J)}$. Let $h$ denote the number of nonzero elements in $\mathbf{A}$, and let $\delta$ denote the number of terms in $\tau(x)$. %
For $\mathcal{C}_1$, it takes $(k-1)L+h$ XOR operations to compute $\mathbf{c}_i^\mathrm{T}= (\sum_{s=0}^{k-1}\mathbf{m}_s\mathbf{G}_L\mathbf{C}_L^{2^sl_{i}}\mathbf{H}_L)^\mathrm{T}$ for each $0 \leq i \leq n-1$. In total, it requires $(k-1)nL+nh$ XOR operations to generate a codeword. %
For $\mathcal{C}_2$, it takes $(\delta-1)L$ XOR operations to compute $\bar{\mathbf{m}}_s=\mathbf{m}_s\mathbf{G}_L\tau(\mathbf{C}_L)$ for each $0 \leq s \leq k-1$, and $(k-1)J$ XOR operations to compute $\mathbf{c}_i^\mathrm{T}= (\sum_{s=0}^{k-1}\bar{\mathbf{m}}_s\mathbf{C}_L^{2^sl_{i}}\mathbf{G}_L^\mathrm{T})^\mathrm{T}$ for each $0 \leq i \leq n-1$. In total, it requires $(\delta-1)kL+(k-1)nJ$ XOR operations to generate a codeword. %
One can readily verify that a sufficient condition for $\mathcal{C}_2$ to achieve lower encoding complexity is that $L$ is prime and $k>2$. %
For instance, assume $q=2$, $L=5$, $\mathbf{G}_5=[\mathbf{I}_{4}~\mathbf{0}]$. In this case, $J=4$, $\tau(\mathbf{C}_5) = \mathbf{I}_5+\mathbf{C}_5$ and $\mathbf{H}_5=[\mathbf{I}_4~\mathbf{1}]^\mathrm{T}$. %
Let $k=3$, $n=4$, and let $\mathbf{m}_0, \mathbf{m}_1,\mathbf{m}_2$ denote three nonzero row vectors in $\mathbb{F}_{2}^4$. %
For $\mathcal{C}_2$, the computation can be simplified as follows
\begin{equation}
\label{eqn:reuse}
\begin{split}
&\bar{\mathbf{m}}_s =\mathbf{m}_s\mathbf{G}_5\tau(\mathbf{C}_5), \quad 0\leq s \leq 2 ,\\
&\mathbf{c}_i^\mathrm{T} =(\sum\nolimits_{s=0}^{2}\bar{\mathbf{m}}_s\mathbf{C}_5^{2^sl_{i}}\mathbf{G}_5^\mathrm{T})^\mathrm{T}, \quad 0 \leq i \leq 3.
\end{split}
\end{equation}
It respectively takes $15$ and $32$ XOR operations to compute $\bar{\mathbf{m}}_s$ for all $0 \leq s \leq 2$ and $\mathbf{c}_i^\mathrm{T}$ for all $0 \leq i \leq 3$. %
In all, generating a codeword of $\mathcal{C}_2$ requires $47$ XOR operations. %
In comparison, generating a codeword of $\mathcal{C}_1$ requires $56$ XOR operations. The encoding  complexity will be analyzed in detail in Section \ref{sec:Complexity Analysis}.  %

\section{Inherent relation and difference between the new codes and Gabidulin codes}
\label{sec:Connection with Gabidulin codes}
In this section, we clarify the inherent difference and connection between the proposed circular-shift-based MRD codes and Gabidulin codes, as well as their distinction from twisted Gabidulin codes. The key is to characterize both Gabidulin codes and circular-shift-based MRD codes from the perspective of a set of $q$-linearized polynomials defined over the row vector space $\mathbb{F}_{q}^N$ instead of over $\mathbb{F}_{q^N}$. %

\subsection{An alternative characterization of Gabidulin codes}
\label{subsec:characterization of Gabidulin codes}
Consider an arbitrary irreducible polynomial $p(x)=p_0+p_1x+\ldots+p_{N-1}x^{N-1}+x^N$ of degree $N$ over $\mathbb{F}_q$, and let $\gamma$ be a root of $p(x)$, so that we have a basis $\mathcal{B}_{\gamma}=[\gamma^{N-1}~\gamma^{N-2}~\ldots~\gamma~1]$ of $\mathbb{F}_{q^N}$ over $\mathbb{F}_q$. Denote by $\mathbf{A}(\gamma)$ the following matrix in $\mathbb{F}_q^{N\times N}$:
\begin{equation}
\label{def:companion matrix}
\mathbf{A}(\gamma)=\begin{bmatrix}
-p_{N-1} &-p_{N-2}&\ldots&-p_{0}\\
1 &0&\ldots&0\\
\vdots &\ddots&\ldots&\vdots\\
0 &\ldots&1&0\\
\end{bmatrix}.
\end{equation}
For every $\sum\nolimits_{i=0}^{N-1} a_i\gamma^i \in \mathbb{F}_{q^N}$, $a_i \in \mathbb{F}_q$, denote by $\mathbf{A}(\sum\nolimits_{i=0}^{N-1} a_i\gamma^i) = \sum\nolimits_{i=0}^{N-1} a_i\mathbf{A}(\gamma)^i$. It is well known that the set $\{\mathbf{A}(\sum\nolimits_{i=0}^{N-1} a_i\gamma^i): a_0, \ldots, a_{N-1} \in \mathbb{F}_q \}$ forms a matrix representation of $\mathbb{F}_{q^N}$ over $\mathbb{F}_q$, so that the following equations hold for all $\beta_1, \beta_2 \in \mathbb{F}_{q^N}$ (See, e.g., \cite{matrix-rep})
\begin{equation}
\label{eqn:property of matrix representation}
\mathbf{A}(a_1\beta_1+a_2\beta_2) = a_1\mathbf{A}(\beta_1)+a_2\mathbf{A}(\beta_2) ~ \forall a_1, a_2 \in \mathbb{F}_{q}, \quad
\mathbf{A}(\beta_1\beta_2) = \mathbf{A}(\beta_1)\mathbf{A}(\beta_2), \quad
\mathbf{A}(\beta_1^{-1})=\mathbf{A}(\beta_1)^{-1}.
\end{equation}
In addition, since $\mathbf{A}(\gamma)\mathcal{B}_{\gamma}^\mathrm{T}=\mathcal{B}_{\gamma}^\mathrm{T}\gamma$, it follows that $\mathbf{A}(\beta_1)\mathcal{B}_{\gamma}^\mathrm{T}=\mathcal{B}_{\gamma}^\mathrm{T}\beta_1$ for any $\beta_1 \in \mathbb{F}_{q^N}$. %
This property is a necessary and sufficient condition for the following proposition
\begin{equation}
\label{eqn:useful proposition}
\mathbf{v}_{\mathcal{B}_{\gamma}}(\beta_1)\mathbf{A}(\beta_2) = \mathbf{v}_{\mathcal{B}_{\gamma}}(\beta_1\beta_2), \quad \forall \beta_1, \beta_2 \in \mathbb{F}_{q^N},
\end{equation}
where $\mathbf{v}_{\mathcal{B}_{\gamma}}(\beta_i)$ represents the $q$-ary representation of $\beta_i \in \mathbb{F}_{q^N}$ with respect to the basis $\mathcal{B}_{\gamma}=[\gamma^{N-1}~\gamma^{N-2}~\ldots~\gamma~1]$ of $\mathbb{F}_{q^N}$ over $\mathbb{F}_q$. %
This proposition is crucial for the subsequent discussion of the difference and connection between circular-shift-based MRD codes proposed in Section \ref{sec:Circular-shift-based MRD codes} and Gabidulin codes.

Define a set of $q$-linearized polynomials over the row vector space $\mathbb{F}_{q}^N$ as
\begin{equation}
\label{def:L_dag}
\mathcal{L}_N^\dag=\{\mathbf{e}_0x+\mathbf{e}_1x^{q}+\ldots+\mathbf{e}_{k-1}x^{q^{k-1}}:~\mathbf{e}_0, \mathbf{e}_1, \ldots,\mathbf{e}_{k-1} \in \mathbb{F}_{q}^N\}.
\end{equation}
Based on the matrix representation $\{\mathbf{A}(\sum\nolimits_{i=0}^{N-1} a_i\gamma^i): a_0, \ldots, a_{N-1} \in \mathbb{F}_q \}$ of $\mathbb{F}_{q^N}$ and $\mathcal{L}_N^\dag$, we can obtain the following equivalent characterization for every $(N \times n, q^{Nk}, d)$ Gabidulin code $\mathcal{M}_{\mathcal{B}}(\mathcal{V})$. %
Recall (from Definition \ref{def:Gabidulin codes} and \eqref{eqn:MRD_generator_matrix}) that an $(n, k)$ Gabidulin (linear) code $\mathcal{V} = \{\mathbf{u}\mathbf{G}: \mathbf{u}\in \mathbb{F}_{q^N}^k\}$ over $\mathbb{F}_{q^N}$ is determined by $n$ $\mathbb{F}_q$-linearly independent elements $\beta_0, \beta_1, \ldots, \beta_{n-1}$ in $\mathbb{F}_{q^N}$ which prescribe the generator matrix $\mathbf{G}$. Equivalently, $\mathcal{V}$ can be expressed as 
\begin{equation}
\label{eqn:V=L}
\mathcal{V}=\{[L(\beta_0)~L(\beta_1)~\ldots~L(\beta_{n-1})]: L(x) \in \mathcal{L}\},
\end{equation}
where $\mathcal{L}$ denotes the following set of $q$-linearized polynomials over $\mathbb{F}_{q^N}$
\begin{equation}
\label{def:Gabidulin set L}
\mathcal{L} = \{u_0x+u_1x^q+\ldots+u_{k-1}x^{q^{k-1}}:~u_0, u_1, \ldots,u_{k-1} \in \mathbb{F}_{q^N}\}.
\end{equation}

\begin{proposition}
\label{prop:new charac of Gabidulin}
Let $\beta_0, \beta_1, \ldots, \beta_{n-1}$ be arbitrary $n$ $\mathbb{F}_q$-linearly independent elements in $\mathbb{F}_{q^N}$ that determine the Gabidulin (linear) code $\mathcal{V}$ over $\mathbb{F}_{q^N}$, and $\mathcal{B}$ be an arbitrary basis of $\mathbb{F}_{q^N}$ over $\mathbb{F}_q$. With respect to $\mathcal{B}$, the $(N \times n, q^{Nk}, d)$ Gabidulin code $\mathcal{M}_{\mathcal{B}}(\mathcal{V})$ induced from $\mathcal{V}$ can be equivalently characterized as 
\begin{equation}
\label{expression Gabidulin}
\mathcal{M}_{\mathcal{B}}(\mathcal{V}) = \{[L^\dag(\mathbf{V}\mathbf{A}(\beta_0)\mathbf{V}^{-1})^\mathrm{T}~L^\dag(\mathbf{V}\mathbf{A}(\beta_1)\mathbf{V}^{-1})^\mathrm{T}~\ldots~L^\dag(\mathbf{V}\mathbf{A}(\beta_{n-1})\mathbf{V}^{-1})^\mathrm{T}]: L^\dag(x) \in \mathcal{L}_N^\dag\},
\end{equation}
where $\mathbf{V}$ is the invertible matrix in $\mathbb{F}_q^{N\times N}$ satisfying $\mathcal{B}^\mathrm{T} = \mathbf{V}\mathcal{B}_{\gamma}^\mathrm{T}$ with $\mathcal{B}_{\gamma} = [\gamma^{N-1}~\gamma^{N-2}~\ldots~\gamma~1]$.
\begin{IEEEproof}
Please refer to Appendix-\ref{appendix:prop:new charac of Gabidulin}.
\end{IEEEproof}
\end{proposition}

\subsection{Relation and difference between the new codes and Gabidulin codes}
\label{subsec:Relation}
Let $\mathcal{C}_1$ and $\mathcal{C}_2$ denote the $(J \times n, q^{Jk}, d)$ circular-shift-based MRD codes $\mathcal{C}_1 = \{\Delta(\mathbf{m}(\mathbf{I}_k\otimes\mathbf{G}_L)\mathbf{\Psi}_{k\times n}(\mathbf{I}_n\otimes\mathbf{H}_L)): \mathbf{m}\in \mathbb{F}_{q}^{Jk}\}$ proposed in Section \ref{subsec:GH} and $\mathcal{C}_2= \{\Delta(\mathbf{m}(\mathbf{I}_k\otimes\mathbf{G}_L)\mathbf{\Psi}_{k\times n}(\mathbf{I}_n\otimes(\tau(\mathbf{C}_L)\mathbf{G}_L^\mathrm{T}))): \mathbf{m}\in \mathbb{F}_{q}^{Jk}\} = \mathbf{T}\mathcal{C}_1$ proposed in Section \ref{subsec:GG}, respectively. %
Based on the set $\mathcal{L}^\dag_N$ of $q$-linearized polynomials  over the row vector space $\mathbb{F}_q^{N}$ defined in \eqref{def:L_dag}, under the setting $N = J$, both $\mathcal{C}_1$ and $\mathcal{C}_2$ can be alternatively characterized as 
\begin{equation}
\label{expression C1}
\mathcal{C}_1 = \{[L^\dag(\mathbf{G}_L\mathbf{C}_L^{l_0}\mathbf{H}_L)^\mathrm{T}~L^\dag(\mathbf{G}_L\mathbf{C}_L^{l_1}\mathbf{H}_L)^\mathrm{T}~\ldots~L^\dag(\mathbf{G}_L\mathbf{C}_L^{l_{n-1}}\mathbf{H}_L)^\mathrm{T}]: L^\dag(x) \in \mathcal{L}_J^\dag\},
\end{equation}
\begin{equation}
\label{expression C2}
\mathcal{C}_2 = \{\mathbf{T}[L^\dag(\mathbf{G}_L\mathbf{C}_L^{l_0}\mathbf{H}_L)^\mathrm{T}~L^\dag(\mathbf{G}_L\mathbf{C}_L^{l_1}\mathbf{H}_L)^\mathrm{T}~\ldots~L^\dag(\mathbf{G}_L\mathbf{C}_L^{l_{n-1}}\mathbf{H}_L)^\mathrm{T}]: L^\dag(x) \in \mathcal{L}_J^\dag\}.
\end{equation} 

Based on the unified expression \eqref{expression Gabidulin}, \eqref{expression C1} and \eqref{expression C2} from the perspective of $\mathcal{L}^\dag_J$, we next unveil the inherent difference and connection between the circular-shift-based MRD codes and Gabidulin codes. Recall that in this paper, $J$ is the Euler's totient function of an integer $L$ coprime to $q$, and $m_L$ refers to the multiplicative order of $q$ modulo $L$. %
The following lemma plays a key role in the proof of the difference between $\mathcal{C}_1, \mathcal{C}_2$ and Gabidulin codes. %
Recall that $\mathcal{J}$ is defined in \eqref{J} as the set of all integers between $1$ and $L-1$ that are coprime to $L$.  

\begin{lemma}
\label{GCH neq VAV}
Consider the case $J \neq m_L$. For every $ l \in \mathcal{J}$, $\beta' \in \mathbb{F}_{q^J}$ and full rank matrix $\mathbf{V} \in \mathbb{F}_q^{J\times J}$, 
\begin{equation}
\mathbf{G}_L\mathbf{C}_L^{l}\mathbf{H}_L \neq \mathbf{V}\mathbf{A}(\beta')\mathbf{V}^{-1}.
\end{equation}
\begin{IEEEproof}
Please refer to Appendix-\ref{appendix:proof lemma:GCH neq VAV}.
\end{IEEEproof}
\end{lemma}

Notice that although circular-shift-based MRD codes and Gabidulin codes can be characterized in a unified expression from the perspective of $\mathcal{L}^\dag_J$, Lemma \ref{GCH neq VAV} shows that, in general, the evaluations of $L^\dag(x) \in \mathcal{L}^\dag_J$ are performed under different settings. %
In addition, the following theorem further unveils that, in a family of settings, both $\mathcal{C}_1$ and $\mathcal{C}_2$ are different from any Gabidulin MRD code $\mathcal{M}_{\mathcal{B}}(\mathcal{V})$ over $\mathbb{F}_{q}$, regardless of the choices of basis $\mathcal{B}$ and Gabidulin (linear) code $\mathcal{V}$ over $\mathbb{F}_{q^J}$. %
When $k = n$, all $(J \times n, q^{Jk}, d)$ MRD codes degenerate to the set of all $q^{Jn}$ matrices in $\mathbb{F}_q ^{J \times n}$, so we only focus the case $k<n$. %

\begin{theorem}
\label{theorem:connection with Gabidulin}
Consider the $(J \times n, q^{Jk}, d)$ circular-shift-based MRD codes $\mathcal{C}_1$ and $\mathcal{C}_2$ with $k<n$. The followings hold. 
\begin{enumerate}
  \item Consider the case that $J \neq m_L$, and the parameters $l_0, l_1, \ldots, l_{n-1}$ of $\mathcal{C}_1$, $\mathcal{C}_2$ satisfy
  \begin{equation}
  \label{eqn:lj setting}
  l_j = c'j + c \mod L, ~\forall 0 \leq j \leq n-1,
  \end{equation}
  where $0 \leq c \leq L-1$ and $1 \leq c' \leq L-1$ are two fixed constants with $\sum_{j=0}^{k-1} c'q^{j}$ coprime to $L$. %
  Both $\mathcal{C}_1$ and $\mathcal{C}_2$ are different from any $(J \times n, q^{Jk}, d)$ Gabidulin MRD code $\mathcal{M}_{\mathcal{B}}(\mathcal{V})$, regardless of the choices of basis $\mathcal{B}$ and Gabidulin (linear) code $\mathcal{V}$ over $\mathbb{F}_{q^J}$. 
  \item Consider the case $J=m_L$. Let $\mathcal{V}$ be the Gabidulin (linear) code over $\mathbb{F}_{q^J}$ determined by the $n$ $\mathbb{F}_q$-linearly independent elements $\beta^{l_0}, \beta^{l_1},\ldots,\beta^{l_{n-1}}$ in $\mathbb{F}_{q^J}$, where $\beta$ refers to a primitive $L^{th}$ root of unity in $\mathbb{F}_{q^{m_L}}$, and $l_0, l_1, \ldots, l_{n-1}$ are the exponents adopted in \eqref{expression C1} and \eqref{expression C2} to construct $\mathcal{C}_1$ and $\mathcal{C}_2$. Denote by $\mathcal{B}$ the basis of $\mathbb{F}_{q^J}$ over $\mathbb{F}_q$ equal to $\mathbf{u}_1^\mathrm{T}$, where $\mathbf{u}_1$ is the second column of $\mathbf{U}$ defined in \eqref{U_def}. We have
      \begin{equation}
      \mathcal{C}_1 = \mathcal{M}_{\mathcal{B}}(\mathcal{V}), \quad \mathcal{C}_2 = \mathbf{T}\mathcal{M}_{\mathcal{B}}(\mathcal{V}) = \mathcal{M}_{\mathcal{B}\mathbf{T}^{-1}}(\mathcal{V}),
      \end{equation}
      where $\mathbf{T}$ is the $J\times J$ invertible matrix defined in \eqref{def:T}.
\end{enumerate}   
\begin{IEEEproof}
Recall that Lemma \ref{lemma:C2=TC1} implies that $\mathcal{C}_2 = \mathbf{T}\mathcal{C}_1$, where $\mathbf{T} \in \mathbb{F}_q^{J\times J}$ is the  invertible matrix defined in \eqref{def:T}. %
Thus, it suffices to discuss the  difference and connection between $\mathcal{C}_1$ and Gabidulin codes. %
According to Proposition \ref{prop:new charac of Gabidulin}, the set $\{[L^\dag(\mathbf{V}\mathbf{A}(\beta_0)\mathbf{V}^{-1})^\mathrm{T}~L^\dag(\mathbf{V}\mathbf{A}(\beta_1)\mathbf{V}^{-1})^\mathrm{T}~\ldots~L^\dag(\mathbf{V}\mathbf{A}(\beta_{n-1})\mathbf{V}^{-1})^\mathrm{T}]: L^\dag(x) \in \mathcal{L}_N^\dag\}$ characterizes all $(N \times n, q^{Nk}, d)$ Gabidulin MRD codes $\mathcal{M}_{\mathcal{B}}(\mathcal{V})$ by choosing a full rank matrix $\mathbf{V} \in \mathbb{F}_q^{N \times N}$ determined by the basis $\mathcal{B}$, and choosing $n$ $\mathbb{F}_q$-linearly independent elements $\beta_0, \beta_1, \ldots, \beta_{n-1}$ in  $\mathbb{F}_{q^N}$ associated with the Gabidulin linear code $\mathcal{V}$. %
Consequently, it remains to discuss the difference and connection between 
the two sets in \eqref{expression C1} and \eqref{expression Gabidulin}.

Assume that there exist $n$ $\mathbb{F}_q$-linearly independent elements $\beta_0, \ldots, \beta_{n-1} \in \mathbb{F}_{q^N}$ and a  full rank matrix $\mathbf{V} \in\mathbb{F}_{q}^{N\times N}$ subject to 
\begin{equation}
\label{eqn:MBV=C}
\begin{split}
&\{[L^\dag(\mathbf{V}\mathbf{A}(\beta_0)\mathbf{V}^{-1})^\mathrm{T}~L^\dag(\mathbf{V}\mathbf{A}(\beta_1)\mathbf{V}^{-1})^\mathrm{T}~\ldots~L^\dag(\mathbf{V}\mathbf{A}(\beta_{n-1})\mathbf{V}^{-1})^\mathrm{T}]: L^\dag(x) \in \mathcal{L}_J^\dag\}\\
=&\{[L^\dag(\mathbf{G}_L\mathbf{C}_L^{l_0}\mathbf{H}_L)^\mathrm{T}~L^\dag(\mathbf{G}_L\mathbf{C}_L^{l_1}\mathbf{H}_L)^\mathrm{T}~\ldots~L^\dag(\mathbf{G}_L\mathbf{C}_L^{l_{n-1}}\mathbf{H}_L)^\mathrm{T}]: L^\dag(x) \in \mathcal{L}_J^\dag\}.
\end{split}
\end{equation}
For brevity, for every $0 \leq i \leq k-1$ and $0 \leq j \leq n-1$, write $\bm{\Phi}_{i,j} = \mathbf{V}\mathbf{A}(\beta_j^{q^i})\mathbf{V}^{-1}$ and $\bm{\Phi}_{i,j}' = \mathbf{G}_L\mathbf{C}_L^{q^il_j}\mathbf{H}_L$, so that we have %
  \begin{align}
  \label{eqn:psi_ij}
   [\bm{\Phi}_{i,j}]_{0 \leq i \leq k-1, 0 \leq j \leq n-1}=
   \begin{bmatrix}
\begin{smallmatrix}
\mathbf{V}\mathbf{A}(\beta_0)\mathbf{V}^{-1}&\mathbf{V}\mathbf{A}(\beta_1)\mathbf{V}^{-1}&\ldots&\mathbf{V}\mathbf{A}(\beta_{n-1})\mathbf{V}^{-1}\\
\mathbf{V}\mathbf{A}(\beta_0^q)\mathbf{V}^{-1}&\mathbf{V}\mathbf{A}(\beta_1^q)\mathbf{V}^{-1}&\ldots&\mathbf{V}\mathbf{A}(\beta_{n-1}^q)\mathbf{V}^{-1}\\
\vdots&\vdots&\ldots&\vdots\\
\mathbf{V}\mathbf{A}(\beta_0^{q^{k-1}})\mathbf{V}^{-1}&\mathbf{V}\mathbf{A}(\beta_1^{q^{k-1}})\mathbf{V}^{-1}&\ldots&\mathbf{V}\mathbf{A}(\beta_{n-1}^{q^{k-1}})\mathbf{V}^{-1}
\end{smallmatrix}
\end{bmatrix},\\
[\bm{\Phi}_{i,j}']_{0 \leq i \leq k-1, 0 \leq j \leq n-1}=
\begin{bmatrix}
\begin{smallmatrix}
\mathbf{G}_L\mathbf{C}_L^{l_0}\mathbf{H}_L&\mathbf{G}_L\mathbf{C}_L^{l_1}\mathbf{H}_L&\ldots&\mathbf{G}_L\mathbf{C}_L^{l_{n-1}}\mathbf{H}_L\\
\mathbf{G}_L\mathbf{C}_L^{ql_0}\mathbf{H}_L&\mathbf{G}_L\mathbf{C}_L^{ql_1}\mathbf{H}_L&\ldots&\mathbf{G}_L\mathbf{C}_L^{ql_{n-1}}\mathbf{H}_L\\
\vdots&\vdots&\ldots&\vdots\\
\mathbf{G}_L\mathbf{C}_L^{q^{k-1}l_0}\mathbf{H}_L&\mathbf{G}_L\mathbf{C}_L^{q^{k-1}l_1}\mathbf{H}_L&\ldots&\mathbf{G}_L\mathbf{C}_L^{q^{k-1}l_{n-1}}\mathbf{H}_L\\
\end{smallmatrix}
\end{bmatrix}.
  \end{align} %
Notice that \eqref{eqn:MBV=C} implies $\{\mathbf{e}[\bm{\Phi}_{i,j}]_{0 \leq i \leq k-1, 0 \leq j \leq n-1}: \mathbf{e} \in \mathbb{F}_{q}^{kJ}\}=
\{\mathbf{e}'[\bm{\Phi}'_{i,j}]_{0 \leq i \leq k-1, 0 \leq j \leq n-1}: \mathbf{e}' \in \mathbb{F}_{q}^{kJ}\}$. %
For every unit row vector $\mathbf{e}_{l} \in \mathbb{F}_{q}^{kJ}$, in which the $l^{th}$ entry is equal to $1$, let $\mathbf{e}_l' \in \mathbb{F}_{q}^{kJ}$ denote the row vector satisfying
\begin{equation}
\label{eqn:with e0}
\mathbf{e}_{l}[\bm{\Phi}_{i,j}]_{0 \leq i \leq k-1, 0 \leq j \leq n-1}=\mathbf{e}_l'[\bm{\Phi}_{i,j}']_{0 \leq i \leq k-1, 0 \leq j \leq n-1}.
\end{equation}
Thus, for every $\mathbf{e}= \sum_{l=1}^{kJ} a_l\mathbf{e}_{l} $ with $a_l \in \mathbb{F}_{q}$, we have $\mathbf{e}'= \sum_{l=1}^{kJ} a_l\mathbf{e}_l'$ satisfying
\begin{equation}
\label{eqn:with e0_e_1}
\mathbf{e}
[\bm{\Phi}_{i,j}]_{0 \leq i \leq k-1, 0 \leq j \leq n-1}
=
\mathbf{e}'
[\bm{\Phi}_{i,j}']_{0 \leq i \leq k-1, 0 \leq j \leq n-1}.
\end{equation}
It turns out that \eqref{eqn:MBV=C} holds if and only if
\begin{equation}
\label{eqn:without e0_e1}
[\bm{\Phi}_{i,j}]_{0 \leq i \leq k-1, 0 \leq j \leq n-1}
=\mathbf{N}
[\bm{\Phi}_{i,j}']_{0 \leq i \leq k-1, 0 \leq j \leq n-1},
\end{equation}
where $\mathbf{N} \in \mathbb{F}_q^{kJ \times kJ}$ denotes the full rank matrix in which the $l^{th}$ row is equal to $\mathbf{e}_l'$. %

We can establish the following $n-k$ matrices in $\mathbb{F}_q^{nJ \times J}$
\begin{equation}
\mathbf{D}_t= [(\mathbf{V}\mathbf{A}(\beta_{t, 0}')\mathbf{V}^{-1})^\mathrm{T}, \ldots, (\mathbf{V}\mathbf{A}(\beta_{t, n-t-2}')\mathbf{V}^{-1})^\mathrm{T}, -\mathbf{I}_J,\underbrace{\mathbf{0},\ldots,\mathbf{0}}_{t}]^\mathrm{T},~0 \leq t \leq n-k-1
\end{equation}
 with $\beta_{t, 0}', \ldots, \beta_{t, n-t-2}' \in \mathbb{F}_{q^J}$ such that $[\bm{\Phi}_{i,j}]_{0 \leq i \leq k-1, 0 \leq j \leq n-1}[\mathbf{D}_t]_{0 \leq t \leq n-k-1} = \mathbf{0}$, and $\mathrm{rank}([\mathbf{D}_t]_{0 \leq t \leq n-k-1}) = (n-k)J$. %
 In addition, \eqref{eqn:without e0_e1} further implies that 
 \begin{equation}
 \label{eqn:zero vector 1}
 [\bm{\Phi}_{i,j}']_{0 \leq i \leq k-1, 0 \leq j \leq n-1}[\mathbf{D}_t]_{0 \leq t \leq n-k-1} = \mathbf{0}.
 \end{equation} 
  Define a new matrix $[\bm{\Phi}_{i,j}'']_{0 \leq i \leq k-1, 0 \leq j \leq n-1}$ with every $\bm{\Phi}_{i,j}'' = \mathbf{G}_L\mathbf{C}_L^{q^ijc'}\mathbf{H}_L$
\begin{equation}
[\bm{\Phi}_{i,j}'']_{0 \leq i \leq k-1, 0 \leq j \leq n-1} = \begin{bmatrix}
\begin{smallmatrix}
\mathbf{I}_J & \mathbf{G}_L\mathbf{C}_L^{c'}\mathbf{H}_L&\ldots&\mathbf{G}_L\mathbf{C}_L^{(n-1)c'}\mathbf{H}_L\\
\mathbf{I}_J &\mathbf{G}_L\mathbf{C}_L^{qc'}\mathbf{H}_L&\ldots&\mathbf{G}_L\mathbf{C}_L^{q(n-1)c'}\mathbf{H}_L\\
\vdots&\vdots&\ldots&\vdots\\
\mathbf{I}_J &\mathbf{G}_L\mathbf{C}_L^{q^{k-1}c'}\mathbf{H}_L&\ldots&\mathbf{G}_L\mathbf{C}_L^{q^{k-1}(n-1)c'}\mathbf{H}_L\\
\end{smallmatrix}
\end{bmatrix}.
\end{equation}
As $l_j = c'j + c \mod L$ for every $0 \leq j \leq n-1$, we have
\begin{equation}
[\bm{\Phi}_{i,j}']_{0 \leq i \leq k-1, 0 \leq j \leq n-1} = \mathrm{Diag}([\mathbf{G}_L\mathbf{C}_L^{q^jc}\mathbf{H}_L]_{0 \leq j \leq k-1})[\bm{\Phi}_{i,j}'']_{0 \leq i \leq k-1, 0 \leq j \leq n-1}.
\end{equation}
Let $\mathbf{F}_{0}, \ldots, \mathbf{F}_{k-1}$ denote such $k$ matrices in $\mathbb{F}_q^{J\times J}$ subject to
\[
\begin{bmatrix}
\mathbf{F}_{0} \\
\vdots \\
\mathbf{F}_{k-1}
\end{bmatrix}
= [\bm{\Phi}_{i,j}'']_{0 \leq i,j \leq k-1}^{-1}[\bm{\Phi}_{i,j}'']_{0 \leq i \leq k-1, j = k} = 
\begin{bmatrix}
\begin{smallmatrix}
\mathbf{I}_J & \mathbf{G}_L\mathbf{C}_L^{c'}\mathbf{H}_L&\ldots&\mathbf{G}_L\mathbf{C}_L^{(k-1)c'}\mathbf{H}_L\\
\mathbf{I}_J &\mathbf{G}_L\mathbf{C}_L^{qc'}\mathbf{H}_L&\ldots&\mathbf{G}_L\mathbf{C}_L^{q(k-1)c'}\mathbf{H}_L\\
\vdots&\vdots&\ldots&\vdots\\
\mathbf{I}_J &\mathbf{G}_L\mathbf{C}_L^{q^{k-1}c'}\mathbf{H}_L&\ldots&\mathbf{G}_L\mathbf{C}_L^{q^{k-1}(k-1)c'}\mathbf{H}_L\\
\end{smallmatrix}
\end{bmatrix}^{-1}
\begin{bmatrix}
\begin{smallmatrix}
\mathbf{G}_L\mathbf{C}_L^{kc'}\mathbf{H}_L \\
\mathbf{G}_L\mathbf{C}_L^{qkc'}\mathbf{H}_L \\
\vdots \\
\mathbf{G}_L\mathbf{C}_L^{q^{k-1}kc'}\mathbf{H}_L \\
\end{smallmatrix}
\end{bmatrix}.
\]
Consequently, 
\begin{equation}
\label{eqn:zero vector 2}
\begin{split}
&[\bm{\Phi}_{i,j}']_{0 \leq i \leq k-1, 0 \leq j \leq n-1}
[\mathbf{F}_{0}^\mathrm{T}, \ldots, \mathbf{F}_{k-1}^\mathrm{T}, -\mathbf{I}_J, \underbrace{\mathbf{0},\ldots,\mathbf{0}}_{n-k-1}]^\mathrm{T}\\
= &[\bm{\Phi}_{i,j}'']_{0 \leq i \leq k-1, 0 \leq j \leq n-1}
[\mathbf{F}_{0}^\mathrm{T}, \ldots, \mathbf{F}_{k-1}^\mathrm{T}, -\mathbf{I}_J, \underbrace{\mathbf{0},\ldots,\mathbf{0}}_{n-k-1}]^\mathrm{T}
= \mathbf{0}.
\end{split}
\end{equation}
Let $\mathbf{\Delta}$ and $\mathbf{\Delta}'$ respectively denote the matrices in $\mathbb{F}_q^{J\times J}$ obtained by computing the determinant (in the block-wise manner) of the following two $k\times k$ block matrices with every block entry a matrix in $\mathbb{F}_q^{J\times J}$ 
\[
\mathbf{\Delta} = \det([\bm{\Phi}_{i,j}'']_{0 \leq i, j \leq k-1}) = \det\left(\begin{bmatrix}
\begin{smallmatrix}
\mathbf{I}_J & \mathbf{G}_L\mathbf{C}_L^{c'}\mathbf{H}_L&\ldots&\mathbf{G}_L\mathbf{C}_L^{(k-1)c'}\mathbf{H}_L\\
\mathbf{I}_J &\mathbf{G}_L\mathbf{C}_L^{qc'}\mathbf{H}_L&\ldots&\mathbf{G}_L\mathbf{C}_L^{q(k-1)c'}\mathbf{H}_L\\
\vdots&\vdots&\ldots&\vdots\\
\mathbf{I}_J &\mathbf{G}_L\mathbf{C}_L^{q^{k-1}c'}\mathbf{H}_L&\ldots&\mathbf{G}_L\mathbf{C}_L^{q^{k-1}(k-1)c'}\mathbf{H}_L\\
\end{smallmatrix}
\end{bmatrix}\right),
\]
\[ 
\mathbf{\Delta}' = \det\left( \begin{bmatrix}
\begin{smallmatrix}
\mathbf{G}_L\mathbf{C}_L^{kc'}\mathbf{H}_L & \mathbf{G}_L\mathbf{C}_L^{c'}\mathbf{H}_L&\ldots&\mathbf{G}_L\mathbf{C}_L^{(k-1)c'}\mathbf{H}_L\\
\mathbf{G}_L\mathbf{C}_L^{qkc'}\mathbf{H}_L &\mathbf{G}_L\mathbf{C}_L^{qc'}\mathbf{H}_L&\ldots&\mathbf{G}_L\mathbf{C}_L^{q(k-1)c'}\mathbf{H}_L\\
\vdots&\vdots&\ldots&\vdots\\
\mathbf{G}_L\mathbf{C}_L^{q^{k-1}kc'}\mathbf{H}_L &\mathbf{G}_L\mathbf{C}_L^{q^{k-1}c'}\mathbf{H}_L&\ldots&\mathbf{G}_L\mathbf{C}_L^{q^{k-1}(k-1)c'}\mathbf{H}_L\\
\end{smallmatrix}
\end{bmatrix}
\right) =  (-1)^{k+1}\det([\bm{\Phi}_{i,j+1}'']_{0 \leq i,j \leq k-1}). 
\]
Based on the properties of determinant, we have $\mathbf{F}_{0} = \mathbf{\Delta}'/\mathbf{\Delta}$. %
Notice that  $[\bm{\Phi}_{i,j+1}'']_{0 \leq i,j \leq k-1} = \mathrm{Diag}([\mathbf{G}_L\mathbf{C}_L^{c'q^j}\mathbf{H}_L]_{0 \leq j \leq k-1})[\bm{\Phi}_{i,j}'']_{0 \leq i,j \leq k-1}$. Subsequently,
\[
\mathbf{F}_{0} = \mathbf{\Delta}'/\mathbf{\Delta} = (-1)^{k+1}\mathbf{G}_L\mathbf{C}_L^{\sum_{j=0}^{k-1} c'q^{j}}\mathbf{H}_L.
\]
For brevity, write $\mathbf{D}' = [\mathbf{F}_{0}^\mathrm{T}, \ldots, \mathbf{F}_{k-1}^\mathrm{T}, -\mathbf{I}_J, \underbrace{\mathbf{0},\ldots,\mathbf{0}}_{n-k-1}]^\mathrm{T}$. %
Since $[\bm{\Phi}_{i,j}']_{0 \leq i \leq k-1, 0 \leq j \leq n-1}[\mathbf{D}_t]_{0 \leq t \leq n-k-1} = \mathbf{0}$ and $\mathrm{rank}([\mathbf{D}_t]_{0 \leq t \leq n-k-1}) = (n-k)J$, it follows from \eqref{eqn:zero vector 2} that all column vectors in $\mathbf{D}'$ are $\mathbb{F}_q$-linear combinations among column vectors in $[\mathbf{D}_t]_{0 \leq t \leq n-k-1}$. %
Notice that the last $n-k-1$ block entries of $\mathbf{D}'$ are zero and the last $n-k-1$ block rows of $[\mathbf{D}_t]_{0 \leq t \leq n-k-2}$ are full rank, so the columns in $\mathbf{D}'$ are $\mathbb{F}_q$-linear combinations among columns in $\mathbf{D}_{n-k-1}$. %
Since the $(k+1)^{st}$ block entry of both $\mathbf{D}'$ and $\mathbf{D}_{n-k-1}$ are $-\mathbf{I}_J$, we have $\mathbf{D}' = \mathbf{D}_{n-k-1}$, which further implies $\mathbf{F}_{0} = \mathbf{V}\mathbf{A}(\beta_{t, 0}')\mathbf{V}^{-1}$, that is, $\mathbf{G}_L\mathbf{C}_L^{\sum_{j=0}^{k-1} c'q^{j}}\mathbf{H}_L = \mathbf{V}\mathbf{A}((-1)^{k+1}\beta_{t, 0}')\mathbf{V}^{-1}$. %
According to Lemma \ref{GCH neq VAV}, when $\sum_{j=0}^{k-1} c'q^{j}$ is coprime to $L$, $\mathbf{G}_L\mathbf{C}_L^{\sum_{j=0}^{k-1} c'q^{j}}\mathbf{H}_L\neq\mathbf{V}\mathbf{A}(\beta')\mathbf{V}^{-1}$ for any $\beta' \in \mathbb{F}_{q^J}$ and full rank matrix $\mathbf{V} \in \mathbb{F}_q^{J\times J}$, which yields a contradiction.
We can now conclude that $\mathcal{C}_1$ and $\mathcal{C}_2$ are different from any $(J \times n, q^{Jk}, d)$ Gabidulin code with $k<n$.

Consider the case $J=m_L$. We first show that $\mathcal{B}=\mathbf{u}_1^\mathrm{T}$ forms a basis of $\mathbb{F}_{q^{J}}$ over $\mathbb{F}_q$. %
According to Theorem 2.37 and Corollary 2.38 in \cite{Lidl:Finite_Field}, it suffices to show that
\begin{equation}
\label{eqn:u1}
\det([\mathbf{u}_{1}~(\mathbf{u}_{1})^{\circ q}~ \ldots ~(\mathbf{u}_{1})^{\circ q^{J-1}}]) \neq 0.
\end{equation}
Let $\mathbf{v}_{1}$ denote the second column of $\mathbf{V}_L$ defined in \eqref{eqn:VL}. %
Since $\mathbf{G}_L$ is over $\mathbb{F}_{q}$, we have
\begin{equation}
\mathbf{u}_{q^j\mathrm{mod}~ L}=L_{\mathbb{F}_q}^{-1}\mathbf{G}_L\mathbf{v}_{q^j\mathrm{mod}~ L}=L_{\mathbb{F}_q}^{-1}\mathbf{G}_L(\mathbf{v}_{1})^{\circ q^j}=(L_{\mathbb{F}_q}^{-1}\mathbf{G}_L\mathbf{v}_1)^{\circ q^j}=(\mathbf{u}_{1})^{\circ q^j},
  ~\forall 0\leq j \leq J-1.
\end{equation}
Because $\{q^j ~\mathrm{mod}~L: 0\leq j \leq J-1\}=\mathcal{J}$ for the case $J=m_L$, the full rank of $[\mathbf{u}_j]_{j \in \mathcal{J}}$, as defined in \eqref{U_def}, implies \eqref{eqn:u1}. %
Let $\mathbf{V}$ be the invertible matrix in $\mathbb{F}_q^{J \times J}$ such that $\mathcal{B}^\mathrm{T}=\mathbf{V}\mathcal{B}_{\gamma}^\mathrm{T}$. %
In order to prove $\mathcal{C}_1=\mathcal{M}_{\mathcal{B}}(\mathcal{V})$, based on \eqref{expression Gabidulin} in Proposition \ref{prop:new charac of Gabidulin} and \eqref{expression C1}, it remains to show that 
\begin{equation}
\label{eqn:J=m_L}
\mathbf{G}_L\mathbf{C}_L^l\mathbf{H}_L=\mathbf{V}\mathbf{A}(\beta^l)\mathbf{V}^{-1},\quad \forall 0 \leq l \leq L-1.
\end{equation}
From \eqref{eqn:GH_mu_C}, we have $\mathbf{G}_L\mathbf{C}_L^l\mathbf{H}_L\mathcal{B}^\mathrm{T}=\mathcal{B}^\mathrm{T}\beta^l$, which implies
\begin{equation}
\mathbf{G}_L\mathbf{C}_L^l\mathbf{H}_L\mathbf{V}\mathcal{B}_{\gamma}^\mathrm{T}=\mathbf{V}\mathcal{B}_{\gamma}^\mathrm{T}\beta^l, \quad \forall  0 \leq l \leq L-1.
\end{equation}
Since $\mathbf{A}(\beta^l)\mathcal{B}_{\gamma}^\mathrm{T}=\mathcal{B}_{\gamma}^\mathrm{T}\beta^l$, it follows that $\mathbf{G}_L\mathbf{C}_L^l\mathbf{H}_L\mathbf{V}\mathcal{B}_{\gamma}^\mathrm{T}=\mathbf{V}\mathbf{A}(\beta^l)\mathcal{B}_{\gamma}^\mathrm{T}$, which proves \eqref{eqn:J=m_L}. %
\end{IEEEproof}
\end{theorem}

It is worth remarking that when $J \neq m_L$ and the parameters $l_0, l_1, \ldots, l_{n-1}$ of $\mathcal{C}_1$, $\mathcal{C}_2$ do not satisfy \eqref{eqn:lj setting}, the codes $\mathcal{C}_1$ and $\mathcal{C}_2$ may either coincide with or differ from Gabidulin codes. %
For instance, assume $L=15$ and $q=2$. Let $\mathcal{C}_1$ denote the $(8 \times 3, 2^{16}, 2)$ circular-shift-based MRD code with the $2\times 3$ block generator matrix $[\bm{\Phi}_{i,j}']_{0 \leq i \leq 1, 0 \leq j \leq 2}$, where $\bm{\Phi}_{i,j}'= \mathbf{G}_{15}\mathbf{C}_{15}^{2^il_j}\mathbf{H}_{15}$, $l_0=0, l_1=1, l_2=3$. %
Let $\mathbf{D}_1= [(\mathbf{G}_{15}(\mathbf{C}_{15}^4+\mathbf{C}_{15}^5)\mathbf{H}_{15})^\mathrm{T}, (\mathbf{G}_{15}(\mathbf{C}_{15}^2+\mathbf{C}_{15}^3+\mathbf{C}_{15}^4)\mathbf{H}_{15})^\mathrm{T}, \mathbf{I}_8]^\mathrm{T} \in \mathbb{F}_2^{24 \times 8}$ denote the matrix subject to 
$[\bm{\Phi}_{i,j}']_{0 \leq i \leq 1, 0 \leq j \leq 2}\mathbf{D}_1=\mathbf{0}$. %
Using the same technique as in Lemma \ref{GCH neq VAV}, one can readily verify that $\mathbf{G}_{15}(\mathbf{C}_{15}^4+\mathbf{C}_{15}^5)\mathbf{H}_{15}\neq \mathbf{V}\mathbf{A}(\beta')\mathbf{V}^{-1}$ for any  
$\beta' \in \mathbb{F}_{2^8}$ and  full rank matrix $\mathbf{V} \in \mathbb{F}_2^{8\times 8}$, which further implies that $\mathcal{C}_1$ is different from any $(8 \times 3, 2^{16}, 2)$ Gabidulin MRD code. %
As another instance, assume $L=7$ and $q=2$. Let $\mathcal{C}_1$ denote the $(6 \times 3, 2^{12}, 2)$ circular-shift-based MRD code with the $2 \times 3$ block generator matrix $[\bm{\Phi}_{i,j}']_{0 \leq i \leq 1, 0 \leq j \leq 2}$, where $\bm{\Phi}_{i,j}'= \mathbf{G}_7\mathbf{C}_7^{2^il_j}\mathbf{H}_7$, $l_0=0, l_1=1, l_2=4$. %
Let $\mathbf{D}_2= [(\mathbf{G}_7(\mathbf{I}_7+\mathbf{C}_7^5+\mathbf{C}_7^6)\mathbf{H}_7)^\mathrm{T}, (\mathbf{G}_7(\mathbf{I}_7+\mathbf{C}_7+\mathbf{C}_7^2)\mathbf{H}_7)^\mathrm{T}, \mathbf{I}_6]^\mathrm{T} \in \mathbb{F}_2^{18\times 6}$ denote the matrix subject to 
$[\bm{\Phi}_{i,j}']_{0 \leq i \leq 1, 0 \leq j \leq 2}\mathbf{D}_2=\mathbf{0}$. %
In this case, both $\mathbf{G}_7(\mathbf{I}_7+\mathbf{C}_7^5+\mathbf{C}_7^6)\mathbf{H}_7$ and $\mathbf{G}_7(\mathbf{I}_7+\mathbf{C}_7+\mathbf{C}_7^2)\mathbf{H}_7$ can be expressed in the form $\mathbf{V}\mathbf{A}(\beta')\mathbf{V}^{-1}$, where $\beta' \in \mathbb{F}_{2^6}$ and $\mathbf{V}$ is a full rank matrix in $\mathbb{F}_2^{6\times 6}$. %
This implies that $\mathcal{C}_1$ coincides with a $(6 \times 3, 2^{12}, 2)$ Gabidulin MRD code. %
As a potential future work, it is interesting to further characterize the relationship between circular-shift-based MRD codes, whose parameters $l_0, l_1, \ldots, l_{n-1}$ do not satisfy \eqref{eqn:lj setting}, and Gabidulin codes.

We would also like to remark that in the special case $J = m_L$, even though a $(J \times n, q^{Jk}, d)$ circular-shift-based MRD code coincides with a $(J \times n, q^{Jk}, d)$ Gabidulin MRD code, the construction of $\mathcal{C}_1$ and $\mathcal{C}_2$ in the form of \eqref{eqn:circular-shift-MRD-code-DEF} has its own merit because it provides a new approach to construct Gabidulin codes purely based on the arithmetic in $\mathbb{F}_{q}$ and avoids the arithmetic in $\mathbb{F}_{q^{J}}$. In comparison, the classical constructions of Gabidulin codes, from the perspective of both the matrix representation and the vector representation as reviewed in Section \ref{subsec:Gabidulin codes}, involve the arithmetic of $\mathbb{F}_{q^{J}}$.

\begin{example}
Consider $q=2$ and $L=5$, so that $\mathcal{J}=\{1, 2, 3, 4\}$ and $\tau(\mathbf{C}_{5})=\mathbf{I}_{5}+\mathbf{C}_{5}$. %
Set
\begin{equation*}
\mathbf{G}_5=[\mathbf{I}_4~\mathbf{1}], \mathbf{H}_5=[\mathbf{I}_4~\mathbf{0}]^\mathrm{T},
\mathbf{\Psi}_{1\times 4} = 
[\mathbf{I}_5~\mathbf{C}_5 ~\mathbf{C}_5^2~\mathbf{C}_5^3], 
\mathbf{G}=[1~\beta~\beta^2~\beta^3],
\end{equation*}
where $\beta \in \mathbb{F}_{2^4}$ denotes a primitive $5^{th}$ root of unity, so that \eqref{eqn:independent C_GH} and \eqref{eqn:independent C_GG} hold. %
The $4 \times 5$ matrices $\mathbf{U}$ and $\mathbf{U}'$ can be obtained from \eqref{eqn:GH_def}, and thus $\mathbf{u}_1$ is given by
\begin{equation*}
\mathbf{u}_1=[\alpha^{11}~\alpha^{10}~\alpha^4~\alpha^8]^\mathrm{T}, 
\end{equation*}
where $\alpha$ denotes a root of the primitive polynomial $p(x)=x^4+x+1$ over $\mathbb{F}_2$. %
Based on \eqref{def:T}, the $4\times 4$ full rank matrix $\mathbf{T}$ is $\mathbf{T}=[\mathbf{u}_{5-j}]_{j \in \mathcal{J}} \mathrm{Diag}([\tau(\beta^j)]_{j\in \mathcal{J}}) [\mathbf{u}_{j}]_{j \in \mathcal{J}}^\mathrm{T}=\begin{bmatrix}\begin{smallmatrix}1&0&0&1\\0&0&1&0\\1&0&0&0\\1&1&0&1\end{smallmatrix}\end{bmatrix}$. %
Define
\begin{equation}
\mathcal{B}' =\mathbf{u}_1^\mathrm{T}\mathbf{T}^{-1}=[\alpha~\alpha^{4}~\alpha^7~\alpha^{10}].
\end{equation}
It can be verified that $\mathbf{u}_1^\mathrm{T}$ and $\mathcal{B}'$ are indeed qualified as bases of $\mathbb{F}_{2^4}$ over $\mathbb{F}_2$. %

Sequentially set $\mathbf{m}$ as a row vector in $\mathbb{F}_2^4$
\begin{equation}
\begin{split}
\mathbf{m} \in \{&[0~0~0~0], [0~0~0~1], [0~0~1~0], [0~0~1~1], [0~1~0~0],[0~1~0~1],[0~1~1~0], [0~1~1~1],\\
&[1~0~0~0],[1~0~0~1], [1~0~1~0],[1~0~1~1], [1~1~0~0],[1~1~0~1], [1~1~1~0], [1~1~1~1]\}.
\end{split}
\end{equation} 
Then, the  codewords in  the $(4 \times 4, 2^4, 4)$ circular-shift-based MRD code $\mathcal{C}_1 = \{\Delta(\mathbf{m}(\mathbf{I}_1\otimes\mathbf{G}_5)\mathbf{\Psi}_{1\times 4}(\mathbf{I}_4\otimes\mathbf{H}_5), \mathbf{m} \in \mathbb{F}_2^4\}$ are
\begin{equation}
\label{eqn:C1}
\begin{split}
&
\begin{bmatrix}\begin{smallmatrix}
   0&   0&   0 &0\\
   0&   0&   0 &0\\
   0&   0&   0 &0\\
   0&   0&   0 &0
   \end{smallmatrix}\end{bmatrix},~
\begin{bmatrix}\begin{smallmatrix}
   0  & 1  & 1  & 0\\
   0  & 0  & 1 &  1\\
   0  & 0  & 0 &  1\\
   1 &  0  & 0 &  0
   \end{smallmatrix}\end{bmatrix},~
\begin{bmatrix}\begin{smallmatrix}
   0 &  1 &  0 &  1\\
   0 &  0  & 1 &  0\\
   1  & 0  & 0 &  1\\
   0  & 1  & 0  & 0
   \end{smallmatrix}\end{bmatrix},~
\begin{bmatrix}\begin{smallmatrix}
   0  & 0  & 1 &  1\\
   0  & 0  & 0 &  1\\
   1  & 0  & 0 &  0\\
   1  & 1  & 0 &  0
   \end{smallmatrix}\end{bmatrix},~
\begin{bmatrix}\begin{smallmatrix}
   0 &  1 &  0  & 0\\
   1  & 0 &  1  & 0\\
   0  & 1 &  0 &  1\\
   0  & 0 &  1 &  0
   \end{smallmatrix}\end{bmatrix},~
\begin{bmatrix}\begin{smallmatrix}
   0 &  0 &  1 &  0\\
   1 &  0  & 0 &  1\\
   0  & 1  & 0 &  0\\
   1  & 0  & 1 &  0
   \end{smallmatrix}\end{bmatrix},~
\begin{bmatrix}\begin{smallmatrix}
   0 &  0 &  0 &  1\\
   1 &  0 &  0 &  0\\
   1 &  1 &  0 &  0\\
   0 &  1 &  1 &  0
   \end{smallmatrix}\end{bmatrix},~
\begin{bmatrix}\begin{smallmatrix}
   0  & 1  & 1  & 1\\
   1  & 0  & 1  & 1\\
   1  & 1  & 0  & 1\\
   1 &  1 &  1 &  0
   \end{smallmatrix}\end{bmatrix},\\
&
\begin{bmatrix}\begin{smallmatrix}
   1 &  1 &  0  & 0\\
   0 &  1  & 1  & 0\\
   0  & 0  & 1 &  1\\
   0 &  0  & 0 &  1
   \end{smallmatrix}\end{bmatrix},~
\begin{bmatrix}\begin{smallmatrix}
   1 &  0 &  1 &  0\\
   0 &  1 &  0 &  1\\
   0 &  0 &  1 &  0\\
   1 &  0 &  0 &  1
   \end{smallmatrix}\end{bmatrix},~
\begin{bmatrix}\begin{smallmatrix}
   1 &  0 &  0  & 1\\
   0 &  1 &  0 &  0\\
   1 &  0 &  1 &  0\\
   0  & 1 &  0  & 1
   \end{smallmatrix}\end{bmatrix},~
\begin{bmatrix}\begin{smallmatrix}
   1  & 1 &  1 &  1\\
   0  & 1 &  1 &  1\\
   1 &  0 &  1 &  1\\
   1 &  1 &  0 &  1
   \end{smallmatrix}\end{bmatrix},~
\begin{bmatrix}\begin{smallmatrix}
   1 &  0 &  0 &  0\\
   1  & 1 &  0 &  0\\
   0 &  1 &  1 &  0\\
   0 &  0 &  1 &  1
   \end{smallmatrix}\end{bmatrix},~
\begin{bmatrix}\begin{smallmatrix}
   1 &  1  & 1 &  0\\
   1 &  1  & 1 &  1\\
   0 &  1  & 1 &  1\\
   1 &  0 &  1 &  1
  \end{smallmatrix} \end{bmatrix},~
\begin{bmatrix}\begin{smallmatrix}
   1  & 1 &  0 &  1\\
   1  & 1 &  1 &  0\\
   1  & 1 &  1 &  1\\
   0  & 1 &  1 &  1
   \end{smallmatrix}\end{bmatrix},~
\begin{bmatrix}\begin{smallmatrix}
   1  & 0 &  1 &  1\\
   1  & 1  & 0 &  1\\
   1 &  1 &  1 &  0\\
   1 &  1 &  1 &  1
   \end{smallmatrix}\end{bmatrix},
   \end{split}
\end{equation}
and the codewords in the $(4 \times 4, 2^4, 4)$ circular-shift-based MRD code $\mathcal{C}_2 = \{\Delta(\mathbf{m}(\mathbf{I}_1\otimes\mathbf{G}_5)\mathbf{\Psi}_{1\times 4}(\mathbf{I}_4\otimes(\tau(\mathbf{C}_5)\mathbf{G}_5^\mathrm{T})), \mathbf{m} \in \mathbb{F}_2^4\}$ are
\begin{equation}
\label{eqn:C2}
\begin{split}
&
\begin{bmatrix}\begin{smallmatrix}
   0&   0&   0 &0\\
   0&   0&   0 &0\\
   0&   0&   0 &0\\
   0&   0&   0 &0
   \end{smallmatrix}\end{bmatrix},~
\begin{bmatrix}\begin{smallmatrix}
   1  & 1  & 1  & 0\\
   0  & 0  & 0  & 1\\
   0  & 1 &  1  & 0\\
   1  & 1 &  0 &  1
   \end{smallmatrix}\end{bmatrix},~
\begin{bmatrix}\begin{smallmatrix}
   0 &  0 &  0  & 1\\
   1 &  0 &  0  & 1\\
   0 &  1  & 0 &  1\\
   0  & 0  & 1  & 1
   \end{smallmatrix}\end{bmatrix},~
\begin{bmatrix}\begin{smallmatrix}
   1 &  1 &  1  & 1\\
   1 &  0 &  0  & 0\\
   0 &  0 &  1 &  1\\
   1 &  1 &  1 &  0
   \end{smallmatrix}\end{bmatrix},~
\begin{bmatrix}\begin{smallmatrix}
   0 &  1  & 1  & 0\\
   0 &  1 &  0  & 1\\
   0  & 1 &  0 &  0\\
   1 &  1 &  0 &  0
   \end{smallmatrix}\end{bmatrix},~
\begin{bmatrix}\begin{smallmatrix}
   1  & 0  & 0  & 0\\
   0  & 1 &  0 &  0\\
   0  & 0 &  1 &  0\\
   0  & 0  & 0 &  1
   \end{smallmatrix}\end{bmatrix},~
\begin{bmatrix}\begin{smallmatrix}
   0  & 1 &  1   &1\\
   1  & 1 &  0  & 0\\
   0  & 0 &  0 &  1\\
   1  & 1 &  1 &  1
   \end{smallmatrix}\end{bmatrix},~
\begin{bmatrix}\begin{smallmatrix}
   1 &  0  & 0 &  1\\
   1 &  1  & 0 &  1\\
   0 &  1 &  1 &  1\\
   0  & 0 &  1 &  0
   \end{smallmatrix}\end{bmatrix},\\
&
\begin{bmatrix}\begin{smallmatrix}
   1 &  1 &  0 &  1\\
   0 &  0 &  1  & 1\\
   1 &  1 &  0 &  0\\
   1 &  0 &  1 &  1
   \end{smallmatrix}\end{bmatrix},~
\begin{bmatrix}\begin{smallmatrix}
   0  & 0 &  1  & 1\\
   0  & 0 &  1  & 0\\
   1  & 0 &  1 &  0\\
   0 &  1 &  1 &  0
   \end{smallmatrix}\end{bmatrix},~
\begin{bmatrix}\begin{smallmatrix}
   1  & 1  & 0 &  0\\
   1  & 0  & 1 &  0\\
   1  & 0  & 0 &  1\\
   1  & 0  & 0 &  0
   \end{smallmatrix}\end{bmatrix},~
\begin{bmatrix}\begin{smallmatrix}
   0 &  0  & 1 &  0\\
   1 &  0  & 1 &  1\\
   1  & 1  & 1 &  1\\
   0  & 1 &  0 &  1
   \end{smallmatrix}\end{bmatrix},~
\begin{bmatrix}\begin{smallmatrix}
   1  & 0  & 1  & 1\\
   0  & 1 &  1  & 0\\
   1  & 0 &  0  & 0\\
   0  & 1 &  1 &  1
   \end{smallmatrix}\end{bmatrix},~
\begin{bmatrix}\begin{smallmatrix}
   0 &  1 &  0  & 1\\
   0  & 1 &  1 &  1\\
   1  & 1 &  1 &  0\\
   1 &  0  & 1  & 0
  \end{smallmatrix} \end{bmatrix},~
\begin{bmatrix}\begin{smallmatrix}
   1  & 0  & 1 &  0\\
   1  & 1 &  1 &  1\\
   1  & 1 &  0 &  1\\
   0  & 1 &  0  & 0
   \end{smallmatrix}\end{bmatrix},~
\begin{bmatrix}\begin{smallmatrix}
   0  & 1 &  0 &  0\\
   1  & 1 &  1 &  0\\
   1  & 0 &  1 &  1\\
   1  & 0 &  0 &  1
   \end{smallmatrix}\end{bmatrix}.
   \end{split}
\end{equation}
For each $\mathbf{u} \in \{0, 1, \alpha, \alpha^2, \alpha^3, \alpha^4, \alpha^5, \alpha^6, \alpha^7, \alpha^8, \alpha^9, \alpha^{10}, \alpha^{11}, \alpha^{12},\alpha^{13}, \alpha^{14}\}=\mathbb{F}_{2^4}$, the codewords in the $(4,1)$ Gabidulin code $\mathcal{V} = \{\mathbf{u}\mathbf{G}: \mathbf{u}\in \mathbb{F}_{2^4}\}$ are
\begin{align*}
\label{eqn:Gabidulin_code_example}
&[0~ 0~ 0 ~ 0],[1~\alpha^3~\alpha^6~\alpha^9], [\alpha~\alpha^4~\alpha^7~\alpha^{10}],[\alpha^2~\alpha^5~\alpha^8~\alpha^{11}], 
[\alpha^3  ~\alpha^6~\alpha^9 ~\alpha^{12}],[\alpha^4~\alpha^7 ~\alpha^{10}~\alpha^{13}],[\alpha^5~\alpha^8~\alpha^{11}~\alpha^{14}],\\
&[\alpha^6~\alpha^9~\alpha^{12}~    1],[\alpha^7~\alpha^{10} ~\alpha^{13}~\alpha],[\alpha^8~\alpha^{11}~\alpha^{14}~\alpha^2],
 [\alpha^{9}~\alpha^{12} ~ 1   ~ \alpha^3],[\alpha^{10}~\alpha^{13}~\alpha ~\alpha^4],[\alpha^{11} ~\alpha^{14} ~ \alpha^2~\alpha^5],\\
 &[\alpha^{12}~ 1 ~ \alpha^3 ~\alpha^6],[\alpha^{13}~ \alpha ~ \alpha^4 ~\alpha^7],[\alpha^{14}~ \alpha^2~\alpha^5 ~\alpha^8].
\end{align*}
With respect to $\mathbf{u}_1^\mathrm{T}$, the codewords in the $(4 \times 4, 2^4, 4)$ Gabidulin MRD code $\mathcal{M}_{\mathbf{u}_1^\mathrm{T}}(\mathcal{V})$  are 
\begin{equation}
\label{eqn:M1}
\begin{split}
&
\begin{bmatrix}\begin{smallmatrix}
   0&   0&   0 &0\\
   0&   0&   0 &0\\
   0&   0&   0 &0\\
   0&   0&   0 &0
   \end{smallmatrix}\end{bmatrix},~
\begin{bmatrix}\begin{smallmatrix}
     0&     1&     1&     1\\
     1&     0&     1&     1\\
     1&     1&     0&     1\\
     1&     1&     1&     0
   \end{smallmatrix}\end{bmatrix},~
   \begin{bmatrix}\begin{smallmatrix}
     0&     0&     1&     0\\
     1&     0&     0&     1\\
     0&     1&     0&     0\\
     1&     0&     1&     0
   \end{smallmatrix}\end{bmatrix},~
   \begin{bmatrix}\begin{smallmatrix}
     0&     0&     0&     1\\
     1&     0&     0&     0\\
     1&     1&     0&     0\\
     0&     1&     1&     0
   \end{smallmatrix}\end{bmatrix},~
   \begin{bmatrix}\begin{smallmatrix}
    1&     1 &    1&     1\\
    0 &    1&     1&     1\\
    1&     0&     1&     1\\
    1&     1&     0&     1
   \end{smallmatrix}\end{bmatrix},~
   \begin{bmatrix}\begin{smallmatrix}
     0&      1 &    0&     1\\
     0&     0&     1&     0\\
     1 &    0 &    0&     1\\
     0 &    1&     0&     0
   \end{smallmatrix}\end{bmatrix},~
   \begin{bmatrix}\begin{smallmatrix}
     0&     0&     1&     1\\
     0&     0&     0&     1\\
     1&     0&     0&     0\\
     1&     1&     0&     0
  \end{smallmatrix} \end{bmatrix},~
\begin{bmatrix}\begin{smallmatrix}
     1&     1 &    1&     0\\
     1&     1&     1&     1\\
     0&     1&     1&     1\\
     1&     0&     1&     1 \\
   \end{smallmatrix}\end{bmatrix},\\
   &  
   \begin{bmatrix}\begin{smallmatrix}
     1&     0&     1&     0\\
     0&     1&     0&     1\\
     0&     0&     1&     0\\
     1&    0&     0&     1\\
   \end{smallmatrix}\end{bmatrix},~
\begin{bmatrix}\begin{smallmatrix}
    0 &    1 &      1&     0\\
     0 &    0&     1&     1\\
     0 &    0&     0&     1\\
     1 &    0&     0&     0
   \end{smallmatrix}\end{bmatrix},~
   \begin{bmatrix}\begin{smallmatrix}
   1&     1 &    0 &    1\\
     1&     1&     1 &    0\\
     1&     1&     1&     1\\
     0 &    1&     1 &    1
   \end{smallmatrix}\end{bmatrix},~
   \begin{bmatrix}\begin{smallmatrix}
     0&     1 &    0&     0\\
     1&     0 &    1 &    0\\
     0&     1&     0&     1\\
     0&     0&     1&     0\\
   \end{smallmatrix}\end{bmatrix},~
   \begin{bmatrix}\begin{smallmatrix}
     1 &     1 &     0&     0\\
     0&     1&     1&     0\\
     0&     0 &    1&     1\\
     0 &    0&     0&     1
   \end{smallmatrix}\end{bmatrix},~
   \begin{bmatrix}\begin{smallmatrix}
   1 &    0 &    1 &    1\\
     1&     1 &    0 &    1\\
     1 &    1&     1&     0\\
     1 &    1 &    1 &    1
   \end{smallmatrix}\end{bmatrix},~
\begin{bmatrix}\begin{smallmatrix}
   1 &    0 &    0&     1\\
     0 &    1&     0 &    0\\
     1 &    0&     1&     0\\
     0 &    1 &    0 &    1
   \end{smallmatrix}\end{bmatrix},~
   \begin{bmatrix}\begin{smallmatrix}
   1  &   0 &    0&     0\\
     1 &    1&     0 &    0\\
     0&     1 &    1 &    0\\
     0 &    0 &    1&     1
   \end{smallmatrix}\end{bmatrix}.
\end{split}
\end{equation}
With respect to $\mathcal{B}'$, the codewords in the $(4 \times 4, 2^4, 4)$ Gabidulin MRD code $\mathcal{M}_{\mathcal{B}'}(\mathcal{V})$  are
\begin{equation}
\label{eqn:M2}
\begin{split}
&
\begin{bmatrix}\begin{smallmatrix}
   0&   0&   0 &0\\
   0&   0&   0 &0\\
   0&   0&   0 &0\\
   0&   0&   0 &0
   \end{smallmatrix}\end{bmatrix},~
   \begin{bmatrix}\begin{smallmatrix}
   1 &  0  & 0 &  1\\
   1 &  1  & 0 &  1\\
   0 &  1 &  1 &  1\\
   0  & 0 &  1 &  0
   \end{smallmatrix}\end{bmatrix},~
   \begin{bmatrix}\begin{smallmatrix}
   1  & 0  & 0  & 0\\
   0  & 1 &  0 &  0\\
   0  & 0 &  1 &  0\\
   0  & 0  & 0 &  1
   \end{smallmatrix}\end{bmatrix},~
   \begin{bmatrix}\begin{smallmatrix}
   0  & 1 &  1   &1\\
   1  & 1 &  0  & 0\\
   0  & 0 &  0 &  1\\
   1  & 1 &  1 &  1
   \end{smallmatrix}\end{bmatrix},~
\begin{bmatrix}\begin{smallmatrix}
   0 &  0  & 1 &  0\\
   1 &  0  & 1 &  1\\
   1  & 1  & 1 &  1\\
   0  & 1 &  0 &  1
   \end{smallmatrix}\end{bmatrix},~
 \begin{bmatrix}\begin{smallmatrix}
   0 &  0 &  0  & 1\\
   1 &  0 &  0  & 1\\
   0 &  1  & 0 &  1\\
   0  & 0  & 1  & 1
   \end{smallmatrix}\end{bmatrix},~
 \begin{bmatrix}\begin{smallmatrix}
   1 &  1 &  1  & 1\\
   1 &  0 &  0  & 0\\
   0 &  0 &  1 &  1\\
   1 &  1 &  1 &  0
   \end{smallmatrix}\end{bmatrix},~  
\begin{bmatrix}\begin{smallmatrix}
   0 &  1 &  0  & 1\\
   0  & 1 &  1 &  1\\
   1  & 1 &  1 &  0\\
   1 &  0  & 1  & 0
  \end{smallmatrix} \end{bmatrix},\\
  &
 \begin{bmatrix}\begin{smallmatrix}
   0  & 0 &  1  & 1\\
   0  & 0 &  1  & 0\\
   1  & 0 &  1 &  0\\
   0 &  1 &  1 &  0
   \end{smallmatrix}\end{bmatrix},~
\begin{bmatrix}\begin{smallmatrix}
   1  & 1  & 1  & 0\\
   0  & 0  & 0  & 1\\
   0  & 1 &  1  & 0\\
   1  & 1 &  0 &  1
   \end{smallmatrix}\end{bmatrix},~
   \begin{bmatrix}\begin{smallmatrix}
   1  & 0  & 1 &  0\\
   1  & 1 &  1 &  1\\
   1  & 1 &  0 &  1\\
   0  & 1 &  0  & 0
   \end{smallmatrix}\end{bmatrix},~
\begin{bmatrix}\begin{smallmatrix}
   0 &  1 &  1 & 0\\
   0 &  1 &  0 & 1\\
   0 &  1 &  0 &  0\\
   1 &  1 &  0 &  0
   \end{smallmatrix}\end{bmatrix},~
\begin{bmatrix}\begin{smallmatrix}
   1 &  1 &  0 &  1\\
   0 &  0 &  1 &  1\\
   1 &  1 &  0 &  0\\
   1 &  0 &  1 &  1
   \end{smallmatrix}\end{bmatrix},~
\begin{bmatrix}\begin{smallmatrix}
   0  & 1 &  0 &  0\\
   1  & 1 &  1 &  0\\
   1  & 0 &  1 &  1\\
   1  & 0 &  0 &  1
   \end{smallmatrix}\end{bmatrix},~
 \begin{bmatrix}\begin{smallmatrix}
   1  & 1  & 0 &  0\\
   1  & 0  & 1 &  0\\
   1  & 0  & 0 &  1\\
   1  & 0  & 0 &  0
   \end{smallmatrix}\end{bmatrix},~
\begin{bmatrix}\begin{smallmatrix}
   1  & 0  & 1  & 1\\
   0  & 1 &  1  & 0\\
   1  & 0 &  0  & 0\\
   0  & 1 &  1 &  1
   \end{smallmatrix}\end{bmatrix}.
   \end{split}
\end{equation}
One may readily check that the $16$ codewords in \eqref{eqn:C1} coincide with the $16$ codewords in \eqref{eqn:M1}, and the $16$ codewords in \eqref{eqn:C2} coincide with the $16$ codewords in \eqref{eqn:M2}.
\hfill $\blacksquare$ %
\end{example}

Recall that in \eqref{eqn:V=L}, the Gabidulin (linear) code $\mathcal{V} = \{\mathbf{u}\mathbf{G}: \mathbf{u}\in \mathbb{F}_{q^N}^k\}$ over $\mathbb{F}_{q^N}$ can be equivalently expressed based on the set $\mathcal{L} = \{u_0x+u_1x^q+\ldots+u_{k-1}x^{q^{k-1}}:~u_0, u_1, \ldots,u_{k-1} \in \mathbb{F}_{q^N}\}$ of $q$-linearized polynomials over $\mathbb{F}_{q^N}$. %
Based on an extension of $\mathcal{L}$, a new class of MRD codes, called \emph{twisted Gabidulin} codes \cite{TG}, is defined as
\begin{equation}
\mathcal{S}=\{[h(\beta_0)~h(\beta_1)~\ldots~h(\beta_{n-1})]: h(x) \in \mathcal{H}\},
\end{equation}
where $\beta_0, \beta_1, \ldots, \beta_{n-1}$ denotes arbitrary $n$ $\mathbb{F}_q$-linearly independent elements in $\mathbb{F}_{q^N}$, and $\mathcal{H}$ denotes the following set of $q$-linearized polynomials over $\mathbb{F}_{q^N}$
\begin{equation}
\mathcal{H}=\{u_0x+u_1x^q+\ldots+u_{k-1}x^{q^{k-1}}+\eta u_0^{q^h}x^{q^k}:~u_0, u_1, \ldots,u_{k-1} \in \mathbb{F}_{q^N}\},
\end{equation}
with $h$ a positive integer and $\eta \in \mathbb{F}_{q^N}$ satisfying $\eta ^{\frac{q^N-1}{q-1}}\neq (-1)^{nk}$. %
With respect to a basis $\mathcal{B}$ of $\mathbb{F}_{q^N}$ over $\mathbb{F}_q$, the $(N \times n, q^{Nk}, d)$ twisted Gabidulin MRD code $\mathcal{M}_{\mathcal{B}}(\mathcal{S})$ can be expressed as 
\begin{equation}
\label{eqn:proposition MBT}
\mathcal{M}_{\mathcal{B}}(\mathcal{S})=\{[\mathbf{v}_{\mathcal{B}}(h(\beta_0))^\mathrm{T}~\mathbf{v}_{\mathcal{B}}(h(\beta_1))^\mathrm{T}~\ldots~\mathbf{v}_{\mathcal{B}}(h(\beta_{n-1}))^\mathrm{T}]: h(x) \in \mathcal{H}\}.
\end{equation}

It is clear that when $\eta = 0$, $\mathcal{H}$ degenerates to $\mathcal{L}$, and twisted Gabidulin codes are same as conventional Gabidulin codes. %
When $\eta \neq 0$, twisted Gabidulin codes have been proven to contain new MRD codes that are not equivalent to any Gabidulin code (See e.g. \cite{TG}). %
We now compare the difference between the circular-shift-based MRD codes $\mathcal{C}_1, \mathcal{C}_2$ and twisted Gabidulin codes. %

First, $\mathcal{C}_1$ and $\mathcal{C}_2$ can be constructed over $\mathbb{F}_{2}$. In comparison, due to the constraint of $\eta$, when $\eta \neq 0$, a twisted Gabidulin code cannot be constructed over $\mathbb{F}_2$ (See, e.g., \cite{Gabidulin}, \cite{Rank-metric codes and their applications}). %
Next, consider the case $q > 2$, $\eta \neq 0$ and $h = 0$. Similar to the equivalent characterization of the $(N \times n, q^{Nk}, d)$ Gabidulin MRD code $\mathcal{M}_{\mathcal{B}}(\mathcal{V})$ in \eqref{expression Gabidulin} from the perspective of $\mathcal{L}_N^\dag$, the $(N \times n, q^{Nk}, d)$ twisted Gabidulin MRD code $\mathcal{M}_{\mathcal{B}}(\mathcal{S})$ can be equivalently characterized as
\begin{equation}
\label{expression twisted Gabidulin}
\mathcal{M}_{\mathcal{B}}(\mathcal{S}) = \{[H^\dag(\mathbf{V}\mathbf{A}(\beta_0)\mathbf{V}^{-1})^\mathrm{T}~H^\dag(\mathbf{V}\mathbf{A}(\beta_1)\mathbf{V}^{-1})^\mathrm{T}~\ldots~H^\dag(\mathbf{V}\mathbf{A}(\beta_{n-1})\mathbf{V}^{-1})^\mathrm{T}]: H^\dag(x) \in \mathcal{H}_N^\dag\},
\end{equation}
where $\mathcal{H}_N^\dag$ denotes the following set of $q$-linearized polynomials over the row vector space $\mathbb{F}_{q}^N$
\begin{equation}
\mathcal{H}_N^\dag=\{\mathbf{e}_0(x+\mathbf{V}\mathbf{A}(\eta)\mathbf{V}^{-1}x^{q^k})+\mathbf{e}_1x^{q}+\ldots+\mathbf{e}_{k-1}x^{q^{k-1}}:~\mathbf{e}_0, \mathbf{e}_1, \ldots,\mathbf{e}_{k-1} \in \mathbb{F}_{q}^N\}.
\end{equation}
Herein, $\mathbf{A}(\eta)$ denotes the matrix representation of $\eta \in \mathbb{F}_{q^N}$, and $\mathbf{V}$ is the invertible matrix in $\mathbb{F}_q^{N\times N}$ associated with the basis $\mathcal{B}$ as defined in Proposition \ref{prop:new charac of Gabidulin}. %
Consequently, by an essentially same approach to prove Theorem \ref{theorem:connection with Gabidulin} (except that the first block row of \eqref{eqn:psi_ij} is replaced with  $[\mathbf{V}\mathbf{A}(\beta_0+\eta\beta_0^{q^k})\mathbf{V}^{-1}~\mathbf{V}\mathbf{A}(\beta_1+\eta\beta_1^{q^k})\mathbf{V}^{-1}~\ldots~\mathbf{V}\mathbf{A}(\beta_{n-1}+\eta\beta_{n-1}^{q^k})\mathbf{V}^{-1}]$), we can deduce the following proposition that demonstrates the difference between $\mathcal{C}_1$, $\mathcal{C}_2$ under certain parameter settings and any $(J\times n, q^{Jk}, d)$ twisted Gabidulin MRD code (under the setting $h = 0$). %

\begin{proposition}
\label{propo:connection with twisted_Gabidulin}
Consider the $(J\times n, q^{Jk}, d)$ circular-shift-based MRD codes $\mathcal{C}_1$ and $\mathcal{C}_2$ with $k<n$ and $J \neq m_L$. 
If the parameters $l_0, l_1, \ldots, l_{n-1}$ of $\mathcal{C}_1$, $\mathcal{C}_2$ satisfy \eqref{eqn:lj setting}, both $\mathcal{C}_1$ and $\mathcal{C}_2$ are different from any $(J\times n, q^{Jk}, d)$ twisted Gabidulin MRD code $\mathcal{M}_{\mathcal{B}}(\mathcal{S})$ with $h=0$, regardless of the choices of basis $\mathcal{B}$ and twisted Gabidulin (linear) code $\mathcal{S}$ over $\mathbb{F}_{q^J}$.  
\end{proposition}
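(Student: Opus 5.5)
Since $\mathcal{C}_2=\mathbf{T}\mathcal{C}_1$ with $\mathbf{T}$ invertible (Lemma \ref{lemma:C2=TC1}), and $\mathbf{T}\mathcal{M}_{\mathcal{B}}(\mathcal{S})=\mathcal{M}_{\mathcal{B}\mathbf{T}^{-1}}(\mathcal{S})$ is again a twisted Gabidulin code with $h=0$, it suffices to treat $\mathcal{C}_1$. I would follow the proof of Theorem \ref{theorem:connection with Gabidulin} essentially verbatim. By the characterization \eqref{expression twisted Gabidulin}, $\mathcal{M}_{\mathcal{B}}(\mathcal{S})$ is the row space over $\mathbb{F}_q$ of the $k\times n$ block matrix $[\bm{\Phi}_{i,j}]$. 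Its first block row is $\bm{\Phi}_{0,j}=\mathbf{V}\mathbf{A}(\beta_j+\eta\beta_j^{q^k})\mathbf{V}^{-1}$, and its remaining block rows are $\bm{\Phi}_{i,j}=\mathbf{V}\mathbf{A}(\beta_j^{q^i})\mathbf{V}^{-1}$ for $1\le i\le k-1$. Assume $\mathcal{C}_1=\mathcal{M}_{\mathcal{B}}(\mathcal{S})$. Exactly as in \eqref{eqn:with e0}--\eqref{eqn:without e0_e1}, this forces $[\bm{\Phi}_{i,j}]=\mathbf{N}[\bm{\Phi}'_{i,j}]$ for some invertible $\mathbf{N}$. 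Hence both block matrices have the same right kernel, of dimension $(n-k)J$.

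Next, build a basis $[\mathbf{D}_t]_{0\le t\le n-k-1}$ of this kernel in echelon form. Each $\mathbf{D}_t$ is to have a $-\mathbf{I}_J$ block in position $n-t-1$, zeros after it, and blocks of the form $\mathbf{V}\mathbf{A}(\beta'_{t,s})\mathbf{V}^{-1}$ before it. This is the key structural point, and the place where the twisted setting needs checking. Every block of $[\bm{\Phi}_{i,j}]$ lies in the commutative field $\mathcal{F}=\mathbf{V}\mathbf{A}(\mathbb{F}_{q^J})\mathbf{V}^{-1}\cong\mathbb{F}_{q^J}$. So the kernel can be computed as an $\mathcal{F}$-linear system, i.e. as the kernel over $\mathbb{F}_{q^J}$ of the $k\times n$ matrix with rows $(\beta_j+\eta\beta_j^{q^k})_j$ and $(\beta_j^{q^i})_j$ for $1\le i\le k-1$.

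This matrix has rank $k$ on every $k$ consecutive leading columns. That is because the twisted Gabidulin code is MRD, and hence MDS as a linear code over $\mathbb{F}_{q^J}$, so every $k\times k$ minor is nonzero. Gaussian elimination over $\mathcal{F}$ then produces the required echelon basis with coefficients $\beta'_{t,s}\in\mathbb{F}_{q^J}$. This is the only step that differs from Theorem \ref{theorem:connection with Gabidulin}, and I expect it to be the main thing to verify carefully: the condition on $\eta$ is what guarantees invertibility of the leading $k\times k$ block submatrix.

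On the circular-shift side nothing changes. With $l_j=c'j+c$, the relation $[\bm{\Phi}'_{i,j}]=\mathrm{Diag}(\cdot)[\bm{\Phi}''_{i,j}]$ and the block-Cramer computation give a kernel element $\mathbf{D}'=[\mathbf{F}_0^\mathrm{T},\ldots,\mathbf{F}_{k-1}^\mathrm{T},-\mathbf{I}_J,\mathbf{0},\ldots]^\mathrm{T}$ with $\mathbf{F}_0=(-1)^{k+1}\mathbf{G}_L\mathbf{C}_L^{\sum_{j=0}^{k-1}c'q^j}\mathbf{H}_L$.

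Then apply the echelon argument. Since the last $n-k-1$ blocks of $\mathbf{D}'$ are zero and its $(k+1)$st block is $-\mathbf{I}_J$, we get $\mathbf{D}'=\mathbf{D}_{n-k-1}$. Therefore $\mathbf{G}_L\mathbf{C}_L^{\sum c'q^j}\mathbf{H}_L=\mathbf{V}\mathbf{A}(\pm\beta'_{n-k-1,0})\mathbf{V}^{-1}$. Because $\sum_{j=0}^{k-1}c'q^j$ is coprime to $L$ and $J\ne m_L$, this contradicts Lemma \ref{GCH neq VAV}.
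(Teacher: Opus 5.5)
Your proposal is correct and is essentially the paper's own argument: the paper obtains this proposition by rerunning the proof of Theorem~\ref{theorem:connection with Gabidulin} with the first block row of \eqref{eqn:psi_ij} replaced by $\mathbf{V}\mathbf{A}(\beta_j+\eta\beta_j^{q^k})\mathbf{V}^{-1}$, reducing $\mathcal{C}_2$ to $\mathcal{C}_1$ via $\mathbf{T}$, and concluding with Lemma~\ref{GCH neq VAV}, just as you do. Your justification that the echelon kernel basis $\mathbf{D}_t$ exists fills in a step the paper leaves implicit: the twisted code with $h=0$ is $\mathbb{F}_{q^J}$-linear and MRD, hence MDS, so the leading $k\times k$ block is invertible.
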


The characterization of the relationship between circular-shift-based MRD codes and twisted Gabidulin codes with $q > 2$, $\eta \neq 0$, $h \geq 1$ is beyond the focus of this paper and left as an interesting future work. 
\subsection{Generalization of Gabidulin codes}
\label{subsec:Generalization of Gabidulin codes}
In this subsection, in addition to the difference between the proposed MRD codes and Gabidulin codes established in the previous subsection, we further show that $\mathcal{C}_1$ and $\mathcal{C}_2$ are equivalent to a generalization of Gabidulin codes, under some particular selection of $\mathbf{G}_L$ and $\mathbf{H}_L$.

We first reformulate $\mathcal{C}_1$ and $\mathcal{C}_2$ in a form of matrix-matrix multiplication over $\mathbb{F}_{q^{m_L}}$, which is analogous to the matrix representation of Gabidulin codes $\mathcal{M} = \{\mathbf{M}(\mathbf{u}): \mathbf{u}\in \mathbb{F}_{q^{m_L}}^k\}$ given in Definition \ref{def:matrix  representation}. %
Adopting the definitions from Section \ref{sec:Circular-shift-based MRD codes}, let $\mathbf{m}=[\mathbf{m}_0, \mathbf{m}_1, \ldots, \mathbf{m}_{k-1}]$ denote a row vector in $\mathbb{F}_q^{Jk}$, where $\mathbf{m}_s \in \mathbb{F}_q^J$ for $0 \leq s \leq k-1$. %
Recall that $\mathbf{u}_{j}$ denotes the $(j+1)^{st}$ column of $\mathbf{U}$ defined in \eqref{U_def} for $0 \leq j \leq L-1$. %
For $j \in \mathcal{J}$, let $\mathbf{t}_j=[t_{0,j}~t_{1,j}~\ldots~t_{k-1,j}] \in \mathbb{F}_{q^{m_L}}^k$ denote the row vector with
\begin{equation}
\label{eqn:tsj}
t_{s,j}=L_{\mathbb{F}_{q}}\mathbf{m}_s\mathbf{u}_{j}, \quad 0 \leq s \leq k-1,
\end{equation}
and let $L_{\mathbf{t}_j}(x)$ denote the following $q$-linearized polynomial over $\mathbb{F}_{q^{m_L}}$
\begin{equation}
\label{eqn:Ltj(x)}
L_{\mathbf{t}_j}(x)=
\begin{cases}
\sum\nolimits_{s=0}^{k-1} t_{s,j}x^{q^s}, \quad \quad \quad ~\mathrm{for}~\mathcal{C}_1,\\
\sum\nolimits_{s=0}^{k-1} t_{s,j}\tau(\beta^{j})x^{q^s},\quad \mathrm{for}~\mathcal{C}_2,
\end{cases}
\end{equation}
where $\tau(x)$ is defined in \eqref{def: tau(x)}. %
Define the following  matrix in $\mathbb{F}_{q^{m_L}}^{J\times n}$ 
\begin{equation}
\label{eqn:L_t}
\mathbf{L}_{\mathbf{m}}=\begin{bmatrix}
L_{\mathbf{t}_{d_1}}(\beta^{d_1l_0}) & L_{\mathbf{t}_{d_1}}(\beta^{d_1l_1}) &\ldots &L_{\mathbf{t}_{d_1}}(\beta^{d_1l_{n-1}})\\
L_{\mathbf{t}_{d_2}}(\beta^{d_2l_0}) & L_{\mathbf{t}_{d_2}}(\beta^{d_2l_1}) &\ldots &L_{\mathbf{t}_{d_2}}(\beta^{d_2l_{n-1}})\\
\vdots & \vdots &\ldots &\vdots\\
L_{\mathbf{t}_{d_J}}(\beta^{d_Jl_0}) & L_{\mathbf{t}_{d_J}}(\beta^{d_Jl_1}) &\ldots &L_{\mathbf{t}_{d_J}}(\beta^{d_Jl_{n-1}})
\end{bmatrix},
\end{equation}
where $d_1< d_2 <\ldots < d_J$ with  $d_1, d_2,\ldots, d_J \in \mathcal{J}$. %

The above definitions enable us to characterize $\mathcal{C}_1$ and $\mathcal{C}_2$ in a form of matrix-matrix multiplications over $\mathbb{F}_{q^{m_L}}$. %
Hereafter in this paper, let $\mathbf{U}$ and $\mathbf{U}'$ be two full rank matrices over $\mathbb{F}_{q^{m_L}}$ that satisfy \eqref{U_def} and ensure that $\mathbf{G}_L$ and $\mathbf{H}_L$ constructed according to \eqref{eqn:GH_def} are over $\mathbb{F}_{q}$. %
We present the following theorem and lemma that will be useful in the analysis of $\mathcal{C}_1$ and $\mathcal{C}_2$. %
\begin{theorem}
\label{theorem:matrix form of C}
The $(J\times n, q^{Jk}, d)$ circular-shift-based MRD codes $\mathcal{C}_1$ and $\mathcal{C}_2$ can be expressed as follows
\begin{align}
\label{matrix C1}
\mathcal{C}_1 =&\left\{[\mathbf{u}'_j]_{j\in \mathcal{ J}}\mathbf{L}_{\mathbf{m}}: \mathbf{m}\in \mathbb{F}_{q}^{Jk}\right \},\\
\label{matrix C2}
\mathcal{C}_2 =&\left\{[\mathbf{u}_{L-j}]_{j\in \mathcal{ J}}\mathbf{L}_{\mathbf{m}}: \mathbf{m}\in \mathbb{F}_{q}^{Jk}\right \},
\end{align}
where $[\mathbf{u}'_j]_{j\in \mathcal{ J}} \in \mathbb{F}_{q^{m_L}}^{J\times J}$ and $[\mathbf{u}_{L-j}]_{j\in \mathcal{ J}} \in \mathbb{F}_{q^{m_L}}^{J\times J}$ are two  full rank matrices prescribed in \eqref{U_def}.
\begin{IEEEproof}
Please refer to Appendix-\ref{appendix:Proof of theorem}.
\end{IEEEproof}
\end{theorem}

\begin{lemma}
\label{lemma: basis}
For any integer $h$ and $j \in \mathcal{J}$, the elements $\beta^{jh}, \beta^{j(h+1)}, \ldots ,\beta^{j(h+m_L-1)}$ form a basis of $\mathbb{F}_{q^{m_L}}$ over $\mathbb{F}_{q}$, where $\beta \in \mathbb{F}_{q^{m_L}}$ denotes a primitive $L^{th}$ root of unity. %
\begin{IEEEproof}
Please refer to Appendix-\ref{appendix:basis}.
\end{IEEEproof}
\end{lemma}

For $1 \leq i, s\leq J/{m_L}$, let $\mathcal{M}_{i,s}$ denote the following $(m_L\times n, q^{m_Lk}, d)$ Gabidulin code 
\begin{equation}
\label{difference}
\mathcal{M}_{i,s}=\{\mathbf{M}_\mathbf{o}(\mathcal{B}_{i,s})\mathbf{L}_{\bm{\lambda}_s}: \bm{\lambda}_s\in \mathbb{F}_{q^{m_L}}^k\},
\end{equation}
where $\mathcal{B}_{i,s}$ is a basis of $\mathbb{F}_{q^{m_L}}$ over $\mathbb{F}_{q}$, and 
$\mathbf{L}_{\bm{\lambda}_s}$ is a matrix in $\mathbb{F}_{q^{m_L}}^{m_L \times n}$ defined as follows. %
Given a row vector $\bm{\lambda}_s=[\lambda_0, \lambda_1, \ldots, \lambda_{k-1}] \in \mathbb{F}_{q^{m_L}}^k$, define the $q$-linearized polynomial over $\mathbb{F}_{q^{m_L}}$ as follows
\begin{equation}
\label{eqn:L lambda s}
L_{\bm{\lambda}_s}(x)=\sum\nolimits_{i=0}^{k-1} \lambda_ix^{q^i}.
\end{equation}
Let $\mathcal{F}_{s}=\{\beta_{s,0}, \ldots, \beta_{s,n-1}\}$ be a set of $n$ $\mathbb{F}_q$-linearly independent elements in $\mathbb{F}_{q^{m_L}}$. %
Then $\mathbf{L}_{\bm{\lambda}_s}$ is given by
\begin{equation}
\label{eqn:Lus}
\mathbf{L}_{\bm{\lambda}_s}=
\begin{bmatrix}
L_{\bm{\lambda}_s}(\beta_{s,0})  & L_{\bm{\lambda}_s}(\beta_{s,1})&\ldots & L_{\bm{\lambda}_s}(\beta_{s,n-1}) \\
L_{\bm{\lambda}_s}(\beta_{s,0})^q  &L_{\bm{\lambda}_s}(\beta_{s,1})^q& \ldots & L_{\bm{\lambda}_s}(\beta_{s,n-1})^q \\
 \vdots  & \vdots&\ldots& \vdots \\
L_{\bm{\lambda}_s}(\beta_{s,0})^{q^{m_L-1}}  &L_{\bm{\lambda}_s}(\beta_{s,1})^{q^{m_L-1}}& \ldots & L_{\bm{\lambda}_s}(\beta_{s,n-1})^{q^{m_L-1}} \\
\end{bmatrix}. %
\end{equation}
Based on the $(m_L\times n, q^{m_Lk}, d)$ Gabidulin codes ${\mathcal{M}_{i,s}}$ defined above for $1 \leq i, s\leq J/{m_L}$, we define the following $(J\times n, q^{Jk}, d)$ rank-metric code $\widetilde{\mathcal{M}}$
\begin{equation}
\label{eqn:generalization of Gabidulin codes}
\widetilde{\mathcal{M}}=
\begin{bmatrix}
\sum\nolimits_{s=1}^{J/{m_L}}\mathcal{M}_{1,s}\\
\sum\nolimits_{s=1}^{J/{m_L}}\mathcal{M}_{2,s}\\
\vdots\\
\sum\nolimits_{s=1}^{J/{m_L}}\mathcal{M}_{J/{m_L},s}\\
\end{bmatrix},
\end{equation}
which can be viewed as a generalization of Gabidulin codes. %
To the best of our knowledge, this generalization of Gabidulin codes has not been explored in the literature. %
It is worth noting that $\widetilde{\mathcal{M}}$ is not necessarily an MRD code.

The following theorem shows that under particular selection of $\mathbf{G}_L$ and $\mathbf{H}_L$, the $(J\times n, q^{Jk}, d)$ circular-shift-based MRD codes $\mathcal{C}_1$ and $\mathcal{C}_2$ can also be represented in the form of $\widetilde{\mathcal{M}}$. %
In this sense, the new constructed $(J\times n, q^{Jk}, d)$ MRD codes (with $n \leq m_L<J$) differ from conventional Gabidulin codes over $\mathbb{F}_{q^{m_L}}$. 
According to Theorem 2.47 in \cite{Lidl:Finite_Field}, $\mathcal{J}$ defined in \eqref{J} can be expressed as
\begin{equation}
\mathcal{J}=\mathcal{J}_1\cup\mathcal{J}_2\cup\ldots\cup\mathcal{J}_{J/{m_L}},
\end{equation}
where each subset $\mathcal{J}_s$, $1 \leq s \leq J/{m_L}$, contains $m_L$ elements and is closed under multiplication by $q$ modulo $L$. %
For every $\mathcal{J}_s$, let $\varepsilon_s$ denote a defined integer in it so that all $m_L$ elements in $\mathcal{J}_s$ can be written as $\varepsilon_sq^j~\mathrm{mod}~L$ with $0\leq j \leq m_L-1$. %

For $1 \leq i, s\leq J/{m_L}$, let $\mathcal{B}_{i,s}$ denote the following basis of $\mathbb{F}_{q^{m_L}}$ over $\mathbb{F}_{q}$
\begin{align}
\label{Bis}
\mathcal{B}_{i,s}=[\beta^{(i-1)m_L(L-\varepsilon_s)}~\ldots~\beta^{(im_L-1)(L-\varepsilon_s)}],
\end{align}
and let $\mathcal{F}_{s}$ denote the following set of $n$ $\mathbb{F}_q$-linearly independent elements in $\mathbb{F}_{q^{m_L}}$ 
\begin{equation}
\label{eqn:Ls}
\mathcal{F}_{s} = \{\beta^{\varepsilon_sl_0}, \beta^{\varepsilon_sl_1}, \ldots, \beta^{\varepsilon_sl_{n-1}}\}.
\end{equation}

\begin{theorem}
\label{generalization} 
Set $\mathbf{H}_L = [\mathbf{I}_J~\mathbf{0}]^\mathrm{T}$ in the definition of $\mathcal{C}_1$, and  $\mathbf{G}_L = [\mathbf{I}_J~\mathbf{0}]$ in the definition of $\mathcal{C}_2$. %
Let $\widetilde{\mathcal{M}}$ be the $(J\times n, q^{Jk}, d)$ rank-metric code defined in \eqref{eqn:generalization of Gabidulin codes} with respect to $\mathcal{B}_{i,s}$ and $\mathcal{F}_{s}$. %
Then,  
\begin{equation}
\mathcal{C}_1=\mathcal{C}_2=\widetilde{\mathcal{M}}.
\end{equation}
\begin{IEEEproof}
Please refer to Appendix-\ref{appendix:generalization}.
\end{IEEEproof}
\end{theorem}

Analogously, the $(J\times n, q^{Jk}, d)$ circular-shift-based MRD codes $\mathcal{C}_1$ under the setting $\mathbf{H}_L = [\mathbf{0}~\mathbf{I}_J]^\mathrm{T}$ and $\mathcal{C}_2$ under the setting $\mathbf{G}_L=[\mathbf{0}~\mathbf{I}_J]$ can also be represented in the form of $\widetilde{\mathcal{M}}$.  %

\begin{example}
\label{example 5}
Consider $q=2$ and $L=7$, so that $J=6$, $m_L=3$. Let $\varepsilon_1=1$ and $\varepsilon_2=3$. %
We have
\begin{equation}
\begin{split}
\mathcal{B}_{1,1}=[1~\beta^{6}~\beta^{12}],~\mathcal{B}_{1,2}=[1~\beta^{4}~\beta^{8}],~\mathcal{B}_{2,1}=[\beta^{18}~\beta^{24}~\beta^{30}],~\mathcal{B}_{2,2}=[\beta^{12}~\beta^{16}~\beta^{20}],
\end{split}
\end{equation}
where $\beta \in \mathbb{F}_{2^3}$ denotes a primitive $7^{th}$ root of unity. %
Set $\mathcal{F}_{1}=\{1, \beta, \beta^2\}, \mathcal{F}_{2}=\{1, \beta^{3},\beta^6\}$. %
It can be verified that both $\mathcal{F}_{1}$ and $\mathcal{F}_{2}$ consist of $3$ $\mathbb{F}_2$-linearly independent elements in $\mathbb{F}_{2^3}$. %
For $1 \leq i, s\leq 2$, let $\mathcal{M}_{i,s}=\{\mathbf{M}_\mathbf{o}(\mathcal{B}_{i,s})\mathbf{L}_{\bm{\lambda}_s}: \bm{\lambda}_s\in \mathbb{F}_{2^3}\}$ denote the $(3 \times 3, 2^3 ,3)$ Gabidulin code, where $\mathbf{L}_{\bm{\lambda}_s}$ is defined in  \eqref{eqn:Lus}. %
Thus, the codewords in $\begin{bmatrix}\begin{smallmatrix}\mathcal{M}_{1,1}\\\mathcal{M}_{2,1}\end{smallmatrix}\end{bmatrix}$ are 
\begin{align}
\label{eqn:m tilde example 1}
\begin{bmatrix}\begin{smallmatrix}
   0 &0&0\\
   0& 0& 0\\
   0& 0& 0\\
   0 &0 &0\\
   0& 0 &0\\
   0 &0& 0\\
 \end{smallmatrix}\end{bmatrix},
\begin{bmatrix}\begin{smallmatrix}
   1 &  0 &  0\\
   1 &  1 &  0\\
   1 &  1 &  1\\
   0 &  1 &  1\\
   1 &  0 &  1\\
   0  & 1 &  0\\
 \end{smallmatrix}\end{bmatrix},
\begin{bmatrix}\begin{smallmatrix}
   0 &  0 &  1\\
   1 &  0  & 0\\
   1 &  1 &  0\\
   1 &  1&   1\\
   0 &  1  & 1\\
   1 &  0  & 1\\
  \end{smallmatrix} \end{bmatrix},
\begin{bmatrix}\begin{smallmatrix}
   0  & 1 &  0\\
   0  & 0 &  1\\
   1 &  0 &  0\\
   1 &  1 &  0\\
   1 &  1 &  1\\
   0 &  1 &  1
\end{smallmatrix}\end{bmatrix},
\begin{bmatrix}\begin{smallmatrix}
1  & 0 &  1\\
   0  & 1  & 0\\
   0  & 0 &  1\\
   1  & 0 &  0\\
   1 &  1 &  0\\
   1  & 1 &  1
 \end{smallmatrix}\end{bmatrix},
 \begin{bmatrix}\begin{smallmatrix}  
   0  & 1 &  1\\
   1 &  0  & 1\\
   0  & 1 &  0\\
   0  & 0 &  1\\
   1  & 0&   0\\
   1  & 1 &  0
   \end{smallmatrix}\end{bmatrix},
\begin{bmatrix} \begin{smallmatrix}  
   1  & 1  & 1\\
   0 &  1  & 1\\
   1  & 0  & 1\\
   0 &  1  & 0\\
   0  & 0  & 1\\
   1  & 0 &  0
   \end{smallmatrix}\end{bmatrix},
\begin{bmatrix}  \begin{smallmatrix}
   1  & 1 &  0\\
   1 &  1  & 1\\
   0 &  1 &  1\\
   1 &  0 &  1\\
   0 &  1 &  0\\
   0 &  0 &  1
  \end{smallmatrix} \end{bmatrix},
   \end{align}
and the codewords in $\begin{bmatrix}\begin{smallmatrix}\mathcal{M}_{1,2}\\\mathcal{M}_{2,2}\end{smallmatrix}\end{bmatrix}$ are
\begin{align}
\label{eqn:m tilde example 2}
\begin{bmatrix}\begin{smallmatrix}
   0 &0 &0\\
   0 &0 &0\\
   0& 0 &0\\
   0 &0 &0\\
   0 &0 &0\\
   0 &0 &0\\
 \end{smallmatrix}\end{bmatrix},
\begin{bmatrix}\begin{smallmatrix}
   1  & 1 &  1\\
   0 &  1 &  1\\
   0 &  0 &  1\\
   1 &  0 &  0\\
   0 &  1  & 0\\
   1 &  0  & 1
 \end{smallmatrix}\end{bmatrix},
\begin{bmatrix}\begin{smallmatrix}
   0  & 0 &  1\\
   1 &  0 &  0\\
   0 &  1 &  0\\
   1 &  0 &  1\\
   1  & 1 &  0\\
   1  & 1 &  1
   \end{smallmatrix}\end{bmatrix},
\begin{bmatrix}\begin{smallmatrix}
   0 &  1  & 0\\
   1  & 0  & 1\\
   1  & 1  & 0\\
   1 &  1 &  1\\
   0 &  1 &  1\\
   0  & 0 &  1
\end{smallmatrix}\end{bmatrix},
\begin{bmatrix}\begin{smallmatrix}
1 &  1  & 0\\
   1 &  1  & 1\\
   0 &  1  & 1\\
   0 &  0 &  1\\
   1 &  0 &  0\\
   0  & 1 &  0
\end{smallmatrix}\end{bmatrix},
\begin{bmatrix}\begin{smallmatrix}
0  & 1  & 1\\
   0 &  0 &  1\\
   1 &  0 &  0\\
   0 &  1 &  0\\
   1 &  0 &  1\\
   1  & 1 &  0
  \end{smallmatrix} \end{bmatrix},
\begin{bmatrix}\begin{smallmatrix}
1  & 0  & 0\\
   0 &  1 &  0\\
   1  & 0 &  1\\
   1  & 1 &  0\\
   1  & 1  & 1\\
   0 &  1 &  1
  \end{smallmatrix} \end{bmatrix},
\begin{bmatrix}\begin{smallmatrix}
1  & 0  & 1\\
   1  & 1 &  0\\
   1  & 1 &  1\\
   0  & 1  & 1\\
   0  & 0  & 1\\
   1 &  0  & 0
   \end{smallmatrix}\end{bmatrix}.
   \end{align}  
Based on the construction of $\widetilde{\mathcal{M}}$ in \eqref{eqn:generalization of Gabidulin codes},  each of the $64$ codewords in $\widetilde{\mathcal{M}}=\begin{bmatrix}\begin{smallmatrix}
\mathcal{M}_{1,1}+\mathcal{M}_{1,2}\\
\mathcal{M}_{2,1}+\mathcal{M}_{2,2}\\
\end{smallmatrix}\end{bmatrix}$ is the sum of one codeword from \eqref{eqn:m tilde example 1} and one from \eqref{eqn:m tilde example 2}.  %
One may readily check that the $64$ codewords in $\widetilde{\mathcal{M}}$ coincide with the $64$ codewords in the $(6\times 3, 2^6 ,3)$ circular-shift-based MRD code in Example \ref{example 2}, as well as those in Example \ref{example 3} (Please refer to Appendix-\ref{appendix:Equivalence verification} for the details). %
\hfill $\blacksquare$
\end{example}

Theorem \ref{generalization} shows that when the $J\times L$ full rank matrix $\mathbf{G}_L$ or the $L\times J$ full rank matrix $\mathbf{H}_L$ is selected in a specific form, the $(J\times n, q^{Jk}, d)$ circular-shift-based MRD codes $\mathcal{C}_1$ and $\mathcal{C}_2$ are equivalent to a generalization of Gabidulin codes over $\mathbb{F}_{q^{m_L}}$. %
However, as the following example demonstrates, there exist selections of $\mathbf{G}_L$ and $\mathbf{H}_L$ under which the circular-shift-based MRD codes $\mathcal{C}_1$ and $\mathcal{C}_2$ cannot be represented in the form of $\widetilde{\mathcal{M}}$. %

\begin{example}
Consider $q=2$ and $L=7$. In this case, $m_L=3$, $\mathcal{J}=\{1,2,3,4,5,6\}$ and $\tau(\mathbf{C}_7)=\mathbf{I}_7+\mathbf{C}_7$. %
Set 
\begin{equation}
\mathbf{G}_7=[\mathbf{I}_6~\mathbf{A}],~\mathbf{A}=[1 ~1~0~0~0~0]^\mathrm{T},~\mathbf{H}_7=\begin{bmatrix}\begin{smallmatrix}
0  & 1 &  1  & 1  & 1 &  1\\
1  & 0  & 1  & 1  & 1 &  1\\
0  & 0 &  1 &  0  & 0  & 0\\
0  & 0  & 0  & 1 &  0  & 0\\
0  & 0  & 0 &  0  & 1  & 0\\
0  & 0  & 0 &  0  & 0  & 1\\
1 &  1  & 1 &  1  & 1  & 1
\end{smallmatrix}\end{bmatrix},~\mathbf{\Psi}_{1\times 3} = 
\begin{bmatrix}
\mathbf{I}_7&\mathbf{C}_7 & \mathbf{C}_7^2
\end{bmatrix}. 
\end{equation}
According to \eqref{U_def} and \eqref{eqn:GH_def}, we obtain the following matrices
\begin{equation}
\label{exist example}
[\mathbf{u}'_j]_{j\in \mathcal{ J}}=
\begin{bmatrix}
\alpha^5 &  \alpha^3 &  \alpha^6  & \alpha^6  & \alpha^3 &  \alpha^5\\
\alpha^3 & \alpha^6 &  \alpha  & \alpha^5  & \alpha^4  & \alpha^2\\
1  & 1  & \alpha^4  & 1 &  \alpha^2  & \alpha\\
0  & 0  & \alpha^3  & 0 &  \alpha^5 &  \alpha^6\\
\alpha^6  & \alpha^5  & 0 &  \alpha^3  & 0 &  0\\
\alpha  & \alpha^2  & 1  & \alpha^4 &  1 &  1
\end{bmatrix}, \quad
[\mathbf{u}_{L-j}]_{j\in \mathcal{ J}}=
\begin{bmatrix}
\alpha^3  &  \alpha^6  &  \alpha   & \alpha^5  &  \alpha^4   & \alpha^2\\
\alpha^5   & \alpha^3  &  \alpha^6  &  \alpha^6  &  \alpha^3   & \alpha^5\\
\alpha^5   & \alpha^3   & \alpha   & \alpha^6   & \alpha^4   & \alpha^2\\
\alpha^4   & \alpha   & \alpha^5   & \alpha^2   & \alpha^6   & \alpha^3\\
\alpha^3  &  \alpha^6   & \alpha^2   & \alpha^5   & \alpha   & \alpha^4\\
\alpha^2  &  \alpha^4   & \alpha^6   & \alpha   & \alpha^3   & \alpha^5
\end{bmatrix},
   \end{equation}
where $\alpha$ denotes a root of the primitive polynomial $p(x)=x^3+x+1$ over $\mathbb{F}_2$. %
Based on the settings above, let $\mathcal{C}_1$ and $\mathcal{C}_2$ respectively denote the $(6\times 3, 2^6 ,3)$ circular-shift-based MRD codes $\mathcal{C}_1 = \{\Delta(\mathbf{m}(\mathbf{I}_1\otimes\mathbf{G}_7)\mathbf{\Psi}_{1\times 3}(\mathbf{I}_3\otimes\mathbf{H}_7)): \mathbf{m}\in \mathbb{F}_{2}^{6}\}$ and 
$\mathcal{C}_2 = \{\Delta(\mathbf{m}(\mathbf{I}_1\otimes\mathbf{G}_7)\mathbf{\Psi}_{1\times 3}(\mathbf{I}_3\otimes(\tau(\mathbf{C}_7)\mathbf{G}_7^\mathrm{T}))): \mathbf{m}\in \mathbb{F}_{2}^{6}\}$. %
As discussed in the proof of Theorem \ref{generalization}, a necessary condition for representing the circular-shift-based MRD codes in the form of $\widetilde{\mathcal{M}}$ is that the elements in every column in $[\mathbf{u}'_j]_{j\in \mathcal{ J}}$ and $[\mathbf{u}_{L-j}]_{j\in \mathcal{J}}$ can be decomposed into $J/m_L$ bases of $\mathbb{F}_{q^{m_L}}$ over $\mathbb{F}_{q}$, %
which does not hold for $\mathbf{u}'_j$ or $\mathbf{u}_{L-j}$ given in \eqref{exist example}. %
Consequently, $\mathcal{C}_1$  and $\mathcal{C}_2$ cannot be represented in the form of $\widetilde{\mathcal{M}}$.  \hfill $\blacksquare$%
\end{example}

In view of the previous example, we conclude with the following proposition.

\begin{proposition}
There exist constructions of $\mathbf{G}_L$ and $\mathbf{H}_L$ over $\mathbb{F}_{q}$ such that the corresponding circular-shift-based MRD codes $\mathcal{C}_1$ and $\mathcal{C}_2$ cannot be represented in the form of $\widetilde{\mathcal{M}}$. %
\end{proposition}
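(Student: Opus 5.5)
The proposition is existential, so I would prove it with the explicit instance in the preceding example: $q=2$, $L=7$, $\mathbf{G}_7=[\mathbf{I}_6~\mathbf{A}]$ with $\mathbf{A}=[1~1~0~0~0~0]^\mathrm{T}$, and the displayed $\mathbf{H}_7$. The argument has three steps.

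\textbf{Step 1: admissibility.} First check that this pair is admissible. This means exhibiting $\mathbf{U},\mathbf{U}'$ over $\mathbb{F}_{2^3}$ satisfying \eqref{U_def} with $\mathbf{G}_7=\mathbf{U}\widetilde{\mathbf{V}}_7$ and $\mathbf{H}_7=\mathbf{V}_7\mathbf{U}'^{\mathrm{T}}$.
\begin{itemize}
\item Take $\mathbf{U}=L_{\mathbb{F}_q}^{-1}\mathbf{G}_7\mathbf{V}_7$, using $\mathbf{V}_7\widetilde{\mathbf{V}}_7=L_{\mathbb{F}_q}\mathbf{I}_7$ from \eqref{eqn:VL VL'}.
\item Confirm that $[\mathbf{u}_j]_{j\notin\mathcal{J}}=\mathbf{u}_0=\mathbf{0}$. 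Here $\mathbf{u}_0$ is $\mathbf{G}_7\mathbf{1}$ up to scaling; its first two entries are $1+1=0$ and the rest are $1$. This does not vanish, so instead I would use the other alternative of \eqref{U_def}, namely $\mathbf{u}'_0=\mathbf{0}$.
\item Set $[\mathbf{u}'_j]_{j\in\mathcal{J}}^{\mathrm{T}}=L_{\mathbb{F}_q}^{-1}[\mathbf{u}_j]_{j\in\mathcal{J}}^{-1}$ and $\mathbf{u}'_0=\mathbf{0}$. Then check that $\mathbf{V}_7\mathbf{U}'^{\mathrm{T}}$ equals the displayed $\mathbf{H}_7$, which is a matrix over $\mathbb{F}_2$.
\item Verify that $[\mathbf{u}_j]_{j\in\mathcal{J}}$ has full rank.
\end{itemize}
These are finite computations in $\mathbb{F}_8$, and they reproduce the matrices in \eqref{exist example}. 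The choice $l_j=j$ satisfies \eqref{eqn:independent C_GH} by Theorem~\ref{the:Circular_MRD_GH}, so $\mathcal{C}_1$ and $\mathcal{C}_2$ are genuine circular-shift-based MRD codes.

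\textbf{Step 2: necessary condition.} Next, establish a necessary condition for a code of the form $\widetilde{\mathcal{M}}$. By Theorem~\ref{theorem:matrix form of C}, $\mathcal{C}_1=\{[\mathbf{u}'_j]_{j\in\mathcal{J}}\mathbf{L}_{\mathbf{m}}\}$. Grouping the rows of $\mathbf{L}_{\mathbf{m}}$ by cyclotomic cosets $\mathcal{J}_s$ gives $\mathcal{C}_1$ as a sum over $s$ of $\mathbf{W}_s\mathbf{L}^{(s)}$. Here $\mathbf{L}^{(s)}$ is the $m_L\times n$ Frobenius-orbit block for coset $s$ (a Moore-type evaluation matrix, as in \eqref{eqn:Lus}), and $\mathbf{W}_s$ consists of the corresponding columns of $[\mathbf{u}'_j]$.

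A code $\widetilde{\mathcal{M}}$ instead has block rows $\mathbf{M}_\mathbf{o}(\mathcal{B}_{i,s})\mathbf{L}_{\bm\lambda_s}$. Each entry of $\mathbf{L}_{\bm\lambda_s}$ is a linearized-polynomial evaluation. Matching the two forms forces each $\mathbf{W}_s$, cut into $J/m_L$ row blocks, to have the shape $\mathbf{M}_\mathbf{o}(\mathcal{B})$ up to a Frobenius-compatible reindexing. That means its first column must split into $J/m_L$ segments, each forming a basis of $\mathbb{F}_{q^{m_L}}$ over $\mathbb{F}_q$, with the remaining columns being the Frobenius powers of those segments. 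I would make this rigorous by comparing the $\mathbb{F}_{q^{m_L}}$-span structure of the two parametrizations, using that the maps $\mathbf{m}\mapsto\mathbf{t}_j$ are surjective onto the relevant evaluation spaces. The equality $\mathcal{C}_1=\widetilde{\mathcal{M}}$ would then pin down the $\mathbf{W}_s$ uniquely, mirroring the proof of Theorem~\ref{generalization} in reverse.

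\textbf{Step 3: the condition fails here.} Finally, check that the condition fails for \eqref{exist example}. In $[\mathbf{u}'_j]_{j\in\mathcal{J}}$, the column for $j=1$ is $[\alpha^5,\alpha^3,1,0,\alpha^6,\alpha]^\mathrm{T}$. Its second length-$3$ segment is $(0,\alpha^6,\alpha)$, which contains $0$ and so cannot be a basis. The same test applied to the other columns, under every admissible row partition, rules out $\mathcal{C}_1$. An analogous check on $[\mathbf{u}_{L-j}]_{j\in\mathcal{J}}$, using \eqref{matrix C2}, rules out $\mathcal{C}_2$.

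The main obstacle is Step 2. The necessary condition must be invariant under the freedom in $\widetilde{\mathcal{M}}$, namely the choices of bases $\mathcal{B}_{i,s}$, of $\mathcal{F}_s$, and of row ordering. I would first try to prove it by fixing a coset $s$, specializing $\mathbf{m}$ so that only $\mathbf{t}_j$ with $j\in\mathcal{J}_s$ are nonzero, and identifying the resulting subcode with $\sum_i\mathcal{M}_{i,s}$ through a dimension count.
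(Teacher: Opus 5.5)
Your route is the paper's: the same $L=7$ example, the same criterion (each length-$m_L$ segment of $\mathbf{u}'_{\varepsilon_s}$, resp.\ $\mathbf{u}_{L-\varepsilon_s}$, must be an $\mathbb{F}_2$-basis of $\mathbb{F}_8$), and the same kind of failing segment. Steps 1 and 3 are fine. Admissibility reduces to $\mathbf{G}_7\mathbf{H}_7=\mathbf{I}_6$ and $\mathbf{1}^{\mathrm{T}}\mathbf{H}_7=\mathbf{0}$, so $\mathbf{u}'_0=\mathbf{0}$. For $\mathcal{C}_2$, note that the relevant column $\mathbf{u}_6=[\alpha^3~\alpha^5~\alpha^5~\alpha^4~\alpha^3~\alpha^2]^{\mathrm{T}}$ has no zero, but its first segment $(\alpha^3,\alpha^5,\alpha^5)$ is $\mathbb{F}_2$-dependent.

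The real problem is Step 2, and the paper has the same hole: the proof of Theorem~\ref{generalization} only shows that the segment condition is sufficient. Your plan for closing it does not work as stated, because equality of the codes does not pin down the $\mathbf{W}_s$. You can replace a column $\mathbf{w}$ of $\mathbf{W}$ by $(c\,\mathbf{w})^{\circ q^e}$ (adjusting $\mathcal{F}_s$), and you can permute cosets. Whenever two evaluation sets agree up to a scalar and a Frobenius power, you can, after normalizing them to coincide, pass to $\mathbf{W}\mathbf{T}$ with $\mathbf{T}$ invertible over $\mathbb{F}_{q^{m_L}}$, which can turn a non-basis segment into a basis. So the criterion is not automatically necessary. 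Likewise, a dimension count cannot match the coset-$s$ subcode of $\mathcal{C}_1$ with a coset subcode of $\widetilde{\mathcal{M}}$: nothing yet shows that the splitting into coset pieces is determined by the code itself.

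The missing ingredient is an invariant of the code that forces every reparametrization to be monomial and semilinear. In the example ($k=1$, $n=m_L=3$), any representation reads: column $t$ of a codeword equals $\Omega(\lambda_1f_{1,t},\lambda_2f_{2,t})$, where $\Omega(x_1,x_2)_r=\mathrm{Tr}(w_{r,1}x_1)+\mathrm{Tr}(w_{r,2}x_2)$ maps $\mathbb{F}_8^2\to\mathbb{F}_2^6$. For $\mathcal{C}_1$, $\Omega$ is bijective because $[\mathbf{u}'_j]_{j\in\mathcal{J}}$ is invertible. Since $f_{s,0}=\beta^0=1$, column $0$ runs through all of $\mathbb{F}_2^6$, which forces any competing $\Omega'$ to be bijective too (use $f'_{s,0}\neq 0$).

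Hence the code alone determines the linear map $\Theta$ sending column $0$ of each codeword to column $1$, and $\Theta=\Omega\,\mathrm{Diag}(\beta,\beta^3)\,\Omega^{-1}=\Omega'\,\mathrm{Diag}(a_1,a_2)\,\Omega'^{-1}$ with $a_s=f'_{s,1}/f'_{s,0}$. Since $\beta$ and $\beta^3$ lie in different $2$-cyclotomic cosets modulo $7$, their minimal polynomials are the two distinct irreducible cubics over $\mathbb{F}_2$, so the characteristic polynomial of $\Theta$ is squarefree. Consequently $\Omega^{-1}\Omega'$ must map each $\mathbb{F}_8$-component onto a single component via some $x\mapsto c\,x^{2^e}$. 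Every column of the competing $\mathbf{W}'$ therefore has the form $(c\,\mathbf{w})^{\circ 2^{e}}$ for a column $\mathbf{w}$ of $\mathbf{W}$.

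Such maps preserve zero entries and $\mathbb{F}_2$-dependence within a segment. This makes your Step 3 conclusive for $\mathcal{C}_1$, and, with $\mathbf{u}_{L-j}$ in place of $\mathbf{u}'_j$, for $\mathcal{C}_2$. Also, the row blocks of $\widetilde{\mathcal{M}}$ are fixed, so there is no row partition to range over, and a single bad segment already suffices.
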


\section{Computational complexity analysis}
\label{sec:Complexity Analysis}
A key advantage of circular-shift-based MRD codes is that the construction avoids the arithmetic over $\mathbb{F}_{q^{m_L}}$. %
For an $L$-dimensional row vector $\mathbf{m}$, we follow the conventional assumption (similar to \cite{Tang_LNC_TIT}\cite{Hou_Basic}) that there is no computational complexity cost to perform the column-wise circular-shift operation $\mathbf{m}\mathbf{C}_L^j$ on $\mathbf{m}$. %
From a practical point of view, we only consider the case $q = 2$. In addition, for the ease of presentation, we only deal with the case that $L$ is prime, so that $J=L-1$, $\tau(\mathbf{C}_L)=\mathbf{I}_L+\mathbf{C}_L$ and we can set that $\mathbf{G}_L = [\mathbf{I}_{L-1} ~ \mathbf{0}]$ and $\mathbf{H}_L= [\mathbf{I}_{L-1} ~ \mathbf{1}]^\mathrm{T}$. %
For $n \leq m_L$, we compare the computational complexity of generating a codeword of the following classes of rank-metric codes: i) the $((L-1) \times n, 2^{(L-1)k}, d)$ circular-shift-based MRD codes $\mathcal{C}_1$ and $\mathcal{C}_2$ introduced in Section \ref{sec:Circular-shift-based MRD codes}; ii) the $((L-1) \times n, 2^{(L-1)k}, d)$ Gabidulin MRD code, which is induced from an $(n, k)$ Gabidulin (linear) code over $\mathbb{F}_{2^{L-1}}$; iii) the $((L-1) \times n, 2^{(L-1)k}, d)$ rank-metric code $\widetilde{\mathcal{M}}$ which is constructed based on $(n, k)$ Gabidulin (linear) codes over $\mathbb{F}_{2^{m_L}}$ as introduced in \eqref{eqn:generalization of Gabidulin codes}.

We first analyze the computational complexity of generating a codeword ({\emph{i.e.}, a matrix in $\mathbb{F}_{2}^{(L-1)\times n}$) of an $((L-1) \times n, 2^{(L-1)k}, d)$ circular-shift-based MRD code $\mathcal{C}_1 = \{\Delta(\mathbf{m}(\mathbf{I}_k\otimes\mathbf{G}_L)\mathbf{\Psi}_{k\times n}(\mathbf{I}_n\otimes\mathbf{H}_L)): \mathbf{m}\in \mathbb{F}_{2}^{(L-1)k}\}$. %
Given a nonzero row vector $\mathbf{m}=[\mathbf{m}_0, \mathbf{m}_1, \ldots, \mathbf{m}_{k-1}]$ over $\mathbb{F}_{2}$, where $\mathbf{m}_s \in \mathbb{F}_2^{L-1}$ for $ 0 \leq s \leq k-1$, 
let $\Delta(\mathbf{c})=[\mathbf{c}_0^\mathrm{T}~ \mathbf{c}_1^\mathrm{T} ~ \ldots ~ \mathbf{c}_{n-1}^\mathrm{T}]$ denote the codeword of $\mathcal{C}_1$. %
For every $0\leq i \leq n-1$, the column $\mathbf{c}_i^\mathrm{T}$ in $\Delta(\mathbf{c})$ can be computed as follows
\begin{align}
\label{eqn:step 1}
&\mathbf{c}_i'=\sum\nolimits_{s=0}^{k-1}\mathbf{m}_s\mathbf{G}_L\mathbf{C}_L^{2^sl_{i}},\\
\label{eqn:step 2}
&\mathbf{c}_i^\mathrm{T} = (\mathbf{c}_i'\mathbf{H}_L)^\mathrm{T}.
\end{align}
It can be easily checked that it takes $(k-1)L$ and $L-1$ XOR operations to compute \eqref{eqn:step 1} and \eqref{eqn:step 2}, respectively. %
In all, it takes $n(kL-1)=nkL-n$ XOR operations to generate a codeword of $\mathcal{C}_1$. %

We then analyze the computational complexity of generating a codeword ({\emph{i.e.}, a matrix in $\mathbb{F}_{2}^{(L-1)\times n}$) of an $((L-1) \times n, 2^{(L-1)k}, d)$ circular-shift-based MRD code $\mathcal{C}_2 = \{\Delta(\mathbf{m}(\mathbf{I}_k\otimes\mathbf{G}_L)\mathbf{\Psi}_{k\times n}(\mathbf{I}_n\otimes(\tau(\mathbf{C}_L)\mathbf{G}_L^\mathrm{T}))): \mathbf{m}\in \mathbb{F}_{2}^{(L-1)k}\}$. %
Given a nonzero row vector $\mathbf{m}=[\mathbf{m}_0, \mathbf{m}_1, \ldots, \mathbf{m}_{k-1}]$ over $\mathbb{F}_{2}$, where $\mathbf{m}_s \in \mathbb{F}_2^{L-1}$ for $ 0 \leq s \leq k-1$, 
let $\Delta(\bar{\mathbf{c}})=[\bar{\mathbf{c}}_0^\mathrm{T}~ \bar{\mathbf{c}}_1^\mathrm{T} ~ \ldots ~ \bar{\mathbf{c}}_{n-1}^\mathrm{T}]$ denote the codeword of $\mathcal{C}_2$. %
First compute the following row vector $\bar{\mathbf{m}}_s \in \mathbb{F}_2^L$ for every $0\leq s \leq k-1$,
\begin{equation}
\label{eqn:step 3}
\bar{\mathbf{m}}_s = \mathbf{m}_s\mathbf{G}_L(\mathbf{I}_L+\mathbf{C}_L).
\end{equation}
Thus, for every $0\leq i \leq n-1$, the column $\bar{\mathbf{c}}_i^\mathrm{T}$ in $\Delta(\bar{\mathbf{c}})$ can be computed as follows
\begin{equation}
\label{eqn:step 4}
\bar{\mathbf{c}}_i^\mathrm{T} = \left(\sum\nolimits_{s=0}^{k-1}\bar{\mathbf{m}}_s\mathbf{C}_L^{2^sl_{i}}\mathbf{G}_L^\mathrm{T}\right)^\mathrm{T}.
\end{equation}
It can be easily checked that it takes $L$ and $(k-1)(L-1)$ XOR operations to compute \eqref{eqn:step 3} and \eqref{eqn:step 4}, respectively. %
In all, it takes $kL+n(k-1)(L-1)=nkL-(k-1)n-(n-k)L$ XOR operations to generate a codeword of $\mathcal{C}_2$. %
It turns out that when $k > 1$, the computational complexity to generate a codeword in $\mathcal{C}_2$ is slightly lower than that to generate a codeword in $\mathcal{C}_1$. %

We next analyze the computational complexity of generating a codeword ({\emph{i.e.}, a matrix in $\mathbb{F}_{2}^{(L-1)\times n}$) of an $((L-1) \times n, 2^{(L-1)k}, d)$ Gabidulin MRD code over $\mathbb{F}_2$, which is equivalent to generate a codeword ({\emph{i.e.}, a row vector in $\mathbb{F}_{2^{L-1}}^n$) of an $(n, k)$ Gabidulin code $\mathcal{V}$ over $\mathbb{F}_{2^{L-1}}$. %
Given a nonzero row vector $\mathbf{u}=[u_0, u_1, \ldots, u_{k-1}] \in \mathbb{F}_{2^{L-1}}^k$, let $\mathbf{v}=[v_0,v_1,\ldots,v_{n-1}]$ denote the codeword of $\mathcal{V}$. %
For every $0\leq i \leq n-1$, the element $v_i$ in $\mathbf{v}$ can be computed as follows
\begin{equation}
v_i=\sum\nolimits_{s=0}^{k-1} u_s{\beta_i}^{2^s}~\forall 0\leq i \leq n-1,
\end{equation}
where $\beta_0, \beta_1, \ldots, \beta_{n-1}$ are $n$ $\mathbb{F}_2$-linearly independent elements in $\mathbb{F}_{2^{L-1}}$. %
Addition of two elements in $\mathbb{F}_{2^{L-1}}$ takes $L-1$ XOR operations. It turns out that it requires $n(k-1)(L-1)$ XOR operations and $nk$ multiplications in $\mathbb{F}_{2^{L-1}}$ to obtain $\mathbf{v}$. %
Herein, similar to \cite{Hou_Basic} and \cite{Tang_LNC_TIT}, as a fundamental comparison of the encoding complexity between circular-shift-based linear codes and linear codes over $\mathbb{F}_{2^{L-1}}$, we only consider the generic multiplication in $\mathbb{F}_{2^{L-1}}$ by multiplying two $\mathbb{F}_2$-polynomials modulo an irreducible $\mathbb{F}_2$-polynomial, which takes $O(L^2)$ XOR operations. %
For a more detailed justification on such consideration, please refer to Section VI-A of \cite{Hou_Basic}. %
In all, it takes $n(k-1)(L-1) + nkO(L^2) = nkO(L^2)$ XOR operations to generate a codeword of the $(n, k)$ Gabidulin code over $\mathbb{F}_{2^{L-1}}$. %
Notice that in the above analysis, we skipped the step to transform a codeword in $\mathbb{F}_{2^{L-1}}^n$ to the $(L-1)\times n$ binary matrix codeword with respect to a defined basis, and the computational complexity of this step depends on the particular selection of the basis.

Finally, for $m_L < L-1$, we analyze the computational complexity of generating a codeword ({\emph{i.e.}, a matrix in $\mathbb{F}_{2}^{(L-1)\times n}$) of the $((L-1) \times n, 2^{(L-1)k}, d)$ rank-metric code $\widetilde{\mathcal{M}}$ formulated in \eqref{eqn:generalization of Gabidulin codes}. %
Let $\widetilde{\mathbf{M}}$ denote a codeword of $\widetilde{\mathcal{M}}$, which is given by
\begin{equation}
\widetilde{\mathbf{M}}=
\begin{bmatrix}
\sum\nolimits_{s=1}^{(L-1)/{m_L}}\mathbf{M}_{1,s}\\
\sum\nolimits_{s=1}^{(L-1)/{m_L}}\mathbf{M}_{2,s}\\
\vdots\\
\sum\nolimits_{s=1}^{(L-1)/{m_L}}\mathbf{M}_{(L-1)/{m_L},s}\\
\end{bmatrix},
\end{equation}
where each $\mathbf{M}_{i,s}$, $1 \leq i, s\leq (L-1)/{m_L}$, denotes a codeword of an $(m_L\times n, q^{m_Lk}, d)$ Gabidulin code $\mathcal{M}_{i,s}$. %
The process of generating $\widetilde{\mathbf{M}}$ consists of two steps: 1) for $1 \leq i, s\leq (L-1)/{m_L}$, generate the codeword $\mathbf{M}_{i,s}$ ({\emph{i.e.}, a matrix in $\mathbb{F}_{2}^{m_L\times n}$) of $\mathcal{M}_{i,s}$; 2) sum every group of $\frac{L-1}{m_L}$ codewords  $\mathbf{M}_{i,s}$, $1 \leq s \leq \frac{L-1}{m_L}$. %
By a discussion similar to the previous paragraph, generating $\mathbf{M}_{i,s}$ requires $n(k-1)m_L$ XOR operations and $nk$ multiplications in $\mathbb{F}_{2^{m_L}}$, which implies that step $1)$ requires $n(k-1)\frac{(L-1)^2}{m_L}$ XOR operations and $nk(\frac{L-1}{m_L})^2$ multiplications in $\mathbb{F}_{2^{m_L}}$. %
Since the addition of two row vectors in $\mathbb{F}_{2^{m_L}}^n$ takes $nm_L$ XOR operations, step $2)$ requires $nm_L\frac{L-1}{m_L}(\frac{L-1}{m_L}-1)=n\frac{(L-1)^2}{m_L}-n(L-1)$ XOR operations. %
In all, generating a codeword of $\widetilde{\mathcal{M}}$ requires $nk\frac{(L-1)^2}{m_L}-n(L-1)$ XOR operations and $nk(\frac{L-1}{m_L})^2$ multiplications in $\mathbb{F}_{2^{m_L}}$. %
Same as the discussion in the previous paragraph, we assume a multiplication in $\mathbb{F}_{2^{m_L}}$ takes $O(m_L^2)$ XOR operations. Then, generating a codeword of $\widetilde{\mathcal{M}}$ requires $nk\frac{(L-1)^2}{m_L}-n(L-1)+nk(\frac{L-1}{m_L})^2O(m_L^2) = O(nkL^2)$ XOR operations. %
 
In summary, Table \ref{table:XOR_number} compares the computational complexity to generate a codeword of different $((L-1) \times n, 2^{(L-1)k}, d)$ rank-metric codes. It can be observed from Table \ref{table:XOR_number} that no multiplication in $\mathbb{F}_{2^{L-1}}$ or $\mathbb{F}_{2^{m_L}}$ is required in the construction of circular-shift-based MRD codes $\mathcal{C}_1$ and $\mathcal{C}_2$. %
Generating a codeword of an $((L-1) \times n, 2^{(L-1)k}, d)$ circular-shift-based MRD code requires $O(nkL)$ XOR operations, while generating a codeword of an $((L-1) \times n, 2^{(L-1)k}, d)$ Gabidulin code, based on the customary construction of an $(n, k)$ Gabidulin (linear) code over $\mathbb{F}_{2^{L-1}}$, requires $O(nkL^2)$ XOR operations. %
Notice that when $L-1 = m_L$, the $((L-1) \times n, 2^{(L-1)k}, d)$ rank-metric code $\widetilde{\mathcal{M}}$ degenerates to an $((L-1) \times n, 2^{(L-1)k}, d)$ Gabidulin code. In this case, the computational complexity to generate a codeword of $\widetilde{\mathcal{M}}$ and of an $((L-1) \times n, 2^{(L-1)k}, d)$ Gabidulin code is same, consistent with the analysis presented in Table \ref{table:XOR_number}.

\begin{table}[h]
\center
\caption{Computational complexity analysis of constructing a codeword (a matrix in $\mathbb{F}_{2}^{(L-1)\times n}$) of different $((L-1) \times n, 2^{(L-1)k}, d)$ rank-metric codes, where $L$ is prime and $n \leq m_L$. The total complexity is obtained based on the assumption that a multiplication in $\mathbb{F}_{2^N}$ takes $O(N^2)$ XOR operations, which has also been adopted in \cite{Hou_Basic} and \cite{Tang_LNC_TIT}.}
\label{table:XOR_number}
\renewcommand{\arraystretch}{2}  
\begin{tabular}{|c|c|c|c|}
 \hline
  Type of rank-metric codes& \ \# XOR operations &  \# Multiplications &  \makecell*[c] {Total complexity \\ (in \# XOR operations)}  \\
  \hline
  Gabidulin code reviewed in Definition \ref{def:Gabidulin codes}& $n(k-1)(L-1)$ & $nk$ multiplications in $\mathbb{F}_{2^{L-1}}$ & $O(nkL^2)$\\
  \hline
  $\widetilde{\mathbf{M}}$ introduced in \eqref{eqn:generalization of Gabidulin codes} & $nk\frac{(L-1)^2}{m_L}-n(L-1)$ &$nk(\frac{L-1}{m_L})^2$ multiplications in $\mathbb{F}_{2^{m_L}}$& $O(nkL^2)$\\
  \hline
  $\mathcal{C}_1$ introduced in Section \ref{sec:Circular-shift-based MRD codes} &$nkL-n$& 0 &  $O(nkL)$\\
  \hline
   $\mathcal{C}_2$ introduced in Section \ref{sec:Circular-shift-based MRD codes} &$nkL-(k-1)n-(n-k)L$& 0& $O(nkL)$\\
  \hline
\end{tabular}
\end{table}

\section{Concluding Remarks}
\label{section:Concluding Remarks}
In this paper, we introduced a construction of $(J \times n, q^{Jk}, d)$ MRD codes based on circular-shift operations with efficient encoding, where $J$ equals to the Euler's totient function of a defined $L$ subject to $\gcd(q, L) = 1$. %
Different from conventional construction of $(J \times n, q^{Jk}, d)$ Gabidulin codes, which relies on the arithmetic of $\mathbb{F}_{q^J}$, the codewords of the proposed MRD codes are directly generated based on the arithmetic of $\mathbb{F}_{q}$. %
To clarify the inherent difference and connection between the proposed MRD codes and Gabidulin codes, we equivalently characterized them based on a set of $q$-linearized polynomials over the row vector space $\mathbb{F}_{q}^N$ and showed that their polynomial evaluations are different. %
For the case $J \neq m_L$, where $m_L$ denotes the multiplicative order of $q$ modulo $L$, we proved that the proposed MRD codes, in a family of settings, are different from any Gabidulin code and any twisted Gabidulin code with $h = 0$. %
We also provided explicit examples demonstrating that outside these settings, the proposed MRD codes may either coincide with or differ from $(J \times n, q^{Jk}, d)$ Gabidulin codes. %
In the special case $J = m_L$, even though a $(J \times n, q^{Jk}, d)$ circular-shift-based MRD code coincides with a $(J \times n, q^{Jk}, d)$ Gabidulin code, the proposed construction has its own merit because it provides a new approach to construct Gabidulin codes purely based on the arithmetic of $\mathbb{F}_{q}$ and avoids the arithmetic of $\mathbb{F}_{q^{J}}$. %
Moreover, we proved that the proposed MRD codes are equivalent to a generalization of Gabidulin codes in the following form
\begin{equation}
\widetilde{\mathcal{M}}=
\begin{bmatrix}
\sum\nolimits_{s=1}^{J/{m_L}}\mathcal{M}_{1,s}\\
\sum\nolimits_{s=1}^{J/{m_L}}\mathcal{M}_{2,s}\\
\vdots\\
\sum\nolimits_{s=1}^{J/{m_L}}\mathcal{M}_{J/{m_L},s}\\
\end{bmatrix}, 
\end{equation}
where each $\mathcal{M}_{i,s}$, $1 \leq i, s\leq J/{m_L}$, denotes an $(m_L\times n, q^{m_Lk}, d)$ Gabidulin code. %
When $q=2$, $L$ is prime and $n\leq m_L$, it is analyzed that generating a codeword of the proposed $((L-1) \times n, 2^{(L-1)k}, d)$ MRD codes requires $O(nkL)$ XOR operations, while generating a codeword of $((L-1) \times n, 2^{(L-1)k}, d)$ Gabidulin codes, based on customary construction, requires $O(nkL^2)$ XOR operations. %

As potential future work, it is interesting to characterize the relationship between circular-shift-based MRD codes with parameters not satisfying \eqref{eqn:lj setting} and Gabidulin codes, and to establish relationship between the proposed MRD codes and twisted Gabidulin codes with $q > 2$, $\eta \neq 0$, $h \geq 1$. %
In addition, the decoding for the proposed MRD codes and the design of efficient decoding algorithms lie beyond the scope of the present study and are left for future research. %
Exploring potential applications of the proposed MRD codes in subspace coding is also a interesting direction for future research.

\appendix
\subsection{Proof of Proposition \ref{prop:Connections between M1 and M2}}
\label{appendix:proof proposition M1M2}
Given an arbitrary nonzero row vector $\mathbf{u} \in\mathbb{F}_{q^N}^k$, let  $\mathbf{M}(\mathbf{u})$ denote the codeword of $\mathcal{M} = \{\mathbf{M}(\mathbf{u}): \mathbf{u}\in \mathbb{F}_{q^N}^k\}$ in Definition \ref{def:matrix  representation}, and let $\mathbf{v}$ denote the codeword of $\mathcal{V} = \{\mathbf{u}\mathbf{G}: \mathbf{u}\in \mathbb{F}_{q^N}^k\}$ in Definition \ref{def:Gabidulin codes}. %
Based on \eqref{eqn:counterpart matrix eqn}, in order to prove \eqref{eqn:M1 and M2}, it is equivalent to prove that for any $\mathbf{u} \in \mathbb{F}_{q^N}^k$, there is a unique $\mathbf{v} \in \mathcal{V}$ such that
\begin{equation}
\label{eqn:BV}
\mathcal{B}'\mathbf{M}(\mathbf{u})=\mathbf{v}.
\end{equation} 
In terms of the $q$-linearized polynomial $L_\mathbf{u}(x)$ defined in \eqref{eqn:L_u(x)}, $\mathbf{v}$ can be represented as
\begin{equation}
\mathbf{v}=\mathbf{u}\mathbf{G}=[L_\mathbf{u}(\beta_0)~L_\mathbf{u}(\beta_1)~\ldots~L_\mathbf{u}(\beta_{n-1})].
\end{equation} %
Recall that $\mathbf{M}(\mathbf{u})=\mathbf{M}_\mathbf{o}(\mathcal{B})\mathbf{L}_\mathbf{u}$ as given in \eqref{eqn:Mu}. %
Consequently, 
\begin{equation}
\label{proof proposition M1M2 1}
\mathcal{B}'\mathbf{M}(\mathbf{u})=\mathcal{B}'\mathbf{M}_\mathbf{o}(\mathcal{B})\mathbf{L}_\mathbf{u}.
\end{equation}
By making use of \eqref{def:dual basis}, we have
\begin{equation}
\mathcal{B}'\mathbf{M}_\mathbf{o}(\mathcal{B})=[1~0~\ldots~0].
\end{equation}
Subsequently, \eqref{proof proposition M1M2 1} can be expressed as
\begin{equation}
\mathcal{B}'\mathbf{M}(\mathbf{u})=[1~0~\ldots~0]\mathbf{L}_\mathbf{u}=[L_\mathbf{u}(\beta_0)~L_\mathbf{u}(\beta_1)~\ldots~L_\mathbf{u}(\beta_{n-1})].
\end{equation}
Thus, we can obtain $\mathcal{B}'\mathbf{M}(\mathbf{u})=\mathbf{v}$, which completes the proof of \eqref{eqn:M1 and M2}.

\subsection{Proof of Lemma \ref{lemma:C2=TC1}}
\label{appendix:C2=TC1}
Observe that the set $\{L-j: j \in \mathcal{J}\}$ is equivalent to $\mathcal{J}$. %
Since $\tau(\beta^j) \neq 0$ for all $j \in \mathcal{J}$, and 
both $[\mathbf{u}_{L-j}]_{j \in \mathcal{J}}$ and $[\mathbf{u}_{j}]_{j \in \mathcal{J}}$ are full rank, the $J\times J$ matrix $\mathbf{T}$ is full rank. %

Following a similar approach to that used in proving \eqref{eqn:GH_mu_C}, we next show that
\begin{equation}
\label{eqn:GG_mu_C}
\mathbf{u}_{L-j} \beta^{(L-j)l}\tau(\beta^{L-j})=\mathbf{G}_L\mathbf{C}_L^l\tau(\mathbf{C}_L)\mathbf{G}_L^\mathrm{T}\mathbf{u}_j', \quad \forall 0\leq l \leq L-1,~j \in \mathcal{J},
\end{equation}
where $\mathbf{u}_{L-j}$ and $\mathbf{u}_j'$ denote the $(L-j+1)^{st}$ column of $\mathbf{U}$ and the $(j+1)^{st}$ column of $\mathbf{U}'$, respectively. %
By making use of \eqref{eqn:GH_def} and \eqref{eqn:VL VL'}, Eq. \eqref{eqn:GG_mu_C} can be expressed as
\begin{align}
\label{eqn:GG step 2}
\mathbf{u}_{L-j} \beta^{(L-j)l}\tau(\beta^{L-j})=\mathbf{U}\mathrm{Diag}([\beta^{lj}]_{0 \leq j \leq L-1})\mathrm{Diag}([\tau(\beta^{j})]_{0 \leq j \leq L-1})\widetilde{\mathbf{V}}_L^2\mathbf{U}^\mathrm{T}\mathbf{u}_j'.
\end{align}
According to \cite{Jin-arXiv} and \cite{tang2020circular}, we have
$\widetilde{\mathbf{V}}_L^2=L_{\mathbb{F}_{q}}\begin{bmatrix} 1&\mathbf{0}\\\mathbf{0}&\mathbf{A}_{L-1}\end{bmatrix}$,
where $\mathbf{A}_{L-1}$ denotes the $(L-1)\times(L-1)$ anti-diagonal identity matrix. %
Consequently, the right-hand side of \eqref{eqn:GG step 2} can be written as
\begin{align*}
\mathbf{U}\mathrm{Diag}([\beta^{lj}]_{0 \leq j \leq L-1})\mathrm{Diag}([\tau(\beta^{j})]_{0 \leq j \leq L-1})\widetilde{\mathbf{V}}_L^2\mathbf{U}^\mathrm{T}\mathbf{u}_j'=L_{\mathbb{F}_{q}}\sum\nolimits_{i\in \mathcal{J}}\beta^{il}\tau(\beta^i)\mathbf{u}_i\mathbf{u}_{L-i}^\mathrm{T}\mathbf{u}_j'=\mathbf{u}_{L-j} \beta^{(L-j)l}\tau(\beta^{L-j}),
\end{align*}
where the second equality holds because for $j_1, j_2 \in \mathcal{J}$, ${\mathbf{u}}_{j_1}^\mathrm{T}\mathbf{u}'_{j_2}=L_{\mathbb{F}_{q}}^{-1}$ when $j_1=j_2$, and ${\mathbf{u}}_{j_1}^\mathrm{T}\mathbf{u}'_{j_2}=0$ otherwise. %
Therefore, \eqref{eqn:GG_mu_C} is proved.

According to \eqref{eqn:GH_mu_C} and \eqref{eqn:GG_mu_C}, we have
\begin{align}
\mathbf{G}_L\mathbf{C}_L^l\tau(\mathbf{C}_L)\mathbf{G}_L^\mathrm{T}[\mathbf{u}_{L-j}']_{j \in \mathcal{J}}=&[\mathbf{u}_{j}]_{j \in \mathcal{J}}
\mathrm{Diag}([\beta^{jl}\tau(\beta^{j})]_{j\in \mathcal{J}}),\\
\label{eqn:GCHU}
\mathbf{G}_L\mathbf{C}_L^l\mathbf{H}_L[\mathbf{u}_j]_{j \in \mathcal{J}}=&[\mathbf{u}_{j}]_{j \in \mathcal{J}}\mathrm{Diag}([\beta^{jl}]_{j\in \mathcal{J}}).
\end{align}
Since $\{L-j: j \in \mathcal{J}\}=\mathcal{J}$ and $[\mathbf{u}_{j}']_{j \in \mathcal{J}}[\mathbf{u}_{j}]_{j \in \mathcal{J}}^\mathrm{T}=L_{\mathbb{F}_{q}}^{-1}\mathbf{I}_J$, we have $[\mathbf{u}_{L-j}']_{j \in \mathcal{J}}[\mathbf{u}_{L-j}]_{j \in \mathcal{J}}^\mathrm{T}=L_{\mathbb{F}_{q}}^{-1}\mathbf{I}_J$. It follows that
\begin{equation}
\label{eqn:GCG}
\mathbf{G}_L\mathbf{C}_L^l\tau(\mathbf{C}_L)\mathbf{G}_L^\mathrm{T}=L_{\mathbb{F}_{q}}[\mathbf{u}_{j}]_{j \in \mathcal{J}}
\mathrm{Diag}([\beta^{jl}\tau(\beta^{j})]_{j\in \mathcal{J}})[\mathbf{u}_{L-j}]_{j \in \mathcal{J}}^\mathrm{T}.
\end{equation}
By using  \eqref{eqn:GCHU} and \eqref{eqn:GCG}, we obtain
\begin{equation}
\mathbf{G}_L\mathbf{C}_L^l\tau(\mathbf{C}_L)\mathbf{G}_L^\mathrm{T}=
\mathbf{G}_L\mathbf{C}_L^l\mathbf{H}_L\mathbf{T}^\mathrm{T},\quad \forall 0\leq l \leq L-1.
\end{equation}
This completes the proof.

\subsection{Proof of Proposition \ref{prop:new charac of Gabidulin}}
\label{appendix:prop:new charac of Gabidulin}
With respect to $\mathcal{B}$, the $(N\times n, q^{Nk}, d)$ Gabidulin MRD code $\mathcal{M}_{\mathcal{B}}(\mathcal{V})$ can be expressed as 
\begin{equation}
\label{eqn:proposition MBV}
\mathcal{M}_{\mathcal{B}}(\mathcal{V})=\{[\mathbf{v}_{\mathcal{B}}(L(\beta_0))^\mathrm{T}~\mathbf{v}_{\mathcal{B}}(L(\beta_1))^\mathrm{T}~\ldots~\mathbf{v}_{\mathcal{B}}(L(\beta_{n-1}))^\mathrm{T}]: L(x) \in \mathcal{L}\},
\end{equation}
where $\mathbf{v}_{\mathcal{B}}(L(\beta_i))$ represents the $q$-ary representation of $L(\beta_i) \in \mathbb{F}_{q^N}$ with respect to the basis $\mathcal{B}$. %
Recall that the set $\{\mathbf{A}(\sum\nolimits_{i=0}^{N-1} a_i\gamma^i): a_0, \ldots, a_{N-1} \in \mathbb{F}_q \}$ forms a matrix representation of $\mathbb{F}_{q^N}$ over $\mathbb{F}_q$ and $\mathcal{B}_{\gamma}=[\gamma^{N-1}~\gamma^{N-2}~\ldots~\gamma~1]$ is the basis of $\mathbb{F}_{q^N}$ over $\mathbb{F}_q$. For every $0 \leq t \leq n-1$, we have
\begin{equation}
\label{eqn:propsition vB}
\begin{split}
\mathbf{v}_{\mathcal{B}}(L(\beta_t))
=&\mathbf{v}_{\mathcal{B}_{\gamma}}(L(\beta_t))\mathbf{V}^{-1}
=\mathbf{v}_{\mathcal{B}_{\gamma}}(\sum\nolimits_{i=0}^{k-1}u_i\beta_t^{q^i})\mathbf{V}^{-1}
=\sum\nolimits_{i=0}^{k-1}\mathbf{v}_{\mathcal{B}_{\gamma}}(u_i)\mathbf{A}(\beta_t)^{q^i}\mathbf{V}^{-1}\\
=&\sum\nolimits_{i=0}^{k-1}\mathbf{v}_{\mathcal{B}}(u_i)\mathbf{V}\mathbf{A}(\beta_t)^{q^i}\mathbf{V}^{-1}
=L^\dag (\mathbf{V}\mathbf{A}(\beta_t)\mathbf{V}^{-1}),
\end{split}
\end{equation}
where $L(x) \in \mathcal{L}$ and $L^\dag (x) \in \mathcal{L}_N^\dag$. %
Combining \eqref{eqn:proposition MBV} and \eqref{eqn:propsition vB} yields \eqref{expression Gabidulin}.

\subsection{Proof of Lemma \ref{GCH neq VAV}}
\label{appendix:proof lemma:GCH neq VAV}
Assume $J \neq m_L$. Define $\mathcal{A}_1=\{\mathbf{G}\mathbf{C}_L^{l}\mathbf{H}:l  \in \mathcal{J}\}$, where each $J\times J$ matrix $\mathbf{G}\mathbf{C}_L^{l}\mathbf{H}$ satisfies $(\mathbf{G}\mathbf{C}_L^{l}\mathbf{H})^L=\mathbf{I}_J$. %
Define $\mathcal{A}_2=\{\mathbf{A}(\alpha^{l(q^J-1)/L}): l \in \mathcal{J} \}$, where $\alpha$ denotes a primitive element of $\mathbb{F}_{q^J}$. Thus, for $s \in \mathbb{F}_{q^J}$, $\mathbf{A}(s)^L = \mathbf{I}_J$ if and only if $\mathbf{A}(s) \in \mathcal{A}_2$. %
In addition, for every $\mathbf{A}(s) \in \mathcal{A}_2$, there is an irreducible polynomial $f(x)$ of degree $m_L$ over $\mathbb{F}_q$ such that $f(\mathbf{A}(s)) = \mathbf{0}$. %
It suffices to prove that $\mathbf{G}\mathbf{C}_L^l\mathbf{H}$ cannot be represented in the form $\mathbf{V}\mathbf{A}(s)\mathbf{V}^{-1}$, regardless of the choices of invertible matrix $\mathbf{V} \in \mathbb{F}_q^{J\times J}$ and $\mathbf{A}(s)\in\mathcal{A}_2$. %

Assume $\mathbf{G}\mathbf{C}_L^l\mathbf{H} = \mathbf{V}\mathbf{A}(s)\mathbf{V}^{-1}$ for some $\mathbf{A}(s)\in\mathcal{A}_2$. Denote by $f(x)$ the irreducible polynomial over $\mathbb{F}_q$ of degree $m_L$ with $f(\mathbf{A}(s)) = \mathbf{0}$. Since $f(\mathbf{V}\mathbf{A}(s)\mathbf{V}^{-1}) = \mathbf{V}f(\mathbf{A}(s))\mathbf{V}^{-1} = \mathbf{0}$, $f(\mathbf{G}\mathbf{C}_L^l\mathbf{H})= \mathbf{0}$. 
Since 
\begin{equation}
f(\mathbf{G}\mathbf{C}_L^l\mathbf{H})=L_{\mathbb{F}_q}
[\mathbf{u}_j]_{j \in \mathcal{J}}\mathrm{Diag}([f(\beta^{lj})]_{j\in \mathcal{J}})[\mathbf{u}'_j]_{j \in \mathcal{J}}^\mathrm{T}, 
\end{equation}
$f(\mathbf{G}\mathbf{C}_L^l\mathbf{H})= \mathbf{0}$ implies that $f(\beta^{lj})=0$ for all $j \in \mathcal{J}$. %
However, as $\{lj: j \in \mathcal{J}\}=\mathcal{J}$ for every $l \in \mathcal{J}$, $\beta^{lj_1}, \beta^{lj_2}, \ldots, \beta^{lj_J}$ are $J>m_L$ distinct roots of $f(x)$, which contradicts the fact that the degree of $f(x)$ is $m_L$. %
This contradiction completes the proof. 

\subsection{Proof of Theorem \ref{theorem:matrix form of C}}
\label{appendix:Proof of theorem}
Given a nonzero row vector $\mathbf{m}=[\mathbf{m}_0, \mathbf{m}_1, \ldots, \mathbf{m}_{k-1}]$ over $\mathbb{F}_{q}$, where $\mathbf{m}_s \in \mathbb{F}_{q}^J$ for $0 \leq s \leq k-1$, let $\Delta(\mathbf{c})$ and $\Delta(\bar{\mathbf{c}})$ respectively denote the codewords of $\mathcal{C}_1$ and $\mathcal{C}_2$, which are given by
\begin{equation}
\label{Justification of c_step_1}
\Delta(\mathbf{c})=\begin{bmatrix}
\sum_{s=0}^{k-1}\mathbf{m}_s\mathbf{G}_L\mathbf{C}_L^{q^sl_{0}}\mathbf{H}_L\\
\sum_{s=0}^{k-1}\mathbf{m}_s\mathbf{G}_L\mathbf{C}_L^{q^sl_{1}}\mathbf{H}_L\\
\vdots\\
\sum_{s=0}^{k-1}\mathbf{m}_s\mathbf{G}_L\mathbf{C}_L^{q^sl_{n-1}}\mathbf{H}_L\\
\end{bmatrix}^\mathrm{T}, \quad
\Delta(\bar{\mathbf{c}})=\begin{bmatrix}
\sum_{s=0}^{k-1}\mathbf{m}_s\mathbf{G}_L\mathbf{C}_L^{q^sl_{0}}\tau(\mathbf{C}_L)\mathbf{G}_L^\mathrm{T}\\
\sum_{s=0}^{k-1}\mathbf{m}_s\mathbf{G}_L\mathbf{C}_L^{q^sl_{1}}\tau(\mathbf{C}_L)\mathbf{G}_L^\mathrm{T}\\
\vdots\\
\sum_{s=0}^{k-1}\mathbf{m}_s\mathbf{G}_L\mathbf{C}_L^{q^sl_{n-1}}\tau(\mathbf{C}_L)\mathbf{G}_L^\mathrm{T}\\
\end{bmatrix}^\mathrm{T}.
\end{equation}

By making use of \eqref{eqn:GH_mu_C}, we have
\begin{equation}
\label{c_GH_step_1}
\Delta(\mathbf{c})^\mathrm{T}[\mathbf{u}_j]_{j\in {\cal J}}
=\begin{bmatrix}
\sum_{s=0}^{k-1}\mathbf{m}_s\mathbf{u}_{d_1}\beta^{d_1q^sl_{0}} & \sum_{s=0}^{k-1}\mathbf{m}_s\mathbf{u}_{d_2}\beta^{d_2q^sl_{0}} &\ldots&\sum_{s=0}^{k-1}\mathbf{m}_s\mathbf{u}_{d_J}\beta^{d_Jq^sl_{0}}\\
\sum_{s=0}^{k-1}\mathbf{m}_s\mathbf{u}_{d_1}\beta^{d_1q^sl_{1}} & \sum_{s=0}^{k-1}\mathbf{m}_s\mathbf{u}_{d_2}\beta^{d_2q^sl_{1}} &\ldots&\sum_{s=0}^{k-1}\mathbf{m}_s\mathbf{u}_{d_J}\beta^{d_Jq^sl_{1}}\\
\vdots&\vdots&\ldots&\vdots\\
\sum_{s=0}^{k-1}\mathbf{m}_s\mathbf{u}_{d_1}\beta^{d_1q^sl_{n-1}} & \sum_{s=0}^{k-1}\mathbf{m}_s\mathbf{u}_{d_2}\beta^{d_2q^sl_{n-1}} &\ldots&\sum_{s=0}^{k-1}\mathbf{m}_s\mathbf{u}_{d_J}\beta^{d_Jq^sl_{n-1}}
\end{bmatrix},
\end{equation}
where $d_1< d_2 <\ldots < d_J$ with  $d_1, d_2,\ldots, d_J \in \mathcal{J}$. %
Since $[\mathbf{u}_j]_{j\in {\cal J}}^{-1}=L_{\mathbb{F}_{q}}[\mathbf{u}'_j]_{j\in \mathcal{ J}}^\mathrm{T}$, \eqref{c_GH_step_1} can be expressed as
\begin{equation*}
\label{c_GH_step_2}
\begin{split}
\Delta(\mathbf{c})^\mathrm{T}=&
L_{\mathbb{F}_{q}}
\begin{bmatrix}
\sum_{s=0}^{k-1}\mathbf{m}_s\mathbf{u}_{d_1}\beta^{d_1q^sl_{0}} & \sum_{s=0}^{k-1}\mathbf{m}_s\mathbf{u}_{d_2}\beta^{d_2q^sl_{0}} &\ldots&\sum_{s=0}^{k-1}\mathbf{m}_s\mathbf{u}_{d_J}\beta^{d_Jq^sl_{0}}\\
\sum_{s=0}^{k-1}\mathbf{m}_s\mathbf{u}_{d_1}\beta^{d_1q^sl_{1}} & \sum_{s=0}^{k-1}\mathbf{m}_s\mathbf{u}_{d_2}\beta^{d_2q^sl_{1}} &\ldots&\sum_{s=0}^{k-1}\mathbf{m}_s\mathbf{u}_{d_J}\beta^{d_Jq^sl_{1}}\\
\vdots&\vdots&\ldots&\vdots\\
\sum_{s=0}^{k-1}\mathbf{m}_s\mathbf{u}_{d_1}\beta^{d_1q^sl_{n-1}} & \sum_{s=0}^{k-1}\mathbf{m}_s\mathbf{u}_{d_2}\beta^{d_2q^sl_{n-1}} &\ldots&\sum_{s=0}^{k-1}\mathbf{m}_s\mathbf{u}_{d_J}\beta^{d_Jq^sl_{n-1}}
\end{bmatrix}[\mathbf{u}'_j]_{j\in \mathcal{ J}}^\mathrm{T}\\
&=
\begin{bmatrix}
L_{\mathbf{t}_{d_1}}(\beta^{d_1l_0})& L_{\mathbf{t}_{d_2}}(\beta^{d_2l_0}) &\ldots&L_{\mathbf{t}_{d_J}}(\beta^{d_Jl_0})\\
L_{\mathbf{t}_{d_1}}(\beta^{d_1l_1})& L_{\mathbf{t}_{d_2}}(\beta^{d_2l_1}) &\ldots&L_{\mathbf{t}_{d_J}}(\beta^{d_Jl_1})\\
\vdots&\vdots&\ldots&\vdots\\
L_{\mathbf{t}_{d_1}}(\beta^{d_1l_{n-1}})& L_{\mathbf{t}_{d_2}}(\beta^{d_2l_{n-1}}) &\ldots&L_{\mathbf{t}_{d_J}}(\beta^{d_Jl_{n-1}})\\
\end{bmatrix}
[\mathbf{u}'_j]_{j\in \mathcal{ J}}^\mathrm{T},
\end{split}
\end{equation*}
where $L_{\mathbf{t}_{d_j}}(x)$ is defined in \eqref{eqn:Ltj(x)}. %
Consequently, we have
\begin{equation*}
\Delta(\mathbf{c})
=[\mathbf{u}'_j]_{j\in \mathcal{ J}}
\begin{bmatrix}
L_{\mathbf{t}_{d_1}}(\beta^{d_1l_0})& L_{\mathbf{t}_{d_1}}(\beta^{d_1l_1}) &\ldots&L_{\mathbf{t}_{d_1}}(\beta^{d_1l_{n-1}})\\
L_{\mathbf{t}_{d_2}}(\beta^{d_2l_0})& L_{\mathbf{t}_{d_2}}(\beta^{d_2l_1}) &\ldots&L_{\mathbf{t}_{d_2}}(\beta^{d_2l_{n-1}})\\
\vdots&\vdots&\ldots&\vdots\\
L_{\mathbf{t}_{d_J}}(\beta^{d_Jl_0})& L_{\mathbf{t}_{d_J}}(\beta^{d_Jl_1})&\ldots&L_{\mathbf{t}_{d_J}}(\beta^{d_Jl_{n-1}})\\
\end{bmatrix},
\end{equation*}
which implies that \eqref{matrix C1} is established. %

Using a similar technique as in the preceding discussion and \eqref{eqn:GG_mu_C} in Appendix-\ref{appendix:C2=TC1}, we obtain
\begin{equation*}
\Delta(\bar{\mathbf{c}})
=[\mathbf{u}_{L-j}]_{j\in {\cal J}}
\begin{bmatrix}
L_{\mathbf{t}_{d_1}}(\beta^{d_1l_0})& L_{\mathbf{t}_{d_1}}(\beta^{d_1l_1}) &\ldots&L_{\mathbf{t}_{d_1}}(\beta^{d_1l_{n-1}})\\
L_{\mathbf{t}_{d_2}}(\beta^{d_2l_0})& L_{\mathbf{t}_{d_2}}(\beta^{d_2l_1}) &\ldots&L_{\mathbf{t}_{d_2}}(\beta^{d_2l_{n-1}})\\
\vdots&\vdots&\ldots&\vdots\\
L_{\mathbf{t}_{d_J}}(\beta^{d_Jl_0})& L_{\mathbf{t}_{d_J}}(\beta^{d_Jl_1})&\ldots&L_{\mathbf{t}_{d_J}}(\beta^{d_Jl_{n-1}})\\
\end{bmatrix},
\end{equation*}
which implies that \eqref{matrix C2} is established. %

\subsection{Proof of Lemma \ref{lemma: basis}}
\label{appendix:basis}
Let $\mathcal{B}$ denote the row vector $[\beta^{jh}~ \beta^{j(h+1)}~ \ldots ~\beta^{j(h+m_L-1)}] \in \mathbb{F}_{q^{m_L}}^{m_L}$. %
It suffices to show that
\begin{equation}
\label{full rank 1}
\mathrm{rank}(\mathbf{M}_\mathbf{o}(\mathcal{B}))=m_L,
\end{equation}
where $\mathbf{M}_\mathbf{o}(\mathcal{B})$ is a  matrix in $\mathbb{F}_{q^{m_L}}^{m_L \times m_L}$. %
Observe that
\begin{equation}
\mathrm{rank}(\mathbf{M}_\mathbf{o}(\mathcal{B}))=\mathrm{rank}(\mathbf{M}_\mathbf{o}([1~\beta^j~\ldots~\beta^{j(m_L-1)}])).
\end{equation}
Since $\det(\mathbf{M}_\mathbf{o}([1~\beta^j~\ldots~\beta^{j(m_L-1)}]))=\prod\nolimits_{0 \leq j_1 <j_2 \leq m_L-1}(\beta^{jq^{j_2}}-\beta^{jq^{j_1}})$ and $\beta^j, \beta^{jq},\ldots,\beta^{jq^{m_L-1}}$ are $m_L$ distinct elements in $\mathbb{F}_{q^{m_L}}$, it follows that $\mathbf{M}_\mathbf{o}([1~\beta^j~\ldots~\beta^{j(m_L-1)}])$ has full rank $m_L$, implying that \eqref{full rank 1} holds as desired.

\subsection{Proof of Theorem \ref{generalization}}
\label{appendix:generalization}
We first reformulate the structure of $\mathbf{L}_{\mathbf{m}}$ defined in \eqref{eqn:L_t} for the construction of $\mathcal{C}_1$ and $\mathcal{C}_2$. %
Recall that $\varepsilon_s$ denotes a defined integer in $\mathcal{J}_s$ so that all $m_L$ elements in $\mathcal{J}_s$
can be written as $\varepsilon_sq^j~\mathrm{mod}~L$ with $0\leq j \leq m_L-1$. %
Let $\widetilde{\mathbf{v}}_j$ denote the $(j+1)^{st}$ column of $\widetilde{\mathbf{V}}_L$ defined in \eqref{eqn:VL tilde}. %
Since $\mathbf{H}_L$ is over $\mathbb{F}_{q}$, we have
\begin{equation}
\label{connection u'}
 \mathbf{u}'_{\varepsilon_sq^j\mathrm{mod}~ L}=L_{\mathbb{F}_q}^{-1}\mathbf{H}_L^\mathrm{T}\widetilde{\mathbf{v}}_{\varepsilon_sq^j\mathrm{mod}~ L}=L_{\mathbb{F}_q}^{-1}\mathbf{H}_L^\mathrm{T}(\widetilde{\mathbf{v}}_{\varepsilon_s})^{\circ q^j}=(L_{\mathbb{F}_q}^{-1}\mathbf{H}_L^\mathrm{T}\widetilde{\mathbf{v}}_{\varepsilon_s})^{\circ q^j}
 =(\mathbf{u}'_{\varepsilon_s})^{\circ q^j}, \forall 0\leq j \leq m_L-1.
\end{equation}
Analogously, let $\mathbf{v}_{j}$ denote the $(j+1)^{st}$ column of $\mathbf{V}_L$ defined in \eqref{eqn:VL}. %
Since $\mathbf{G}_L$ is over $\mathbb{F}_{q}$, we have
\begin{equation}
\label{connection u}
\mathbf{u}_{\varepsilon_sq^j\mathrm{mod}~ L}=L_{\mathbb{F}_q}^{-1}\mathbf{G}_L\mathbf{v}_{\varepsilon_sq^j\mathrm{mod}~ L}=L_{\mathbb{F}_q}^{-1}\mathbf{G}_L(\mathbf{v}_{\varepsilon_s})^{\circ q^j}=(L_{\mathbb{F}_q}^{-1}\mathbf{G}_L\mathbf{v}_{\varepsilon_s})^{\circ q^j}=(\mathbf{u}_{{\varepsilon_s}})^{\circ q^j},
  ~\forall 0\leq j \leq m_L-1.
\end{equation}
For $1 \leq s \leq J/{m_L}$, let $\bm{\lambda}_s=[\lambda_{0,s}~\lambda_{1,s}~\ldots~\lambda_{k-1,s}] \in \mathbb{F}_{q^{m_L}}^k$ denote the row vector with
\begin{equation*}
\lambda_{i,s}=\mathbf{m}_i\mathbf{u}_{\varepsilon_s}, \quad 0 \leq i \leq k-1,
\end{equation*}
and let $L_{\bm{\lambda}_s}(x)$ denote the following $q$-linearized polynomial over $\mathbb{F}_{q^{m_L}}$ 
\begin{equation*}
L_{\bm{\lambda}_s}(x)=
\begin{cases}
\sum\nolimits_{i=0}^{k-1} \lambda_{i,s}x^{q^i}, \quad \mathrm{for}~\mathcal{C}_1,\\
\sum\nolimits_{i=0}^{k-1} \lambda_{i,s}\tau(\beta^{\varepsilon_s})x^{q^i},\quad \mathrm{for}~\mathcal{C}_2.
\end{cases}
\end{equation*}
Recall that $L_{\mathbf{t}_j}(x)$ used in $\mathbf{L}_{\mathbf{m}}$ is defined in \eqref{eqn:Ltj(x)}. %
For every $d_i \in \mathcal{J}_s$, assume that $d_i=\varepsilon_sq^j$ for $0\leq j \leq m_L-1$. 
For the code $\mathcal{C}_1$, based on \eqref{eqn:Ltj(x)} and \eqref{connection u}, we have 
\begin{equation*}
L_{\mathbf{t}_{d_i}}(\beta ^{d_il_t}) 
= L_{\mathbb{F}_q}\sum\nolimits_{s = 0}^{k - 1} \mathbf{m}_s(\mathbf{u}_{\varepsilon _s})^{^\circ q^j}\beta ^{\varepsilon _sq^jl_tq^s}  
= L_{\mathbb{F}_q}{\left( {\sum\nolimits_{s = 0}^{k - 1} \mathbf{m}_s\mathbf{u}_{\varepsilon _s}\beta ^{\varepsilon _sl_tq^s} } \right)^{q^j}} 
= L_{\mathbb{F}_q}L_{\bm{\lambda}_s}(\beta ^{\varepsilon _sl_t})^{q^j}
\end{equation*}
for $0\leq j \leq m_L-1$ and $0 \leq t \leq n-1$. Similarly, for the  code $\mathcal{C}_2$, we have 
\begin{equation*}
L_{\mathbf{t}_{d_i}}(\beta ^{d_il_t}) 
= L_{\mathbb{F}_q}\sum\nolimits_{s = 0}^{k - 1} \mathbf{m}_s(\mathbf{u}_{\varepsilon _s})^{^\circ q^j}\tau(\beta^j)\beta ^{\varepsilon _sq^jl_tq^s}  
= L_{\mathbb{F}_q}{\left( {\sum\nolimits_{s = 0}^{k - 1} \mathbf{m}_s\mathbf{u}_{\varepsilon _s}\tau(\beta^{\varepsilon _s})\beta ^{\varepsilon _sl_tq^s} } \right)^{q^j}} 
= L_{\mathbb{F}_q}L_{\bm{\lambda}_s}(\beta ^{\varepsilon _sl_t})^{q^j}
\end{equation*}
for $0\leq j \leq m_L-1$ and $0 \leq t \leq n-1$. Consequently, we can obtain that
\begin{equation}
L_{\mathbf{t}_{d_i}}(\beta^{d_il_t})=L_{\mathbb{F}_{q}}L_{\bm{\lambda}_{s}}(\beta^{\varepsilon_sl_t})^{q^j}, \quad 0\leq j \leq m_L-1,~0 \leq t \leq n-1.
\end{equation}
Recall that $d_1<d_2<\ldots<d_J$. Let $\mathbf{d}$ denote the column vector $\mathbf{d}=[d_1~d_2~\ldots~d_J]^\mathrm{T}$, and let $\mathbf{d}'$ denote the column vector $\mathbf{d}'=[\varepsilon_1~\varepsilon_1q~\varepsilon_1q^{m_L-1}~\ldots~\varepsilon_s~\varepsilon_sq~\varepsilon_sq^{m_L-1}]^\mathrm{T} \mod~L$ obtained by rearranging the entries of $\mathbf{d}$. %
Define $\bm{\Gamma}$ as the permutation matrix in $\mathbb{F}_q^{J \times J}$ satisfying $\mathbf{d}=\bm{\Gamma}\mathbf{d}'$. Note that $\bm{\Gamma}^{-1}=\bm{\Gamma}^\mathrm{T}$. %
Consequently, $\mathbf{L}_{\mathbf{m}}$ can be written as
\begin{equation}
\label{eqn:k=1L_m}
\mathbf{L}_{\mathbf{m}}=L_{\mathbb{F}_{q}}\bm{\Gamma}
\begin{bmatrix}
\mathbf{L}_{\bm{\lambda}_{1}}\\
\mathbf{L}_{\bm{\lambda}_{2}}\\
\vdots\\
\mathbf{L}_{\bm{\lambda}_{J/{m_L}}}
\end{bmatrix},
\end{equation}
where the matrix $\mathbf{L}_{\bm{\lambda}_{s}} \in \mathbb{F}_{q^{m_L}}^{m_L \times n}$ is given by
\begin{equation*}
\mathbf{L}_{\bm{\lambda}_{s}}=
\begin{bmatrix}
L_{\bm{\lambda}_{s}}(\beta^{\varepsilon_sl_0}) & L_{\bm{\lambda}_{s}}(\beta^{\varepsilon_sl_1}) &\ldots &L_{\bm{\lambda}_{s}}(\beta^{\varepsilon_sl_{n-1}})\\
L_{\bm{\lambda}_{s}}(\beta^{\varepsilon_sl_0})^q & L_{\bm{\lambda}_{s}}(\beta^{\varepsilon_sl_1})^q &\ldots &L_{\bm{\lambda}_{s}}(\beta^{\varepsilon_sl_{n-1}})^q\\
\vdots & \vdots &\ldots &\vdots\\
L_{\bm{\lambda}_{s}}(\beta^{\varepsilon_sl_0})^{q^{m_L-1}} & L_{\bm{\lambda}_{s}}(\beta^{\varepsilon_sl_1})^{q^{m_L-1}} &\ldots &L_{\bm{\lambda}_{s}}(\beta^{\varepsilon_sl_{n-1}})^{q^{m_L-1}}
\end{bmatrix}.
\end{equation*}
Herein, $\mathcal{F}_{s} = \{\beta^{\varepsilon_sl_0}, \beta^{\varepsilon_sl_1}, \ldots, \beta^{\varepsilon_sl_{n-1}}\}$ denote the set of $n$ $\mathbb{F}_q$-linearly independent elements in $\mathbb{F}_{q^{m_L}}$ as defined in \eqref{eqn:Ls}. %

Consider the code $\mathcal{C}_1$ with $\mathbf{H}_L=[\mathbf{I}_J~\mathbf{0}]^\mathrm{T}$. %
Based on \eqref{connection u'}, $[\mathbf{u}'_j]_{j\in \mathcal{ J}}$ can be written as
\begin{equation}
\label{eqn:u'j}
[\mathbf{u}'_j]_{j\in \mathcal{ J}}=
[\mathbf{M}_\mathbf{o}({\mathbf{u}'}_{\varepsilon_1}^\mathrm{T})~\mathbf{M}_\mathbf{o}({\mathbf{u}'}_{\varepsilon_2}^\mathrm{T})~\ldots~\mathbf{M}_\mathbf{o}({\mathbf{u}'}_{\varepsilon_{J/{m_L}}}^\mathrm{T})]\bm{\Gamma}^\mathrm{T},
\end{equation}
where each $\mathbf{M}_\mathbf{o}({\mathbf{u}'}_{\varepsilon_s}^\mathrm{T})$, $1 \leq s \leq J/{m_L}$, denotes a matrix in $\mathbb{F}_{q^{m_L}}^{J \times m_L}$. %
Since $\mathbf{u}'_{\varepsilon_s}$ can be written as
\begin{equation*}
\mathbf{u}'_{\varepsilon_s}=L_{\mathbb{F}_{q}}^{-1}[\mathcal{B}_{1,s}~\mathcal{B}_{2,s}~\ldots~\mathcal{B}_{J/{m_L},s}]^\mathrm{T},
\end{equation*}
where each $\mathcal{B}_{i,s}$ is a basis of $\mathbb{F}_{q^{m_L}}$ over $\mathbb{F}_{q}$ defined in \eqref{Bis},  Eq. \eqref{eqn:u'j} can be expressed as
\begin{equation}
\label{psi matrix}
[\mathbf{u}'_j]_{j\in \mathcal{ J}}=L_{\mathbb{F}_{q}}^{-1}
\begin{bmatrix}
\mathbf{M}_\mathbf{o}(\mathcal{B}_{1,1}) & \mathbf{M}_\mathbf{o}(\mathcal{B}_{1,2}) &\ldots&\mathbf{M}_\mathbf{o}(\mathcal{B}_{1,{J/{m_L}}})\\
\mathbf{M}_\mathbf{o}(\mathcal{B}_{2,1})& \mathbf{M}_\mathbf{o}(\mathcal{B}_{2,2}) &\ldots&\mathbf{M}_\mathbf{o}(\mathcal{B}_{2,{J/{m_L}}})\\
\vdots\\
\mathbf{M}_\mathbf{o}(\mathcal{B}_{J/{m_L},1})& \mathbf{M}_\mathbf{o}(\mathcal{B}_{J/{m_L},2}) &\ldots&\mathbf{M}_\mathbf{o}(\mathcal{B}_{J/{m_L},{J/{m_L}}})\\
\end{bmatrix}\bm{\Gamma}^\mathrm{T}.
\end{equation}
Consequently, $[\mathbf{u}'_j]_{j\in \mathcal{ J}}\mathbf{L}_{\mathbf{m}}$ can be expressed as 
\begin{equation*}
\begin{split}
[\mathbf{u}'_j]_{j\in \mathcal{ J}}\mathbf{L}_{\mathbf{m}}=&
\begin{bmatrix}
\mathbf{M}_\mathbf{o}(\mathcal{B}_{1,1}) & \mathbf{M}_\mathbf{o}(\mathcal{B}_{1,2}) &\ldots&\mathbf{M}_\mathbf{o}(\mathcal{B}_{1,{J/{m_L}}})\\
\mathbf{M}_\mathbf{o}(\mathcal{B}_{2,1})& \mathbf{M}_\mathbf{o}(\mathcal{B}_{2,2}) &\ldots&\mathbf{M}_\mathbf{o}(\mathcal{B}_{2,{J/{m_L}}})\\
\vdots\\
\mathbf{M}_\mathbf{o}(\mathcal{B}_{J/{m_L},1})& \mathbf{M}_\mathbf{o}(\mathcal{B}_{J/{m_L},2}) &\ldots&\mathbf{M}_\mathbf{o}(\mathcal{B}_{J/{m_L},{J/{m_L}}})\\
\end{bmatrix}
\begin{bmatrix}
\mathbf{L}_{\bm{\lambda}_{1}}\\
\mathbf{L}_{\bm{\lambda}_{2}}\\
\vdots\\
\mathbf{L}_{\bm{\lambda}_{J/{m_L}}}
\end{bmatrix}\\
=&
\begin{bmatrix}
\sum\nolimits_{s=1}^{J/{m_L}}\mathbf{M}_\mathbf{o}(\mathcal{B}_{1,s})\mathbf{L}_{\bm{\lambda}_{1}}\\
\sum\nolimits_{s=1}^{J/{m_L}}\mathbf{M}_\mathbf{o}(\mathcal{B}_{2,s})\mathbf{L}_{\bm{\lambda}_{2}}\\
\vdots\\
\sum\nolimits_{s=1}^{J/{m_L}}\mathbf{M}_\mathbf{o}(\mathcal{B}_{J/{m_L},s})\mathbf{L}_{\bm{\lambda}_{J/{m_L}}}\\
\end{bmatrix}.
\end{split}
\end{equation*}
This implies that $\mathcal{C}_1$ can be expressed as follows
\begin{equation*}
\mathcal{C}_1=\begin{bmatrix}
\sum\nolimits_{s=1}^{J/{m_L}}\mathcal{M}_{1,s}\\
\sum\nolimits_{s=1}^{J/{m_L}}\mathcal{M}_{2,s}\\
\vdots\\
\sum\nolimits_{s=1}^{J/{m_L}}\mathcal{M}_{J/{m_L},s}\\
\end{bmatrix},
\end{equation*}
where for $1 \leq i,s \leq J/{m_L}$, each $\mathcal{M}_{i,s}=\{\mathbf{M}_\mathbf{o}(\mathcal{B}_{i,s})\mathbf{L}_{\bm{\lambda}_{s}}: \bm{\lambda}_{s}\in \mathbb{F}_{q^{m_L}}^k\}$ denotes an $(m_L\times n, q^{m_Lk}, d)$ Gabidulin code. %
In view of the construction of $\widetilde{\mathcal{M}}$ defined in \eqref{eqn:generalization of Gabidulin codes} with respect to $\mathcal{B}_{i,s}$ and $\mathcal{F}_{s}$, we can obtain that $\mathcal{C}_1=\widetilde{\mathcal{M}}$.

Consider the code $\mathcal{C}_2$ with $\mathbf{G}_L=[\mathbf{I}_J~\mathbf{0}]$. %
Based on \eqref{connection u}, $[\mathbf{u}_{L-j}]_{j\in \mathcal{ J}}$ can be written as
\begin{equation}
\label{eqn:uLj}
[\mathbf{u}_{L-j}]_{j\in \mathcal{ J}}=
[\mathbf{M}_\mathbf{o}(\mathbf{u}_{L-\varepsilon_1}^\mathrm{T})~\mathbf{M}_\mathbf{o}(\mathbf{u}_{L-\varepsilon_2}^\mathrm{T})~\ldots ~\mathbf{M}_\mathbf{o}(\mathbf{u}_{L-\varepsilon_{J/{m_L}}}^\mathrm{T})]\bm{\Gamma}^\mathrm{T},
\end{equation}
where each $\mathbf{M}_\mathbf{o}(\mathbf{u}_{L-\varepsilon_s}^\mathrm{T})$, $1 \leq s \leq J/{m_L}$, denotes a matrix in $\mathbb{F}_{q^{m_L}}^{J \times m_L}$. %
Since $\mathbf{u}_{L-\varepsilon_s}$ can also be written as
\begin{equation}
\mathbf{u}_{L-\varepsilon_s}=L_{\mathbb{F}_{q}}^{-1}[\mathcal{B}_{1,s}~\mathcal{B}_{2,s}~\ldots~\mathcal{B}_{J/{m_L},s}]^\mathrm{T},
\end{equation}
where each $\mathcal{B}_{i,s}$ is a basis of $\mathbb{F}_{q^{m_L}}$ over $\mathbb{F}_{q}$ defined in \eqref{Bis}, Eq. \eqref{eqn:uLj} can then be written in the form of the right-hand side of \eqref{psi matrix}. %
Consequently, the same argument as in the previous paragraph can be applied to conclude that $\mathcal{C}_2=\widetilde{\mathcal{M}}$.

\subsection{Equivalence verification for the MRD codes in Examples \ref{example 2}, \ref{example 3}, and \ref{example 5}}
\label{appendix:Equivalence verification}
Let $\mathbf{A}_i$ for $0 \leq i \leq 5$ denote the 6 codewords in \eqref{eqn:example GH 3} of Example \ref{example 2}, and let $\mathbf{B}_i$ for $0 \leq i \leq 5$ denote the 6 codewords in \eqref{eqn:example GG 5} of Example \ref{example 3}. %
We have
\begin{align*}
[\mathbf{B}_i]_{0\leq i \leq 5} = [\mathbf{A}_i]_{0 \leq i \leq 5} 
\begin{bmatrix}
  \mathbf{I}_3 & \mathbf{I}_3 & \mathbf{0} & \mathbf{0} & \mathbf{0} & \mathbf{0} \\
  \mathbf{0} & \mathbf{I}_3 & \mathbf{I}_3 & \mathbf{0} & \mathbf{0} & \mathbf{0} \\
  \mathbf{0} & \mathbf{0} & \mathbf{I}_3 & \mathbf{I}_3 & \mathbf{0} & \mathbf{0} \\
  \mathbf{0} & \mathbf{0} & \mathbf{0} & \mathbf{I}_3 & \mathbf{I}_3 & \mathbf{0} \\
  \mathbf{0} & \mathbf{0} & \mathbf{0} & \mathbf{0} & \mathbf{I}_3 & \mathbf{I}_3 \\
  \mathbf{0} & \mathbf{0} & \mathbf{0} & \mathbf{0} & \mathbf{0} & \mathbf{I}_3 
\end{bmatrix}.
\end{align*}
This implies that the $64$ codewords in the $(6\times 3, 2^6 ,3)$ circular-shift-based MRD code in Example \ref{example 2} coincide with the $64$ codewords in the $(6\times 3, 2^6 ,3)$ circular-shift-based MRD code in Example \ref{example 3}. %
Then, let $\mathbf{M}_{1,i}$, $0 \leq i \leq 6$ and $\mathbf{M}_{2,i}$, $0 \leq i \leq 6$ respectively denote the $7$ nonzero codewords given in \eqref{eqn:m tilde example 1} and in \eqref{eqn:m tilde example 2} of Example \ref{example 5}. %
We have
\begin{align*}
&[\mathbf{M}_{1,i}]_{0\leq i \leq 6} = [\mathbf{A}_i]_{0 \leq i \leq 5}
\begin{bmatrix}
  \mathbf{0}  & \mathbf{I}_3 & \mathbf{0} & \mathbf{I}_3 & \mathbf{I}_3 & \mathbf{I}_3& \mathbf{0} \\
  \mathbf{I}_3 & \mathbf{0} & \mathbf{I}_3 & \mathbf{I}_3 & \mathbf{I}_3 & \mathbf{0}& \mathbf{0} \\
  \mathbf{0} & \mathbf{I}_3 & \mathbf{I}_3 & \mathbf{I}_3 & \mathbf{0} & \mathbf{0} & \mathbf{I}_3\\
  \mathbf{I}_3 & \mathbf{I}_3 & \mathbf{I}_3 & \mathbf{0} & \mathbf{0} & \mathbf{I}_3 & \mathbf{0}\\
  \mathbf{I}_3 & \mathbf{I}_3 & \mathbf{0} & \mathbf{0} & \mathbf{I}_3 & \mathbf{0}& \mathbf{I}_3 \\
  \mathbf{I}_3 & \mathbf{0} & \mathbf{0} & \mathbf{I}_3 & \mathbf{0} & \mathbf{I}_3 & \mathbf{I}_3\\
\end{bmatrix},\\
&[\mathbf{M}_{2,i}]_{0\leq i \leq 6} = [\mathbf{A}_i]_{0 \leq i \leq 5}\begin{bmatrix}
  \mathbf{I}_3 & \mathbf{I}_3 & \mathbf{0} & \mathbf{0} & \mathbf{I}_3 & \mathbf{0} & \mathbf{I}_3\\
  \mathbf{0} & \mathbf{I}_3 & \mathbf{0} & \mathbf{I}_3 & \mathbf{I}_3 & \mathbf{I}_3& \mathbf{0} \\
  \mathbf{I}_3 & \mathbf{I}_3 & \mathbf{I}_3 & \mathbf{0} & \mathbf{0} & \mathbf{I}_3& \mathbf{0} \\
  \mathbf{0} & \mathbf{0} & \mathbf{I}_3 & \mathbf{0} & \mathbf{I}_3 & \mathbf{I}_3& \mathbf{I}_3\\
  \mathbf{0} & \mathbf{I}_3 & \mathbf{I}_3 & \mathbf{I}_3 & \mathbf{0} & \mathbf{0} & \mathbf{I}_3\\
  \mathbf{I}_3 & \mathbf{0} & \mathbf{0} & \mathbf{I}_3 & \mathbf{0} & \mathbf{I}_3& \mathbf{I}_3\\
\end{bmatrix}.
\end{align*}
This implies that the $64$ codewords in the $(6\times 3, 2^6 ,3)$ circular-shift-based MRD code in Example \ref{example 2} coincide with the $64$ codewords
in $\widetilde{\mathcal{M}}$ in Example \ref{example 5}. %

\subsection{List of Notation}
\label{appendix:list of notation}
\begin{flushleft}
\begin{tabular}{p{0.17\linewidth} p{0.73\linewidth}}
$\otimes$: & the Kronecker product.\\
$\Delta$: & the mapping from a $Jn$-dimensional row vector $\mathbf{c} = [\mathbf{c}_0 ~ \mathbf{c}_1 ~ \ldots ~ \mathbf{c}_{n-1}]$ over $\mathbb{F}_q$ to a $J\times n$ matrix $[\mathbf{c}_0^\mathrm{T}~ \mathbf{c}_1^\mathrm{T} ~ \ldots ~ \mathbf{c}_{n-1}^\mathrm{T}]$ over $\mathbb{F}_q$. \\ 
$\mathbf{A}^{\circ j}$: & the Hadamard power, that is, at every same location, the entry $a$ in $\mathbf{A}$ becomes $a^j$ in $\mathbf{A}^{\circ j}$. \\
$[\mathbf{A}_{e}]_{e \in \mathcal{E}}$: &the column-wise juxtaposition of matrices $\mathbf{A}_{e}$ with $e$ orderly chosen from a set $\mathcal{E}$.\\
$[\mathbf{A}_{i, j}]_{1 \leq i \leq M, 1 \leq j \leq N}$: & the $M \times N$ block matrix, in which every block $\mathbf{A}_{i, j}$ is the block entry with row and column indexed by $i$ and $j$, respectively. \\
$\beta$: & a primitive $L^{th}$ root of unity in $\mathbb{F}_{q^{m_L}}$.\\
$\mathcal{B}$: & an arbitrary basis of $\mathbb{F}_{q^N}$ over $\mathbb{F}_q$. \\
$\mathbf{C}_L$: &the $L\times L$ cyclic permutation matrix.\\
$\mathrm{Diag}([\mathbf{A}_i]_{0 \leq i \leq l-1}])$: & the $l \times l$ block diagonal matrix with diagonal entries equal to $\mathbf{A}_0, \ldots, \mathbf{A}_{l-1}$. \\
$\mathbb{F}_{q^N}^{N \times n}$:& the set of all $N \times n$ matrices over $\mathbb{F}_{q^N}$.\\
$\mathbb{F}_{q^N}^{n}$:& the set of all $n$-dimensional row vectors over $\mathbb{F}_{q^N}$.\\
$\mathbf{G}_L$:&the $J\times L$ matrix defined in \eqref{eqn:GH_def}.\\
$\mathbf{H}_L$:&the $L\times J$ matrix defined in \eqref{eqn:GH_def}.\\
$J$:& $|\mathcal{J}|$, equals to the number of positive integers less than $L$ that are coprime to $L$. \\
$L$: & a positive integer subject to $\gcd(q, L) = 1$.\\
$L_{\mathbb{F}_q}$: & $L \mod q$.\\
$m_L$: & the multiplicative order of $q$ modulo $L$, \emph{i.e.,} $q^{m_L}=1 \mod L$.\\
\end{tabular}
\end{flushleft}
\begin{flushleft}
\begin{tabular}{p{0.17\linewidth} p{0.73\linewidth}}
$\mathcal{J}$:& the set defined in \eqref{J}.\\
$\mathbf{M}_\mathbf{o}(\mathbf{v})$: &the $n \times N$ matrix over $\mathbb{F}_{q^N}$ defined as $[\mathbf{v}^\mathrm{T}~(\mathbf{v}^\mathrm{T})^{\circ q}~\ldots~(\mathbf{v}^\mathrm{T})^{\circ q^{N-1}}]$, where $\mathbf{v}$ is an $n$-dimensional row vector over $\mathbb{F}_{q^N}$.\\
$\mathbf{M}(\mathbf{u})$: &the $N \times n$ matrix $\mathbf{M}(\mathbf{u})=\mathbf{M}_\mathbf{o}(\mathcal{B})\mathbf{L}_\mathbf{u}$ over $\mathbb{F}_q$, where $\mathbf{M}_\mathbf{o}(\mathcal{B})$ is an $N \times N$ matrix over $\mathbb{F}_{q^N}$ defined in \eqref{eqn:MO_B} and $\mathbf{L}_\mathbf{u}$ is an $N \times n$ matrix over $\mathbb{F}_{q^N}$ defined in \eqref{eqn:L}.\\
$\widetilde{\mathcal{M}}$:&the $(J\times n, q^{Jk}, d)$ rank-metric code defined in \eqref{eqn:generalization of Gabidulin codes}. \\
$\mathbf{\Psi}_{k\times n}$:&the $k\times n$ block matrix defined in \eqref{eqn:Psi_def}.\\
$\tau(x)$:&the polynomial defined in \eqref{def: tau(x)}.\\
$\mathbf{u}$: & a $k$-dimensional row vector over $\mathbb{F}_{q^N}$. \\
$\mathbf{u}_j$: & the $(j+1)^{st}$ column of $\mathbf{U}$ defined in \eqref{U_def}.\\
$\mathbf{u}'_j$: & the $(j+1)^{st}$ column of $\mathbf{U}'$ defined in \eqref{U_def}.\\
$\mathbf{V}_L$: & the $L \times L$ Vandermonde matrix over $\mathbb{F}_{q^{m_L}}$ defined in \eqref{eqn:VL}.\\
$\widetilde{\mathbf{V}}_L$: &the $L \times L$ Vandermonde matrix over $\mathbb{F}_{q^{m_L}}$ defined in \eqref{eqn:VL tilde}.\\
\end{tabular}
\end{flushleft}


\begin{thebibliography}{99}
\bibitem{Delsarte1978}
P. Delsarte, ``Bilinear forms over a finite field, with applications to coding theory,'' \emph{Journal of Combinatorial Theory, Series A}, vol. 25, no. 3, pp. 226--241, 1978.

\bibitem{cryptosystems}
E. M. Gabidulin, ``Public-key cryptosystems based on linear codes,'' \emph{Codes and Cyphers}, vol. 16, no. 3, pp. 17--31, 1995.

\bibitem{Roth}
R. M. Roth, ``Maximum-rank array codes and their application to crisscross error correction,'' \emph{IEEE Transactions on Information Theory}, vol. 37, no. 2, pp. 328--336, 1991.

\bibitem{Silberstein}
N. Silberstein, A. S. Rawat and S. Vishwanath, ``Error resilience in distributed storage via rank-metric codes,'' in \emph{Proc. 50th Annual Allerton Conference on Communication, Control, and Computing (Allerton)}, pp. 1150--1157, 2012.


\bibitem{Koetter_RNC}
R. Koetter and F. R. Kschischang, ``Coding for errors and erasures in random network coding,'' \emph{IEEE Transactions on Information Theory},  vol. 54, no. 8, pp. 3579--3591, 2008.

\bibitem{Subspace Codes_Etzion}
T. Etzion and N. Silberstein, ``Codes and designs related to lifted MRD codes,'' \emph{IEEE Transactions on Information Theory}, vol. 59, no. 2, pp. 1004--1017, 2013. 

\bibitem{Subspace Codes_Chen_1}
L. Xu and H. Chen, ``New constant-dimension subspace codes from maximum rank distance codes,'' \emph{IEEE Transactions on Information Theory}, vol. 64, no. 9, pp. 6315--6319, 2018.

\bibitem{Subspace Codes_Chen_2}
H. Chen, X. He, J. Weng and L. Xu, ``New constructions of subspace codes using subsets of MRD codes in several blocks,'' \emph{IEEE Transactions on Information Theory}, vol. 66, no. 9, pp. 5317--5321, 2020.


\bibitem{Gabidulin}
E. M. Gabidulin, ``Theory of codes with maximum rank distance,'' \emph{Problemy Peredachi Informatsii}, vol. 21, no. 1, pp. 3--16, 1985.

\bibitem{GG}
A. Kshevetskiy and E. M. Gabidulin, ``The new construction of rank codes,'' in \emph{Proc. IEEE International Symposium on Information Theory (ISIT)}, pp. 2105--2108, 2005.

\bibitem{TG}
J. Sheekey, ``A new family of linear maximum rank distance codes,'' \emph{Advances in Mathematics of Communications}, vol. 10, no. 3, pp. 475--488, 2016.

\bibitem{GTG}
G. Lunardon, R. Trombetti and Y. Zhou, ``Generalized twisted Gabidulin codes,'' \emph{Journal of Combinatorial Theory, Series A}, 
vol. 159, pp. 79--106, 2018.

\bibitem{Otal K 1}
K. Otal and F. Özbudak, ``Additive rank metric codes,'' \emph{IEEE Transactions on Information Theory}, vol. 63, no. 1, pp. 164--168, 2016.

\bibitem{Sheekey_2019}
J. Sheekey, ``MRD codes: Constructions and connections'', \emph{Combinatorics and Finite Fields: Difference Sets, Polynomials, Pseudorandomness and Applications}, vol. 23, Walter de Gruyterpp, 2019.

\bibitem{Yildiz}
H. Yildiz and B. Hassibi, ``Gabidulin codes with support constrained generator matrices'', \emph{IEEE Transactions on Information Theory}, vol. 66, no. 6, pp. 3638--3649, 2019.

\bibitem{Li_ITW}
R. Li  and  F. -W. Fu , ``On the new rank metric codes related to Gabidulin codes'', in \emph{Proc. IEEE Information Theory Workshop (ITW)}, pp. 597--602, 2024.

\bibitem{non-Gabidulin}
A-L. Horlemann-Trautmann and K. Marshall, ``New criteria for MRD and Gabidulin codes and some rank-metric code constructions,'' \emph{Advances in Mathematics of Communications}, vol. 11, no. 3, pp. 533--548, 2017.

\bibitem{begin}
K. Otal and F. Özbudak, ``Some new non-additive maximum rank distance codes,'' \emph{Finite Fields and Their Applications}, vol. 50, pp. 293--303, 2018.

\bibitem{}
A. Cossidente, G. Marino and F. Pavese, ``Non-linear maximum rank distance codes,'' \emph{Designs, Codes and Cryptography}, vol. 79, no. 3, pp. 597--609, 2016.

\bibitem{}
B. Csajbók,  G. Marino, O. Polverino and C. Zanella, ``A new family of MRD-codes,'' \emph{Linear Algebra and its Applications}, vol. 548, pp. 203--220, 2018.


\bibitem{}
A. Ravagnani, ``Rank-metric codes and their duality theory,'' \emph{Designs, codes and Cryptography}, vol. 80, no. 1, pp. 197--216, 2016.

\bibitem{end}
E. Gorla and A. Ravagnani, ``Codes endowed with the rank metric,'' \emph{Network Coding and Subspace Designs}, pp. 3--23, 2018.


\bibitem{Tang_LNC_TIT}
H. Tang, Q. T. Sun, Z. Li, X. Yang and K. Long, ``Circular-shift linear network coding,'' \emph{IEEE Transactions on Information Theory}, vol. 65, no. 1, pp. 65--80, 2019.

\bibitem{Tang_Sun_Circular-shift_LNC_TCOM}
Q. T. Sun, H. Tang, Z. Li, X. Yang, and K. Long, ``Circular-shift linear network codes with arbitrary odd block lengths,'' \emph{IEEE
Transactions on Communications}, vol. 67, no. 4, pp. 2660--2672, 2019.

\bibitem{Jin_TIT}
S. Jin, Z. Zhai, Q. T. Sun, H. Zhang and Z. Li, ``Circular-shift-based vector linear network coding and its application to array codes,'' \emph{IEEE Transactions on Information Theory}, vol. 71, no. 12, pp. 9413--9431, 2025.


\bibitem{Rank-metric codes and applications}
E. M. Gabidulin, \emph{Rank-metric Codes and Applications}. Moscow Institute of Physics and Technology, State University, Dolgoprudny, Russia, 2011.

\bibitem{rank codes}
E. M. Gabidulin and V. Sidorenko, \emph{Rank Codes}. TUM Press, 2021.

\bibitem{Jin-arXiv}
S. Jin, Z. Zhai, Q. T. Sun and Z. Li, ``Circular-shift-based vector linear network coding and its application to array codes,'' arXiv:2412.17067, 2024.

\bibitem{tang2020circular}
H.~Tang, Q.~T. Sun, L.~Wang and T.~Yang, ``On circular-shift linear
solvability of multicast networks,'' \emph{IEEE Communications Letters},
vol.~24, no.~5, pp. 1000--1004, 2020.

\bibitem{Lidl:Finite_Field}
R. Lidl and H. Niederreiter, \emph{Finite Fields}. Cambridge University Press, 1997.

\bibitem{Su_20_TCOM}
R. Su, Q. T. Sun and Z. Zhang, ``Delay-complexity trade-off of random linear network coding in wireless broadcast,'' \emph{IEEE Transactions on Communications}, vol. 68, no. 9, pp. 5606--5618, 2020.

\bibitem{1988_TIT}
R. M. Tanner, ``A transform theory for a class of group-invariant codes,'' \emph{IEEE Transactions on Information Theory}, vol. 34, no. 4, pp. 725--775, 1988.

\bibitem{quasi-cyclic}
K. Lally and P. Fitzpatrick, ``Algebraic structure of quasicyclic codes,'' \emph{Discrete Applied Mathematics}, vol. 111, no. 1-2, pp. 157--175, 2001.


\bibitem{1-generator quasi-cyclic}
G. E. Seguin, ``A class of 1-generator quasi-cyclic codes,'' \emph{IEEE Transactions on Information Theory}, vol. 50, no. 8, pp. 1745--1753, 2004. 

\bibitem{Huang quasi-cyclic}
Q. Huang, Q. Diao, S. Lin and K. Abdel-Ghaffar, ``Cyclic and quasicyclic LDPC codes on constrained parity-check matrices and their
trapping sets,'' \emph{IEEE Transactions on Information Theory}, vol. 58, no. 5, pp. 2648--2671, 2012.

\bibitem{Blaum-Evenodd-ToC95}
M.~Blaum, J.~Brady, J.~Bruck and J.~Menon, ``EVENODD: an efficient scheme for
  tolerating double disk failures in raid architectures,'' \emph{IEEE
  Transactions on Computers}, vol.~44, no.~2, pp. 192--202, 1995.

\bibitem{RDP_2004}
P.~Corbett, B.~English, A.~Goel, T.~Grcanac, S.~Kleiman, J.~Leong and S.~Sankar, ``Row-diagonal parity for double disk failure correction,'' in \emph{Proc. The 3rd USENIX Conference on File and Storage
  Technologies}, pp. 1--14, 2004. 

\bibitem{Lv_2023_TIT}
J. Lv, W. Fang, X. Chen, J. Yang and S. -T. Xia, ``New constructions of q-ary MDS array codes with multiple parities and their effective decoding,'' \emph{IEEE Transactions on Information Theory}, vol. 69, no. 11, pp. 7082--7098, 2023.

\bibitem{Yu_2024_TCom}
L. Yu, Y. S. Han, J. Yuan and Z. Zhang, ``Variant codes based on a special polynomial ring and their fast computations,'' \emph{IEEE Transactions on Communications}, vol. 72, no. 9, pp. 5255--5267, 2024.

\bibitem{Zhai_TCom}
Z. Zhai, S. Jin, Q. T. Sun, S. Liu, X. Chen and Z. Li, ``New construction of MDS array codes and explicit characterization of decoding matrices,'' \emph{IEEE Transactions on Communications}, vol. 73, no. 8, pp. 5592--5606, 2025.


\bibitem{Hou_Basic}
H. Hou, K. W. Shum, M. Chen and H. Li, ``BASIC codes: low-complexity regenerating codes for distributed storage systems,'' \emph{IEEE Transactions on Information Theory}, vol. 62, no. 6, pp. 3053--3069, 2016.



\bibitem{matrix-rep}
W. P. Wardlaw, ``Matrix representation of finite fields,'' \emph{Mathematics Magazine}, vol. 67, no. 4, pp. 289--293, 1994.

\bibitem{Rank-metric codes and their applications}
H. Bartz, L. Holzbaur, H. Liu, S. Puchinger, J. Renner and A. Wachter-Zeh, ``Rank-metric codes and their applications,'' \emph{Foundations and Trends in Communications and Information Theory}, vol. 19, no. 3, pp. 390--546, 2022.
\end{thebibliography}
\end{document}